\newcommand{\pdif}[2]{\frac{\partial #1}{\partial #2}}
\newcommand{\pddif}[2]{\frac{\partial^2 #1}{\partial #2^2}}
\DeclareMathOperator{\cnot}{\mathrm{CNOT}}
\DeclareMathOperator{\crz}{\mathrm{CRZ}}
\DeclareMathOperator{\rx}{\mathrm{RX}}
\theoremstyle{plain}
\newtheorem{thm}{Theorem}
\newtheorem*{thm*}{Theorem}
\newtheorem{lem}{Lemma}
\newtheorem*{lem*}{Lemma}
\newtheorem*{cor*}{Corollary}
\theoremstyle{remark}
\newtheorem{rem}{Remark}
\newtheorem*{rem*}{Remark}
\begin{document}

\title[Hamiltonian simulation for hyperbolic partial differential equation]{Hamiltonian simulation for hyperbolic partial differential equations by scalable quantum circuits}

\author{Yuki Sato}
\email[]{yuki-sato@mosk.tytlabs.co.jp}
\affiliation{Toyota Central R\&D Labs., Inc., 1-4-14, Koraku, Bunkyo-ku, Tokyo, 112-0004, Japan}
\affiliation{Quantum Computing Center, Keio University, 3-14-1 Hiyoshi, Kohoku-ku, Yokohama, Kanagawa, 223-8522, Japan}

\author{Ruho Kondo}
\affiliation{Toyota Central R\&D Labs., Inc., 1-4-14, Koraku, Bunkyo-ku, Tokyo, 112-0004, Japan}
\affiliation{Quantum Computing Center, Keio University, 3-14-1 Hiyoshi, Kohoku-ku, Yokohama, Kanagawa, 223-8522, Japan}

\author{Ikko Hamamura}
\thanks{Present Address: NVIDIA G.K., Tokyo 107-0052, Japan.}
\affiliation{IBM Quantum, IBM Research -- Tokyo 19-21 Nihonbashi Hakozaki-cho, Chuo-ku, Tokyo, 103-8510, Japan}

\author{Tamiya Onodera}
\affiliation{IBM Quantum, IBM Research -- Tokyo 19-21 Nihonbashi Hakozaki-cho, Chuo-ku, Tokyo, 103-8510, Japan}
\affiliation{Quantum Computing Center, Keio University, 3-14-1 Hiyoshi, Kohoku-ku, Yokohama, Kanagawa, 223-8522, Japan}

\author{Naoki Yamamoto}
\affiliation{Department of Applied Physics and Physico-Informatics, Keio University, Hiyoshi 3-14-1, Kohoku-ku, Yokohama, Kanagawa, 223-8522, Japan}
\affiliation{Quantum Computing Center, Keio University, 3-14-1 Hiyoshi, Kohoku-ku, Yokohama, Kanagawa, 223-8522, Japan}

\begin{abstract}
Solving partial differential equations for extremely large-scale systems within a feasible computation time serves in accelerating engineering developments. 
Quantum computing algorithms, particularly the Hamiltonian simulations, present a potential and promising approach to achieve this purpose. 
Actually, there are several oracle-based Hamiltonian simulations with potential quantum speedup, but their detailed implementations and accordingly the detailed computational complexities are all unclear. 
This paper presents a method that enables us to explicitly implement the quantum circuit for Hamiltonian simulation; the key technique is the explicit gate construction of differential operators contained in the target partial differential equation discretized by the finite difference method.
Moreover, we show that the space and time complexities of the constructed circuit are exponentially smaller than those of conventional classical algorithms. 
We also provide numerical experiments and an experiment on a real device for the wave equation to demonstrate the validity of our proposed method. 
\end{abstract}

\maketitle

\section{Introduction}

Partial differential equations (PDEs) serve as essential tools for investigating the dynamic behavior of various physical phenomena, including heat conduction, fluid dynamics, and electromagnetic waves~\cite{renardy2004introduction}. 
Solving PDEs for extremely large systems within a reasonable computation time is crucial for accelerating engineering developments in industries. 
Despite remarkable progress in addressing extensive physical systems through the use of supercomputers~\cite{plimpton2019direct, zuo2019simulation}, obtaining solutions within a feasible computation time is still 
intractable.

A potentially promising strategy to substantially reduce the computational expenses for solving PDEs involves the utilization of quantum computing. 
Quantum computing has attracted considerable interest in recent decades as a prospective avenue for achieving dramatically fast computation compared to classical computing.
Although quantum computers currently suffer from limited hardware scalability and less noise resistance, there has been remarkable progress in hardware performance.
One of the promising applications of quantum computers is a solver of PDEs.

For steady-state problems, PDEs reduce to a system of linear or non-linear equations and can be solved by a linear system solver.
There are mainly two types of quantum algorithms for solving systems of linear equations: one is variational quantum algorithms~\cite{cerezo2021variational, tilly2022variational, bravo2023variational} and the other is the Harrow-Hassidim-Lloyd (HHL) algorithm~\cite{harrow2009quantum}.
Variational quantum algorithms are aimed at the use on near-term quantum devices and have been extensively studied for applications to PDEs~\cite{lubasch2020variational, sato2021variational, sato2023variational, ali2023performance}.
On the other hand, HHL-based algorithms focus on the fault-tolerant quantum computers and provide theoretical quantum speedup over classical algorithms under certain conditions~\cite{harrow2009quantum, childs2017quantum, cao2013quantum}.
Although HHL-based algorithms require several oracles to be implemented for state preparation, matrix inversion, and extracting solutions~\cite{aaronson2015read}, there are various studies that can be applied for implementing each part~\cite{babbush2018encoding, mcardle2022quantum, bagherimehrab2023fast}.

For time evolution problems governed by PDEs, there are mainly two types of quantum algorithms as well, i.e., the near-term and long-term algorithms.
As for near-term algorithms, variational quantum simulation~\cite{endo2020variational, wada2022simulating} has been applied to solve PDEs~\cite{demirdjian2022variational, leong2023variational} while Hamiltonian simulation~\cite{nielsen2010quantum} is a counterpart for long-term ones.

Hamiltonian simulation involves implementing a quantum circuit for the time evolution of a quantum system, $\exp(-i \mathcal{H} \tau)$, with the time increment $\tau$ and the Hamiltonian of the target quantum system $\mathcal{H}$; it is also referred to as quantum simulation.
The targeted Hamiltonians are, for example, Ising Hamiltonians~\cite{kim2023evidence, mc2023classically} and molecular Hamiltonians~\cite{loaiza2023reducing}.
Remarkably, implementation of quantum simulation based on the Ising Hamiltonian on a 127-qubit quantum computer has been recently reported, which is a significant contribution demonstrating the utility of quantum computers \cite{kim2023evidence}.
On the other hand, there are also reports of applying Hamiltonian simulation by reducing the governing equations of classical systems to the Schr\"{o}dinger equation~\cite{costa2019quantum, babbush2023exponential, jin2023aquantum, jin2023bquantum, an2023linear, meng2023quantum, miyamoto2024quantum}.
Costa et al.~\cite{costa2019quantum} proposed quantum simulation of the wave equation where the Hamiltonian is given as the incidence matrix of a graph that represents the discretized target space.
Babbush et al.~\cite{babbush2023exponential} generalized this approach and proposed a quantum algorithm for simulating classical coupled oscillators with the rigorous proof of the exponential speedup over any classical algorithms.
Jin et al.~\cite{jin2023aquantum, jin2023bquantum} proposed an interesting approach called \textit{Schr\"{o}dingerization} where the general ordinary differential equation is transformed to the Schr\"{o}dinger equation by the warped phase transformation.
An et al.~\cite{an2023linear} proposed a technique of linear combination of Hamiltonian simulation for simulating general nonunitary dynamics.
Although these results suggest that quantum algorithms may exhibit exponential speedup even for classical system simulations,
these methods rely on an oracle access to the Hamiltonian, which makes their implementation by elementary quantum gates unclear.

This paper proposes a Hamiltonian simulation algorithm for solving a special type of PDEs, i.e., linear hyperbolic PDEs without any source terms, which can be transformed into the Schr\"{o}dinger equation.
Specifically, we derive an explicit quantum circuit representation of time evolution operators for Hamiltonian simulation driven by differential operators. 
Then we apply the proposed method to the advection and wave equations, which are transformed into the Schr\"{o}dinger equation and are discretized by the finite difference method (FDM).
The key technique of implementing the time evolution operators given by differential operators is to diagonalize each term of Hamiltonian using the Bell basis.
This is similar to the idea of the extended Bell measurement~\cite{kondo2022computationally} that efficiently estimates the expectation of band matrices derived from the discretization of PDEs.

The contributions of this paper are summarized as follows. 
\begin{itemize}
    \item We provide an algorithm to build quantum circuits of time evolution operators for Hamiltonian simulation by differential operators.
    This result will contribute to implementing the FDM on a quantum computer.
    
    \item We derive the space and time complexities (Theorems~\ref{thm:complexity} and \ref{thm:complexity_second}); 
    our implementation requires $dn$-qubits and quantum circuits with $O(d n^3 T^2 / \varepsilon)$ or $O(d n^{2.5} T^{1.5} / \varepsilon^{0.5})$ non-local gates, to perform Hamiltonian simulation of hyperbolic PDEs defined on the $d$-dimensional lattice with $2^n$ nodes in each dimension up to time $T$ within the additive error $\varepsilon$. 
    That is, the algorithm enjoys an exponential reduction of resources with respect to the spatial degree of freedom. 
    
    \item We transform the advection and wave equations into the Schr\"{o}dinger equation in the real space for solving them by Hamiltonian simulation, based on the proof of self-adjointness of the Hamiltonian derived from these equations. 
    This analysis is followed by thorough numerical simulations and an experiment on a real quantum device.
\end{itemize}

The rest of this paper is organized as follows.
In Sec.~\ref{sec:preliminary}, we briefly introduce the finite difference operators along with their representation in a qubits system. 
We also discuss the transformation of the advection and wave equations into the Schr\"{o}dinger equation to make our method applicable to these equations.
In Sec.~\ref{sec:method}, we then provide the quantum circuit implementation of the finite difference operators, together with the theoretical error bound between the constructed circuit and the target Hamiltonian evolution. 
In Sec.~\ref{sec:result}, we provide several numerical experiments and an experimental result of a real device.
Finally, we conclude this study in Sec.~\ref{sec:conclusion}.

\section{Preliminaries} \label{sec:preliminary}
\subsection{Finite difference operators and its representation in qubits} \label{sec:diff_mpo}

Let us consider a one-dimensional domain $\Omega:=(0, L)$ where $L$ is the length of the domain.
We discretize the domain $\Omega$ by uniformly distributed $N$ points with the interval $l := L / (N + 1)$, where $N$ is a power of two, i.e., $N = 2^{n}$ for $n \in \mathbb{N}$.
Let us then consider a scalar field $u$ defined on the domain $\Omega$ and discretize the field $u$ using its value at each point (i.e., at each node in the 1-dimensional lattice) as $\bm{u}:=[u_0, u_1, \dots, u_{N-1}]$.
The forward difference operator $D^{+}$ for the spatial derivative acts on $\bm{u}$ as
\begin{align} \label{eq:forward diff operator}
    \left( D^{+} \bm{u} \right)_j = \frac{u_{j+1} - u_j}{l} ~~ \text{ for } j = 0, 1, \dots, N-1,
\end{align}
where $u_{N}$ is determined from the boundary condition (BC). 
For instance,
\begin{align} \label{eq:forward_BC}
    u_{N}:=\begin{cases}
        0 & \text{ for Dirichlet BC}, \\
        u_{N-1} & \text{ for Neumann BC}, \\
        u_0 & \text{ for periodic BC}.
    \end{cases}
\end{align}
Note that the prescribed value for the Neumann boundary condition is set to ensure that the spatial derivative $\left( D^{+} \bm{u} \right)_{N-1}$ on the boundary node is zero.
Similarly, the backward difference operator $D^{-}$ is defined as the operator that acts on $\bm{u}$ as
\begin{align} \label{eq:backward diff operator}
    \left( D^{-} \bm{u} \right)_j = \frac{u_j - u_{j-1}}{l} ~~ \text{ for } j = 0, 1, \dots, N-1,
\end{align}
where $u_{-1}$ is determined from the BC such as 
\begin{align} \label{eq:backward_BC}
    u_{-1}:=\begin{cases}
        0 & \text{ for Dirichlet BC}, \\
        u_0 & \text{ for Neumann BC}, \\
        u_{N-1} & \text{ for periodic BC}.
    \end{cases}
\end{align}
The central difference operator $D^{\pm}$ and the Laplacian operator $D^\Delta$ are the operators acting as
\begin{align}
    \left( D^{\pm} \bm{u} \right)_j &= \frac{u_{j+1} - u_{j-1}}{2l} \text{ for } j = 0, 1, \dots, N-1,  \label{eq:central diff operator} \\
    \left( D^{\Delta} \bm{u} \right)_j &= \frac{u_{j+1} - 2u_j + u_{j-1}}{l^2} \text{ for } j = 0, 1, \dots, N-1, \label{eq:laplacian operator}
\end{align}
where $u_{N}$ and $u_{-1}$ are defined by Eqs.~\eqref{eq:forward_BC} and \eqref{eq:backward_BC}, respectively.

In what follows we quantize the above-described difference operators. 
For this purpose, we need the quantum state corresponding to the discretized field $\bm{u}$, on which those quantized operators act; that is, let $\ket{u}$ be an $n$-qubit state on which $\bm{u}$ is encoded  as 
\begin{align} \label{eq:ket_u}
    \ket{u}:= \sum_{j=0}^{2^n-1} u_j \ket{j},
\end{align}
where $\ket{j}:=\ket{j_{n-1} j_{n-2} \dots j_0}$ with $j_{n-1}, j_{n-2}, \dots, j_0 \in \{0, 1\}$ is the computational basis.
Here, we assume that $\bm{u}$ is normalized, i.e., $\| \bm{u} \|_2 = 1$.
Now, it is known that, for the qubit-based system, the finite difference operators can be represented as matrix product operators (MPOs) using the following three $2 \times 2$ matrices \cite{kiffner2023tensor}:
\begin{equation}
	\sigma_{01} := \begin{bmatrix}
		0 & 1 \\
		0 & 0
	\end{bmatrix}, ~ \sigma_{10} := \begin{bmatrix}
		0 & 0 \\
		1 & 0
	\end{bmatrix}, ~ I := \begin{bmatrix}
		1 & 0 \\
		0 & 1
	\end{bmatrix},
\end{equation}
where $\sigma_{01}$ and $\sigma_{10}$ are the ladder operators.
The following two $2 \times 2$ matrices are also useful:
\begin{align}
    \sigma_{00} := \begin{bmatrix}
        1 & 0 \\
        0 & 0
    \end{bmatrix}, \quad \sigma_{11} := \begin{bmatrix}
        0 & 0 \\
        0 & 1
    \end{bmatrix}.
\end{align}
The point of quantization is the MPO representation of the shift operators $S^{-}=\sum_{j=1}^{2^n-1} \ket{j-1}\bra{j}$ and $S^{+}=\sum_{j=1}^{2^n-1} \ket{j}\bra{j-1}$ as follows: 
\begin{align}
    S^{-} &:= \sum_{j=1}^{n} I^{\otimes (n-j)} \otimes \sigma_{01} \otimes \sigma_{10}^{\otimes (j-1)} \nonumber \\
        &= \sum_{j=1}^{n} s^{-}_j, \label{eq:shift_m} \\
    S^{+} &:= \left( S^{-} \right)^\dagger \nonumber \\
    &= \sum_{j=1}^{n} I^{\otimes (n-j)} \otimes \sigma_{10} \otimes \sigma_{01}^{\otimes (j-1)} \nonumber \\
    &= \sum_{j=1}^{n} s^{+}_j, \label{eq:shift_p}
\end{align}
where
\begin{align}
    s^{-}_j &:= I^{\otimes (n-j)} \otimes \sigma_{01} \otimes \sigma_{10}^{\otimes (j-1)},  \\
    s^{+}_j &:= I^{\otimes (n-j)} \otimes \sigma_{10} \otimes \sigma_{01}^{\otimes (j-1)}.
\end{align}
Here, $\sigma_{ij}^{\otimes 0}$ is regarded as a scalar 1. 
For example, for the case $n=2$, we have $S^{-}=\ket{0}\bra{1}+\ket{1}\bra{2}+\ket{2}\bra{3}
=\sigma_{01}\otimes\sigma_{10}+I\otimes\sigma_{01}$. 
Using the shift operator, we obtain the forward difference operator with the Dirichlet BC as follows: 
\begin{align}
    D^{+}_\mathrm{D} := \frac{1}{l} \left( S^{-} - I^{\otimes n} \right).
\end{align}
Actually, the following relationship holds:
\begin{align}
    D^{+}_\mathrm{D} \ket{u} &= \frac{1}{l} \left( S^{-} - I^{\otimes n} \right) \sum_{j=0}^{2^n-1} u_j \ket{j} \nonumber \\
    &= \frac{1}{l} \left( \sum_{j=1}^{2^n-1} u_j \ket{j-1} - \sum_{j=0}^{2^n-1} u_j \ket{j} \right) \nonumber \\
    &= \sum_{j=0}^{2^n-1} \frac{u_{j+1} - u_j}{l} \ket{j},
\end{align}
where $u_{2^n} = 0$, which exactly corresponds to the forward difference operator with the Dirichlet BC. 
To impose the Neumann BC, it suffices to add $\sigma_{11}^{\otimes n} / l$ to $D^{+}_\mathrm{D}$ as 
\begin{align}
    D^{+}_\mathrm{N} := \frac{1}{l} \left( S^{-} - I^{\otimes n} + \sigma_{11}^{\otimes n} \right).
\end{align}
For the periodic BC, we have to add $\sigma_{10}^{\otimes n} / l$ to $D^{+}_\mathrm{D}$ as
\begin{align}
    D^{+}_\mathrm{P} := \frac{1}{l} \left( S^{-} - I^{\otimes n} + \sigma_{10}^{\otimes n} \right).
\end{align}
Similarly, the other finite difference operators can be represented as
\begin{align}
    &\begin{cases}
    D^{-}_\mathrm{D} &= \frac{1}{l} \left( I^{\otimes n} - S^{+} \right), \\
    D^{-}_\mathrm{N} &= \frac{1}{l} \left( I^{\otimes n} - S^{+} - \sigma_{00}^{\otimes n} \right), \\
    D^{-}_\mathrm{P} &=  \frac{1}{l} \left( I^{\otimes n} - S^{+} - \sigma_{01}^{\otimes n} \right), 
    \end{cases} \\
    &\begin{cases}
    D^{\pm}_\mathrm{D} &= \frac{1}{2l} \left( S^{-} - S^{+} \right), \\
    D^{\pm}_\mathrm{N} &= \frac{1}{2l} \left( S^{-} - S^{+} - \sigma_{00}^{\otimes n} + \sigma_{11}^{\otimes n} \right), \\
    D^{\pm}_\mathrm{P} &= \frac{1}{2l} \left( S^{-} - S^{+} - \sigma_{01}^{\otimes n} + \sigma_{10}^{\otimes n} \right), 
    \end{cases} \label{eq:quantized central diff operator} \\
    &\begin{cases}
    D^{\Delta}_\mathrm{D} &= \frac{1}{l^2} \left( S^{-} + S^{+} - 2I^{\otimes n} \right), \\
    D^{\Delta}_\mathrm{N} &= \frac{1}{l^2} \left( S^{-} + S^{+} - 2I^{\otimes n} + \sigma_{00}^{\otimes n} + \sigma_{11}^{\otimes n} \right), \\
    D^{\Delta}_\mathrm{P} &= \frac{1}{l^2} \left( S^{-} + S^{+} - 2I^{\otimes n} + \sigma_{01}^{\otimes n} + \sigma_{10}^{\otimes n} \right). 
    \end{cases} \label{eq:laplacian}
\end{align}
Since we have the shift operators $S^{-}$ and $S^{+}$, we can also construct the finite difference operators of the higher-order approximations as discussed in Appendix~\ref{sec:higher-order-FDM}.

Quantization of the difference operators for the $d$-dimensional domain $\Omega=(0, L)^d$ is straightforward, as follows.
We discretize each segment $(0, L)$ by uniformly distributed $N$ points with the interval $l := L / (N + 1)$, which results in a $d$-dimensional lattice with $N^d$ nodes.
We again assume that $N$ is a power of two, i.e., $N = 2^n$ for $n \in \mathbb{N}$.
Let $\ket{u(t)}$ be a $dn$-qubit state on which the discretized field $\bm{u}$ on the lattice is encoded as 
\begin{align} \label{eq:u_d_dim}
    \ket{u(t)} := \sum_{j_1=0}^{2^n-1} \dots \sum_{j_d=0}^{2^n-1} u(t, x_{j_1}, \dots, x_{j_d}) \ket{j_1} \otimes \dots \otimes \ket{j_d}.
\end{align}
where $\ket{j_\alpha}:=\ket{(j_\alpha)_{n-1} (j_\alpha)_{n-2} \dots (j_\alpha)_0}$ with $(j_\alpha)_{n-1}, (j_\alpha)_{n-2}, \dots, (j_\alpha)_0 \in \{0, 1\}$ is the computational basis and $x_{j_\alpha}$ is the spatial coordinate of the $j_\alpha$-th node along the $x_\alpha$-axis.
We again assume that $\bm{u}$ is normalized, i.e., $\| \bm{u} \|_2=1$.
The finite difference operator $D^\mu_\mathrm{B}$ defined above can easily be extended to those for $\ket{u} $ in the $d$-dimensional space, as follows:
\begin{align} \label{eq:differential_operator_ddim}
    (D^\mu_\mathrm{B})_\alpha = I^{\otimes (\alpha-1)n} \otimes D^\mu_\mathrm{B} \otimes I^{\otimes (d-\alpha)n},
\end{align}
where $\mu \in \{-, +, \pm, \Delta \}$ and $\mathrm{B} \in \{ \mathrm{D}, \mathrm{N}, \mathrm{P} \}$.

\subsection{Transforming partial differential equations to Schr\"{o}dinger equation} \label{sec:to_schroudinger}

In this study, we particularly focus on the advection equation and the wave equation as hyperbolic partial differential equations.
To apply Hamiltonian simulation for these equations, we first need to express them in the form of the Schr\"{o}dinger equation.

\subsubsection{Advection equation} \label{sec:advect_sch}

Let the scalar field $u$ be governed by the advection equation with the constant velocity field $\bm{v}$ as
\begin{align}
    \pdif{u(t, \bm{x})}{t} + \bm{v} \cdot \nabla u(t, \bm{x}) = 0,
\end{align}
where $t$ is the time, $\bm{x}$ is the spatial coordinate and $\nabla$ is the spatial differential operator; 
that is, $\nabla=\partial/\partial x$ for the one-dimensional case $d=1$. 
The scalar field $u$ represents, for example, temperature or concentration. 
The advection equation can be rewritten as
\begin{align} \label{eq:advect_sch}
    \pdif{u(t, \bm{x})}{t} = -i \left( -i \bm{v} \cdot \nabla \right) u(t, \bm{x}).
\end{align}
If the operator $-i\bm{v} \cdot \nabla$ is self-adjoint (meaning that 
$\left\langle \tilde{u}, \left(-i \bm{v} \cdot \nabla \right) u \right\rangle = \left\langle \left(-i \bm{v} \cdot \nabla \right)\tilde{u},  u \right\rangle$ holds for arbitrary scalar fields $u$ and $\tilde{u}$), Eq.~\eqref{eq:advect_sch} is exactly the Schr\"{o}dinger equation; actually, this property holds under an appropriate BC as shown in Appendix~\ref{sec:self-adjoint_advect}.
Thus, the advection equation falls into the Schr\"{o}dinger wave equation with the Hamiltonian that is expressed in terms of differential operators, $\nabla$.

The next step is to discretize the field variable $u$ and the Hamiltonian $-i \bm{v} \cdot \nabla$. 
That is, $u$ is discretized as $\bm{u}:=[u_0, u_1, \dots, u_{N-1}]$ and each element is encoded into the amplitude of the quantum state $\ket{u(t)}$ given in Eq.~\eqref{eq:u_d_dim}. 
Also, the Hamiltonian $-i \bm{v} \cdot \nabla$ is discretized using the central difference operator \eqref{eq:quantized central diff operator} as
\begin{align} \label{eq:hamiltonian_advect}
    \mathcal{H} = \begin{cases}
        - i v_1 (D^{\pm})_1 & \text{for } d=1, \\
        - i v_1 (D^{\pm})_1 - i v_2 (D^{\pm})_2 & \text{for } d=2, \\
        - i v_1 (D^{\pm})_1 - i v_2 (D^{\pm})_2 - i v_3 (D^{\pm})_3 & \text{for } d=3.
    \end{cases}
\end{align}
Since $(D^{\pm})^\dagger = - D^{\pm}$, the Hamiltonian with the central difference operator is actually a Hermitian matrix. 
As a result, we obtain the Schr\"{o}dinger equation $d \ket{u(t)}/d t = -i \mathcal{H} \ket{u(t)}$, and this is what we aim to simulate on a quantum device using the Hamiltonian simulation method.
Since the initial condition for the advection equation is given as specifying $u(0, \bm{x})$, the state preparation oracle for the Hamiltonian simulation has to prepare a quantum state of $\ket{u(0)}=\sum_{j_1=0}^{2^n-1} \dots \sum_{j_d=0}^{2^n-1} u(0, x_{j_1}, \dots, x_{j_d}) \ket{j_1} \otimes \dots \otimes \ket{j_d}$.

Note that the upwind differencing scheme is preferable for the numerical stability for the advection equation~\cite{allaire2007numerical}, but using $D^{+}$ or $D^{-}$ to discretize $\nabla$ does not retain the Hermitian property of $\mathcal{H}$~\cite{brearley2024quantum}. 
In fact, discretizing $\nabla$ using $D^{+}$ makes the Hamiltonian $\mathcal{H} = -i v_1 D^{+}$ for $d=1$, which is no longer Hermitian because $\mathcal{H}^\dagger = i v_1 (D^{+})^\dagger = -i v_1 D^{-} \neq \mathcal{H}$ with the relationship of $(D^{+})^\dagger = D^{-}$.
This is the reason why we here use the central difference scheme; we will investigate the relationship between finite difference schemes and the resulting Hermitian property to explore the possibility of using $D^{+}$ or $D^{-}$, in our future research.

\subsubsection{Wave equation} \label{sec:wave_sch}
Let $u$ be governed by the wave equation with the speed $c$ as
\begin{align}
    \pddif{u(t, \bm{x})}{t} = c^2 \nabla^2 u(t, \bm{x}).
\end{align}
The scalar field $u$ corresponds to, for example, the displacement of string and membrane, and the pressure.
The wave equation can be rewritten as
\begin{align} \label{eq:wave_sch}
    \pdif{\bm{\psi}(t, \bm{x})}{t} = -i \mathcal{H} \bm{\psi}(t, \bm{x}),
\end{align}
by setting, when $d=1$,
\begin{align}
    \bm{\psi}(t, x_1) = \begin{pmatrix}
        \pdif{u(t, x_1)}{t} \\
	i c\pdif{u(t, x_1)}{x_1}
    \end{pmatrix}, ~ \mathcal{H} = c\begin{pmatrix}
        0 & \pdif{}{x_1} \\
        -\pdif{}{x_1} & 0
    \end{pmatrix},
\end{align}
when $d=2$, 
\begin{align} \label{eq:psi_h_wave_2}
    &\bm{\psi}(t, \bm{x}) = \begin{pmatrix}
        \pdif{u(t, \bm{x})}{t} \\
        i c \left( \pdif{u(t, \bm{x})}{x_1} + i \pdif{u(t, \bm{x})}{x_2} \right)
    \end{pmatrix}, \nonumber \\
    &\mathcal{H} = c\begin{pmatrix}
	0 & \pdif{}{x_1} - i\pdif{}{x_2} \\
	-\pdif{}{x_1} - i\pdif{}{x_2} & 0
    \end{pmatrix},
\end{align}
and when $d=3$,
\begin{align}
    \bm{\psi}(t, \bm{x}) = \begin{pmatrix}
	\pdif{u(t, \bm{x})}{t} \\
	i c\pdif{u(t, \bm{x})}{x_1} \\
        i c\pdif{u(t, \bm{x})}{x_2} \\
        i c\pdif{u(t, \bm{x})}{x_3}
    \end{pmatrix}, ~ \mathcal{H} = c\begin{pmatrix}
    0 & \pdif{}{x_1} & \pdif{}{x_2} & \pdif{}{x_3} \\
    -\pdif{}{x_1} & 0 & 0 & 0 \\
    -\pdif{}{x_2} & 0 & 0 & 0 \\
    -\pdif{}{x_3} & 0 & 0 & 0 \\
    \end{pmatrix},
\end{align}
where $d$ is the number of spatial dimensions and $x_\alpha$ is the spatial coordinate.
This form of the Hamiltonian is similar to that in Ref.~\cite{costa2019quantum}.
Note that when $d=2$, we can also set $\bm{\psi}(t, \bm{x})$ and $\mathcal{H}$ by a $3 \times 1$ vector and a $3 \times 3$ matrix, respectively, which are obtained by omitting the row and column regarding $x_3$-axis of those when $d=3$.
However, Eq.~\eqref{eq:psi_h_wave_2} has the advantage in quantizing these quantities because of the size of power of two.

We can show that, for the above three cases, $\mathcal{H}$ is self-adjoint (meaning that $\langle \tilde{\bm{\psi}}, \mathcal{H} \bm{\psi} \rangle = \langle \mathcal{H} \tilde{\bm{\psi}}, \bm{\psi} \rangle$ 
holds for arbitrary vector fields $\bm{\psi}$ and $\tilde{\bm{\psi}}$) under appropriate BCs; the proof is given in Appendix~\ref{sec:self-adjoint_wave}. 
Therefore, Eq.~\eqref{eq:wave_sch} is exactly the Schr\"{o}dinger wave equation; in other words, the wave equation falls into the Schr\"{o}dinger wave equation with its Hamiltonian containing the differential operators.

The scalar field $u(t, \bm{x})$ is discretized and encoded in a quantum state $\ket{\psi(t)}$ on qubits as
\begin{widetext}
\begin{align}
    \ket{\psi(t)} := \begin{cases}
    \ket{0} \otimes \sum_{j_1=0}^{2^n-1} \pdif{u(t, x_{j_1})}{t} \ket{j_1} + \ket{1} \otimes ic \sum_{j_1=0}^{2^n-1} \pdif{u(t, x_{j_1})}{x_1} \ket{j_1}, & \text{ for } d=1, \\[2mm]
    \ket{0} \otimes \sum_{j_1=0}^{2^n-1} \sum_{j_2=0}^{2^n-1} \pdif{u(t, x_{j_1}, x_{j_2})}{t} \ket{j_1} \otimes \ket{j_2} & \\[2mm]
    \quad + \ket{1} \otimes ic \sum_{j_1=0}^{2^n-1} \sum_{j_2=0}^{2^n-1} \left( \pdif{u(t, x_{j_1}, x_{j_2})}{x_1} + i \pdif{u(t, x_{j_1}, x_{j_2})}{x_2} \right) \ket{j_1} \otimes \ket{j_2}, & \text{ for } d=2, \\[2mm]
    \ket{0} \otimes \ket{0} \otimes \sum_{j_1=0}^{2^n-1} \sum_{j_2=0}^{2^n-1} \sum_{j_3=0}^{2^n-1} \pdif{u(t, x_{j_1}, x_{j_2}, x_{j_3})}{t} \ket{j_1} \otimes \ket{j_2} \otimes \ket{j_3} & \\[2mm]
    \quad + \ket{0} \otimes \ket{1} \otimes ic \sum_{j_1=0}^{2^n-1} \sum_{j_2=0}^{2^n-1} \sum_{j_3=0}^{2^n-1} \pdif{u(t, x_{j_1}, x_{j_2}, x_{j_3})}{x_1} \ket{j_1} \otimes \ket{j_2} \otimes \ket{j_3} & \\[2mm]
    \quad + \ket{1} \otimes \ket{0} \otimes ic \sum_{j_1=0}^{2^n-1} \sum_{j_2=0}^{2^n-1} \sum_{j_3=0}^{2^n-1} \pdif{u(t, x_{j_1}, x_{j_2}, x_{j_3})}{x_2} \ket{j_1} \otimes \ket{j_2} \otimes \ket{j_3} & \\[2mm]
    \quad + \ket{1} \otimes \ket{1} \otimes ic \sum_{j_1=0}^{2^n-1} \sum_{j_2=0}^{2^n-1} \sum_{j_3=0}^{2^n-1} \pdif{u(t, x_{j_1}, x_{j_2}, x_{j_3})}{x_3} \ket{j_1} \otimes \ket{j_2} \otimes \ket{j_3},
& \text{ for } d=3. \\[2mm]
    \end{cases}
\end{align}
\end{widetext}
Note that the initial quantum state $\ket{\psi(0)}$ has to be implemented to satisfy the initial conditions of the wave equation, which are typically the initial conditions of $u(0, \bm{x})$ and $\partial u(0, \bm{x}) / \partial t$.
When we have an initial value of $u(0, x)$ as an analytic function, which is a typical case in solving PDEs, we can calculate $\partial u(0, x)/\partial x$ for the initial condition of our algorithm.
Thus, we assume that we can set the initial condition for the quantum state (i.e., $\partial u(0, x)/\partial x$ and $\partial u(0, x)/\partial t$) based on the typical initial conditions of the wave equation (i.e., $u(0, x)$ and $\partial u(0, x)/\partial t$).
Also, the Hamiltonian $\mathcal{H}$ is discretized by the forward and backward difference operators so that the resulting $\mathcal{H}$ can be actually a Hermitian matrix, as follows:
\begin{widetext}
\begin{align} \label{eq:hamiltonian_wave}
    \mathcal{H} = \begin{cases}
        c \left( \sigma_{01} \otimes (D^{+})_1 - \sigma_{10} \otimes (D^{-})_1 \right), & \text{for } d=1, \\[2mm]
        c \left( \sigma_{01} \otimes \left( (D^{+})_1 - i (D^{+})_2 \right) - \sigma_{10} \otimes \left( (D^{-})_1 + i (D^{-})_2 \right) \right), & \text{for } d=2, \\[2mm]
        c \left( \sigma_{00} \otimes \sigma_{01} \otimes (D^{+})_1 + \sigma_{01} \otimes \sigma_{00} \otimes (D^{+})_2 + \sigma_{01} \otimes \sigma_{01} \otimes (D^{+})_3 \right. & \\[2mm]
        \quad \left. - \sigma_{00} \otimes \sigma_{10} \otimes (D^{-})_1 - \sigma_{10} \otimes \sigma_{00} \otimes (D^{-})_2 - \sigma_{10} \otimes \sigma_{10} \otimes (D^{-})_3 \right), & \text{for } d=3.
    \end{cases}
\end{align}
\end{widetext}
As a result, we obtain the Schr\"{o}dinger equation $d \ket{\psi(t)}/d t = -i \mathcal{H} \ket{\psi(t)}$, which can be simulated on a quantum device. 
Note that the above discretization imposes the Dirichlet BC for $\partial u / \partial t$ at $x_\alpha=0$ and for $\partial u / \partial x_\alpha$ at $x_\alpha=L$. 
That is, the mixed BC for $u$ consisting of the Dirichlet BC for $x_\alpha=0$ and the Neumann BC for $x_\alpha=L$ is imposed, which is consistent with the condition of self-adjointness of the Hamiltonian $\mathcal{H}$ as discussed in Appendix~\ref{sec:self-adjoint_wave}.
We can also use the central difference scheme for periodic BC because it retains the Hermitian property.
We would like to conduct our future research to discuss the scheme for implementing arbitrary BCs while keeping the Hermitian property of the Hamiltonian.

\section{Circuit implementation method for PDEs} \label{sec:method}

\subsection{Quantum circuit for time evolution by differential operators}

Now, we consider a hyperbolic partial differential equation such that it can be reduced to the Schr\"{o}dinger equation $d \ket{u(t)}/dt = -i \mathcal{H} \ket{u(t)}$, where the Hamiltonian $\mathcal{H}$ consists of the difference operators, as exemplified in Eqs.~\eqref{eq:hamiltonian_advect} and \eqref{eq:hamiltonian_wave} in Section~\ref{sec:to_schroudinger}. 
Note that, in addition to those motivating examples, a wide class of partial differential equations also falls into our target based on the technique proposed in Ref.~\cite{jin2023aquantum, jin2023bquantum}. 
Our goal is to provide an efficient method for implementing the time evolution operator $\exp (-i \mathcal{H} \tau)$ for a time increment $\tau$, on a circuit of qubit-based quantum devices.

\begin{figure*}[t]
    \centering
    \includegraphics[width=0.9\textwidth]{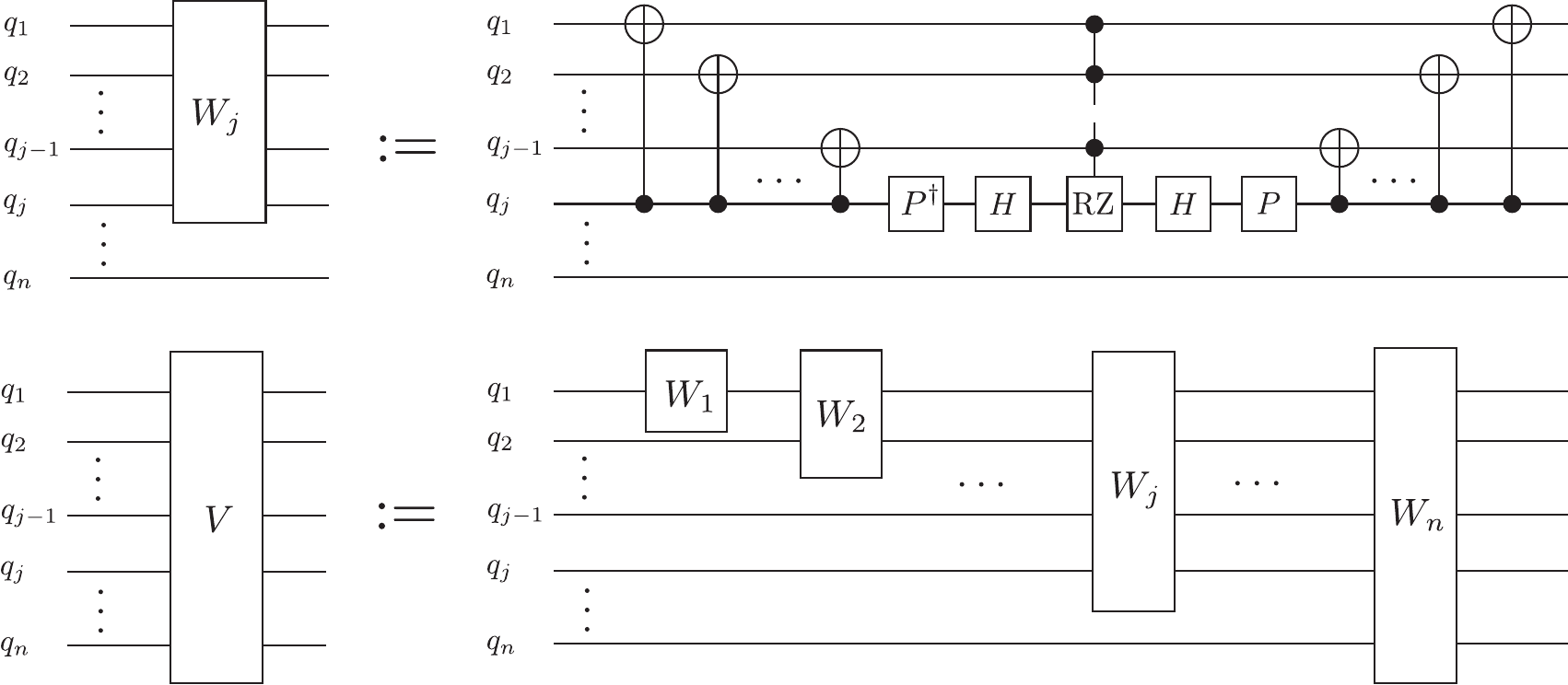}
    \caption{Quantum circuit to implement the time evolution operator $V$, where $q_j$ represents the $j$-th qubit.}
    \label{fig:circuit}
\end{figure*}

Let us consider the following Hamiltonian that contains the one-dimensional spatial difference operator: 
\begin{align}
    \mathcal{H} = \gamma \sum_{j=1}^n \left( e^{i\lambda} s^{-}_j + e^{-i\lambda} s^{+}_j  \right),
    \label{eq:hamiltonian_diffop}
\end{align}
where $\gamma \in \mathbb{R}$ is a scale parameter and $\lambda \in \mathbb{R}$ is a phase parameter.
This Hamiltonian can represent the essential part of the one-dimensional Hamiltonian for the advection equation given in Eq.~\eqref{eq:hamiltonian_advect}. 
Also, the following procedure for constructing quantum circuits and the scaling of the circuit complexity, based on the Hamiltonian \eqref{eq:hamiltonian_diffop}, is applicable to the Hamiltonian for the wave equation given in Eq.~\eqref{eq:hamiltonian_wave}. 
Since $\sigma_{01}=(X+iY)/2$ and $\sigma_{10}=(X-iY)/2$, where $X$ and $Y$ are Pauli matrices, we can naively represent each term of the Hamiltonian \eqref{eq:hamiltonian_diffop} by Pauli strings.
However, such representation of the Hamiltonian yields the exponentially large number of terms because $s_n^{-}$ and $s_n^{+}$ are global.
Here, we use the Bell basis instead of the Pauli matrices, which can efficiently diagonalize each term of the Hamiltonian~\eqref{eq:hamiltonian_diffop}, as follows:
\begin{widetext}
\begin{align} \label{eq:shift_general}
    e^{i\lambda} s^{-}_j + e^{-i \lambda} s^{+}_j &= e^{i\lambda} I^{\otimes (n-j)} \otimes \sigma_{01} \otimes \sigma_{10}^{\otimes (j-1)} + e^{-i \lambda} I^{\otimes (n-j)} \otimes \sigma_{10} \otimes \sigma_{01}^{\otimes (j-1)} \nonumber \\
    &= I^{\otimes (n-j)} \otimes \left( e^{i\lambda} \ket{0} \! \ket{1}^{\otimes (j-1)} \! \bra{1} \! \bra{0}^{\otimes (j-1)} + e^{-i\lambda} \ket{1} \! \ket{0}^{\otimes (j-1)} \! \bra{0} \! \bra{1}^{\otimes (j-1)} \right) \nonumber \\
    &= I^{\otimes (n-j)} \otimes \frac{\ket{0} \! \ket{1}^{\otimes (j-1)} + e^{-i \lambda} \ket{1} \! \ket{0}^{\otimes (j-1)}}{\sqrt{2}} \frac{\bra{0} \! \bra{1}^{\otimes (j-1)} + e^{i\lambda}  \bra{1} \! \bra{0}^{\otimes (j-1)}}{\sqrt{2}} \nonumber \\
    & \quad - I^{\otimes (n-j)} \otimes \frac{\ket{0} \! \ket{1}^{\otimes (j-1)} - e^{-i\lambda} \ket{1} \! \ket{0}^{\otimes (j-1)}}{\sqrt{2}} \frac{\bra{0} \! \bra{1}^{\otimes (j-1)} - e^{i\lambda} \bra{1} \! \bra{0}^{\otimes (j-1)}}{\sqrt{2}} \nonumber \\
    &= I^{ \otimes (n-j)} \otimes U_j(-\lambda) \left( Z \otimes \ket{1} \! \bra{1}^{\otimes (j-1)}\right) U_j(-\lambda)^\dagger,
\end{align}
\end{widetext}
where $Z$ is the single-qubit Z gate. 
Here we call $(\ket{0} \! \ket{1}^{\otimes (j-1)} \pm e^{-i \lambda} \ket{1} \! \ket{0}^{\otimes (j-1)}) / \sqrt{2}$ the Bell basis; the Bell basis is the key for decomposing the Hamiltonian into a sum of polynomial number of terms. 
Also, $U_j$ is the unitary matrix so that $U_j (-\lambda)\ket{0}\ket{1}^{\otimes (j-1)}=(\ket{0}\ket{1}^{\otimes (j-1)}+e^{-i\lambda}\ket{1}\ket{0}^{\otimes (j-1)})/\sqrt{2}$ defined as
\begin{align} \label{eq:bell_basis}
    U_j(\lambda) := \left( \prod_{m=1}^{j-1} \cnot^j_m \right) P_j(\lambda) H_j,
\end{align}
where $H_j$ is the Hadamard gate acting on the $j$-th qubit, $P_j(\lambda)$ is the Phase gate acting on the $j$-th qubit as
\begin{align}
    P_j(\lambda) := \begin{pmatrix}
        1 & 0 \\
        0 & e^{i\lambda}
    \end{pmatrix},
\end{align}
and $\cnot^j_m$ is the CNOT gate acting on the $m$-th qubit controlled by the $j$-th qubit.
Note that we herein use the little endian.
Applying the first-order Lie-Trotter-Suzuki decomposition, we can approximate the time evolution operator $\exp (-i \mathcal{H} \tau)$, as follows:
\begin{align} \label{eq:exp_iHt}
    &\exp \left(-i \mathcal{H} \tau \right) = \exp \left( -i \gamma \tau \sum_{j=1}^n (e^{i\lambda} s^{-}_j + e^{-i\lambda} s^{+}_j) \right) \nonumber \\
    &\approx \prod_{j=1}^{n} \exp \left( -i \gamma \tau I^{ \otimes (n-j)} \right. \nonumber \\
    & \qquad \qquad \left. \otimes U_j(-\lambda) \left( Z \otimes \ket{1} \! \bra{1}^{\otimes (j-1)}\right) U_j(-\lambda)^\dagger \right) \nonumber \\
    & = \prod_{j=1}^{n} I^{ \otimes (n-j)} \otimes U_j(-\lambda) \crz^{1, \dots, j-1}_j \left( 2\gamma \tau \right) U_j(-\lambda)^\dagger \nonumber \\
    &= \prod_{j=1}^{n} W_j\left( \gamma \tau, \lambda \right),
\end{align}
where $\crz^{1, \dots, j-1}_j(\theta):=\exp(-i\theta Z_j/2)\otimes\ket{1}\!\bra{1}^{\otimes(j-1)} + I \otimes (I^{\otimes (j-1)} - \ket{1}\!\bra{1}^{\otimes(j-1)})$ is the multi-controlled RZ gate acting on the $j$-th qubit controlled by $1, \dots, (j-1)$-th qubits. 
Note that from the second to the third line, $\exp(I\otimes A)=I\otimes \exp(A)$ and $\exp(UAU^\dagger)=U\exp(A)U^\dagger$ were used.
The unitary matrix $W_j$ is defined as
\begin{align}
    &W_j\left( \gamma \tau, \lambda \right) \nonumber \\
    &:= I^{ \otimes (n-j)} \otimes U_j(-\lambda) \crz^{1, \dots, j-1}_j \left( 2\gamma \tau \right) U_j(-\lambda)^\dagger.
\end{align}
Then, let $V\left( \gamma \tau, \lambda \right)$ denote the approximated time evolution operator as
\begin{align} \label{eq:trotter_general}
    V\left( \gamma \tau, \lambda \right) := \prod_{j=1}^{n} W_j\left( \gamma \tau, \lambda \right) \approx \exp \left(-i \mathcal{H} \tau \right).
\end{align}

We now have a concrete circuit implementation of this approximating unitary matrix $V$, as shown in Fig.~\ref{fig:circuit}. 
Note that such explicit form of implementation has not been reported in the previous proposals~\cite{jin2023aquantum, jin2023bquantum}.
Moreover, thanks to the explicit form of $V$, we can have a detailed evaluation on the approximation error between $\exp(-i \mathcal{H} \tau)$ and $V$ in the sense of operator norm. 
Together with the explicit circuit construction, the following lemma gives the approximation error.

\begin{lem} \label{lem:diff_circ}
    Consider the Schr\"{o}dinger equation 
    $d \ket{u(t)}/dt = -i \mathcal{H} \ket{u(t)}$ such that
    the Hamiltonian $\mathcal{H}$ is given by Eq.~\eqref{eq:hamiltonian_diffop}.
    The time evolution operator $\exp(-i \mathcal{H} \tau)$ with the time increment $\tau$ can be approximated by the unitary $V$ in Eq.~\eqref{eq:trotter_general}, and its explicit circuit implementation is shown in Fig.~\ref{fig:circuit}. 
    Moreover, the approximation error in the sense of the operator norm is upper bounded as 
    \begin{align}
        \left\Vert \exp(-i \mathcal{H} \tau) - V( \gamma \tau, \lambda) \right\Vert &\leq \frac{\gamma^2 \tau^2 (n-1)}{2}.
    \end{align}
\end{lem}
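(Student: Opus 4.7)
The plan is to apply the standard first-order Lie--Trotter--Suzuki error inequality to the splitting $\mathcal{H} = \gamma\sum_{j=1}^{n} H_j$ with $H_j := e^{i\lambda} s^{-}_j + e^{-i\lambda} s^{+}_j$. Because the Bell-basis diagonalization in Eq.~\eqref{eq:shift_general} and the manipulations leading to Eq.~\eqref{eq:exp_iHt} already identify $\exp(-i\gamma\tau H_j) = W_j(\gamma\tau,\lambda)$, and therefore establish the circuit representation of $V(\gamma\tau,\lambda)$ in Fig.~\ref{fig:circuit}, the only thing left to prove is the operator-norm estimate. That reduces, via the Trotter inequality
\begin{align*}
\left\Vert\exp(-i\tau\mathcal{H}) - V(\gamma\tau,\lambda)\right\Vert \leq \frac{\gamma^{2}\tau^{2}}{2} \sum_{1 \leq j < k \leq n} \left\Vert[H_j, H_k]\right\Vert,
\end{align*}
to showing $\sum_{j<k}\left\Vert[H_j, H_k]\right\Vert = n-1$.

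The key structural observation is that most of these commutators vanish. For $j<k$, both $s_j^{\pm}$ and $s_k^{\pm}$ place $\sigma_{10}$ or $\sigma_{01}$ on every qubit in $\{1,\ldots,j-1\}$, so each of the four products $s_j^{\pm} s_k^{\pm}$ and their reversals acquires a factor $\sigma_{01}^{2}$ or $\sigma_{10}^{2}$, both of which equal $0$. Whenever $j\geq 2$ this block is nonempty, so all such products vanish and $[H_j,H_k] = 0$. Only the $n-1$ pairs with $j=1$ can contribute.

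For $j=1$ the cancelling block is empty. A direct computation using $\sigma_{01}\sigma_{10} = \sigma_{00}$, $\sigma_{10}\sigma_{01} = \sigma_{11}$, and $\sigma_{00}-\sigma_{11} = Z$ shows that the mixed commutators $[s_1^{-}, s_k^{+}]$ and $[s_1^{+}, s_k^{-}]$ still vanish (now because of $\sigma_{01}^{2} = 0$ or $\sigma_{10}^{2} = 0$ on qubit $1$), while the other two combine into $[H_1, H_k] = e^{2i\lambda}A_k - e^{-2i\lambda}B_k$, where
\begin{align*}
A_k &:= I^{\otimes(n-k)} \otimes \sigma_{01} \otimes \sigma_{10}^{\otimes(k-2)} \otimes Z, \\
B_k &:= I^{\otimes(n-k)} \otimes \sigma_{10} \otimes \sigma_{01}^{\otimes(k-2)} \otimes Z
\end{align*}
are mutually adjoint. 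Since $A_k^{2} = B_k^{2} = 0$, the $\lambda$-dependent cross terms in $(e^{2i\lambda}A_k - e^{-2i\lambda}B_k)^{\dagger}(e^{2i\lambda}A_k - e^{-2i\lambda}B_k)$ cancel, and the surviving diagonal sum $A_k^{\dagger}A_k + B_k^{\dagger}B_k$ is a sum of two orthogonal projectors distinguished by the state of qubit $k$, whose operator norm is exactly one. Consequently $\left\Vert[H_1, H_k]\right\Vert = 1$ for every $k\in\{2,\ldots,n\}$, so $\sum_{j<k}\left\Vert[H_j,H_k]\right\Vert = n-1$ and substituting into the Trotter inequality yields the claimed bound.

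The most delicate step is this last norm estimate: one has to spot that the two $\lambda$-weighted terms are mutually annihilating in the sense $A_k^{\dagger}B_k = B_k^{\dagger}A_k = 0$, and that the remaining diagonal sum is a projector onto disjoint subspaces so that the resulting norm is one rather than two. Without this cancellation the error estimate would acquire an extra constant prefactor.
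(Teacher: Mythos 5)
Your proof is correct and takes essentially the same route as the paper's: the commutator form of the first-order Trotter bound, the observation that only the $n-1$ commutators $[H_1,H_k]$ survive, and the evaluation $\Vert[H_1,H_k]\Vert=1$ (the paper diagonalizes this commutator with a second Bell-basis unitary $U_{k-1}(-2\lambda-\pi/2)$, whereas you compute $\Vert C\Vert^2=\Vert C^\dagger C\Vert$ using the nilpotency of $A_k$ and $B_k$; both give exactly $1$). One minor imprecision worth fixing: for the mixed products $s_j^{-}s_k^{+}$ and $s_j^{+}s_k^{-}$ with $2\le j<k$, the annihilating factor $\sigma_{01}^{2}=0$ or $\sigma_{10}^{2}=0$ arises on qubit $j$ itself (where $s_j^{\pm}$ contributes $\sigma_{01}$ or $\sigma_{10}$ and $s_k^{\pm}$ contributes the same ladder operator), not in the block $\{1,\dots,j-1\}$ as you state, though the conclusion that all these commutators vanish is unaffected.
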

\begin{proof}
    We use the fact that the approximation error of the Lie-Trotter-Suzuki decomposition is upper bounded by the sum of operator norm of commutators of all terms contained in the Hamiltonian \cite[Proposition 9]{childs2021theory}. 
    Hence our task is to evaluate the commutators of all terms of the Hamiltonian \eqref{eq:hamiltonian_diffop}, as follows; the detailed calculation is given in Appendix~\ref{sec:proof_lemma}. 
    For $n>j>j'>1$, the terms of the shift operators $e^{i\lambda} s^{-}_j + e^{-i\lambda}s^{+}_j$ and $e^{i\lambda} s^{-}_{j'} + e^{-i\lambda} s^{+}_{j'}$ commute as
    \begin{align}
        \left[ e^{i\lambda} s^{-}_j + e^{-i\lambda} s^{+}_j, e^{i\lambda} s^{-}_{j'} + e^{-i\lambda} s^{+}_{j'} \right] &= 0.
    \end{align}
    For $j>j'=1$, we obtain
    \begin{align}
        \left\Vert \left[ e^{i\lambda} s^{-}_j + e^{-i\lambda} s^{+}_j, e^{i\lambda} s^{-}_1 + e^{-i\lambda} s^{+}_1 \right] \right\Vert &= 1.
    \end{align}
    Thus, the terms of the Hamiltonian, $\gamma(e^{i\lambda} s^{-}_j + e^{-i\lambda} s^{+}_j)$, can be grouped into those for $j>1$ and $j=1$.
    Since the unitary does not change the operator norm, the Trotter error is upper bounded using the result of Ref.~\cite[Proposition 9]{childs2021theory}, as follows: %
    \begin{align}
        &\left\Vert \exp(-i \mathcal{H} \tau) - V( \gamma \tau, \lambda ) \right\Vert \nonumber \\
        &\leq \frac{\gamma^2 \tau^2}{2} \left\Vert \left[ \sum_{j=2}^n \left( e^{i\lambda} s^{-}_j + e^{-i\lambda} s^{+}_j \right), e^{i\lambda} s^{-}_1 + e^{-i\lambda} s^{+}_1 \right] \right\Vert \nonumber \\
        &\leq \frac{\gamma^2 \tau^2}{2} \sum_{j=2}^n \left\Vert \left[  e^{i\lambda} s^{-}_j + e^{-i\lambda} s^{+}_j, e^{i\lambda} s^{-}_1 + e^{-i\lambda} s^{+}_1 \right] \right\Vert \nonumber \\
        &= \frac{\gamma^2 \tau^2 (n-1)}{2}
    \end{align}
\end{proof}

Since $W_{j>2}$ and $W_{j'>2}$ commute, we can easily obtain the second-order formula as
\begin{align} \label{eq:trotter_second}
    V^{(2)}(\gamma \tau, \lambda):= W_1 (\gamma \tau / 2, \lambda) V W_1 (-\gamma \tau / 2, \lambda).
\end{align}
The following lemma gives the approximation error
between $\exp(-i \mathcal{H} \tau)$ and $V^{(2)}$ in the sense of operator norm.

\begin{lem} \label{lem:diff_circ_second}
    Consider the Schr\"{o}dinger equation $d \ket{u(t)}/dt = -i \mathcal{H} \ket{u(t)}$ such that the Hamiltonian $\mathcal{H}$ is given by Eq.~\eqref{eq:hamiltonian_diffop}.
    The time evolution operator $\exp(-i \mathcal{H} \tau)$ with the time increment $\tau$ can be approximated by the unitary $V^{(2)}$ in Eq.~\eqref{eq:trotter_second}. 
    The approximation error in the sense of the operator norm is upper bounded as 
    \begin{align}
        \left\Vert \exp(-i \mathcal{H} \tau) - V^{(2)} ( \gamma \tau, \lambda) \right\Vert &\leq \frac{\gamma^3 \tau^3}{6} (2n-3).
    \end{align}
\end{lem}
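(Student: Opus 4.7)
The plan is to recognize $V^{(2)}$ as a symmetric Strang-type splitting, invoke the standard second-order Trotter--Suzuki error bound, and then estimate the resulting nested commutators using the Bell-basis structure already exploited in Lemma~\ref{lem:diff_circ}. Writing $a_j := e^{i\lambda}s_j^- + e^{-i\lambda}s_j^+$ and splitting $\mathcal{H} = H_A + H_B$ with $H_A := \gamma a_1$ and $H_B := \gamma\sum_{j=2}^n a_j$, the commutation identity $[a_j, a_{j'}] = 0$ for $j, j' \geq 2$ already established in the proof of Lemma~\ref{lem:diff_circ} gives $\prod_{j=2}^n W_j(\gamma\tau, \lambda) = e^{-iH_B\tau}$ exactly. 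Consequently $V(\gamma\tau, \lambda) = e^{-iH_B\tau}\,e^{-iH_A\tau}$, and collapsing the two adjacent $W_1$ factors in the definition of $V^{(2)}$ yields the symmetric form $V^{(2)}(\gamma\tau,\lambda) = e^{-iH_A\tau/2}\,e^{-iH_B\tau}\,e^{-iH_A\tau/2}$.

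I would then apply the standard second-order symmetric Trotter error bound (obtainable, e.g., from a BCH expansion of $e^{-iH_A\tau/2}e^{-iH_B\tau}e^{-iH_A\tau/2}$ around $e^{-i(H_A+H_B)\tau}$, or from Theorem~6 of Ref.~\cite{childs2021theory}):
\begin{equation*}
\left\Vert V^{(2)} - e^{-i\mathcal{H}\tau}\right\Vert \leq \frac{\tau^3}{24}\left\Vert[H_A,[H_A,H_B]]\right\Vert + \frac{\tau^3}{12}\left\Vert[H_B,[H_B,H_A]]\right\Vert.
\end{equation*}
Using the identity $[a_1, a_j] = I^{\otimes(n-j)}\otimes B_j\otimes Z$ with $\|B_j\|=1$ that emerges from the Bell-basis computation in the proof of Lemma~\ref{lem:diff_circ}, a short direct calculation gives $\|[a_1,[a_1,a_j]]\| = 2$ for each $j\geq 2$, so the triangle inequality yields $\|[H_A,[H_A,H_B]]\| \leq 2(n-1)\gamma^3$.

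The main obstacle is bounding $[H_B,[H_B,H_A]] = \gamma^3\sum_{j,j'\geq 2}[a_{j'},[a_j,a_1]]$, whose $(n-1)^2$ summands would naively produce an $O(n^2)$ estimate. The crucial observation is that $[a_{j'},[a_j,a_1]] = 0$ whenever $j\neq j'$ and $\min(j,j') \geq 3$; this follows from the nilpotency identities $\sigma_{01}^2 = \sigma_{10}^2 = 0$ together with $\sigma_{01}\sigma_{10} = \sigma_{00}$ and $\sigma_{10}\sigma_{01} = \sigma_{11}$, which force annihilation whenever the support of $a_{j'}$ overlaps that of $[a_j,a_1]$ on two or more qubits strictly above qubit~$0$. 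Only the diagonal pairs ($j=j'$, contributing $n-1$ nonzero summands) and boundary pairs ($j=2$ with $j'>2$, or $j'=2$ with $j>2$, contributing a further $2(n-2)$ summands) survive, each with unit operator norm, so $\|[H_B,[H_B,H_A]]\| \leq (3n-5)\gamma^3$. Substituting both bounds and collecting terms produces $\frac{\gamma^3\tau^3(2n-3)}{6}$ exactly.
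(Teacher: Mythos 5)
Your proposal is correct and follows essentially the same route as the paper's own proof in Appendix~D: the same splitting $\mathcal{H}_1=\gamma a_1$, $\mathcal{H}_2=\gamma\sum_{j\geq 2}a_j$, the same second-order commutator bound from Ref.~\cite{childs2021theory}, and the same estimates $\Vert[\mathcal{H}_1,[\mathcal{H}_1,\mathcal{H}_2]]\Vert\leq 2\gamma^3(n-1)$ and $\Vert[\mathcal{H}_2,[\mathcal{H}_2,\mathcal{H}_1]]\Vert\leq\gamma^3(3n-5)$, combined identically to give $\gamma^3\tau^3(2n-3)/6$. The only cosmetic difference is that you argue the vanishing of the off-diagonal nested commutators via nilpotency and support overlap, whereas the paper expands them explicitly using its product formulas for $s_{j,n}^{\pm}$ (Lemmas~\ref{lem:s_recursive}--\ref{lem:s_formula_2}); these are the same underlying facts.
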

\begin{proof}
    The detailed proof is given in Appendix~\ref{sec:proof_lemma_second}. 
    Here, we provide a scketch of the proof.
    Let us group the terms of the Hamiltonian $\mathcal{H}$ in Eq.~\eqref{eq:hamiltonian_diffop} into $\mathcal{H}_1 := \gamma e^{i\lambda} + s_1^{-} + e^{-i\lambda}s_1^{+}$ and $H_2 := \gamma \sum_{j=2}^n (e^{i\lambda} + s_j^{-} + e^{-i\lambda}s_j^{+})$ such that $\mathcal{H} = \mathcal{H}_1 + \mathcal{H}_2$.
    Since all terms in $\mathcal{H}_2$ commute each other, the error of the second-order Suzuki formula is upper bounded~\cite[Proposition 9]{childs2021theory} by
    \begin{align}
        &\left\Vert \exp(-i \mathcal{H} \tau) - V^{(2)}( \gamma \tau, \lambda ) \right\Vert \nonumber \\
        &\quad \leq \frac{\tau^3}{12} \left\Vert \left[ \mathcal{H}_2, \left[ \mathcal{H}_2, \mathcal{H}_1 \right] \right] \right\Vert + \frac{\tau^3}{24} \left\Vert \left[ \mathcal{H}_1, \left[ \mathcal{H}_1, \mathcal{H}_2 \right] \right] \right\Vert.
    \end{align}
    By evaluating the commutators, we obtain
    \begin{align}
        \left\Vert \exp(-i \mathcal{H} \tau) - V^{(2)}( \gamma \tau, \lambda ) \right\Vert &\leq \frac{\gamma^3 \tau^3}{6} (2n-3).
    \end{align}   
\end{proof}

We also provide the following lemma about the gate counts of the unitary $V$.
\begin{lem} \label{lem:gate_count}
    The approximated time evolution operators $V$ in Fig.~\ref{fig:circuit} and $V^{(2)}$ can be implemented using single-qubit gates and at most $9n^2-33n+34$ CNOT gates for $n \geq 3$.
\end{lem}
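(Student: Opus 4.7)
The proof is a direct count of CNOT gates in the layered circuit for $V = \prod_{j=1}^{n} W_j$ illustrated in Fig.~\ref{fig:circuit}, contributing block by block in each $W_j$.

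First I would tally the CNOTs in the basis-change blocks. From Eq.~\eqref{eq:bell_basis}, $U_j(-\lambda)$ consists of $H_j$, $P_j(-\lambda)$, and the fan $\prod_{m=1}^{j-1} \cnot^{j}_{m}$, contributing exactly $j-1$ CNOTs; its adjoint reverses this fan and contributes another $j-1$. Hence the two conjugating blocks in $W_j$ together use $2(j-1)$ CNOT gates, and no CNOTs from the Hadamard/Phase pieces.

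Next I would count the CNOTs needed for the central piece $\crz^{1,\ldots,j-1}_{j}(2\gamma\tau)$. Using a standard ancilla-free decomposition of a multi-controlled single-qubit rotation into CNOTs and single-qubit gates (as indicated in Fig.~\ref{fig:circuit}), this gate can be realized with a number $c(j)$ of CNOTs that is linear in $j$ for $j$ sufficiently large, with $c(1)=0$ (it reduces to an $R_z$ on qubit $1$) and small explicit values at $j=2,3$. Summing $\sum_{j=1}^{n} \left[ 2(j-1) + c(j) \right]$ and collecting terms gives a quadratic in $n$. After substituting the special values at small $j$ and the linear formula for larger $j$, the total is bounded above by $9n^2 - 33n + 34$; the restriction $n \geq 3$ arises precisely because the $+34$ constant is needed to absorb the constant excess generated by the small-$j$ terms.

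For the second-order formula, recall from Eq.~\eqref{eq:trotter_second} that $V^{(2)} = W_1(\gamma\tau/2,\lambda)\, V\, W_1(-\gamma\tau/2,\lambda)$. Since $U_1(-\lambda) = P_1(-\lambda)\, H_1$ involves no CNOTs and $\crz_1(\theta) = R_z(\theta)$ is purely single-qubit, $W_1$ contains no CNOT gates at all. Thus the outer factors add nothing to the CNOT count and $V^{(2)}$ uses exactly as many CNOTs as $V$, so the same bound applies.

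The main obstacle is pinning down the explicit decomposition of the multi-controlled rotation tightly enough that $c(j)$ matches the quadratic coefficient $9$ in the target bound; this requires committing to a specific ancilla-free construction and separately inspecting the $j=1,2,3$ layers. Once $c(j)$ is fixed, both the summation and the rearrangement into the form $9n^2 - 33n + 34$ are routine algebra.
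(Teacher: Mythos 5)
Your overall strategy is the same as the paper's: count $2(j-1)$ CNOT gates for the conjugating pair $U_j(-\lambda)$, $U_j(-\lambda)^\dagger$ in each layer $W_j$, add the CNOT cost of the central multi-controlled $Z$-rotation, sum over $j$, and observe that $V^{(2)}$ in Eq.~\eqref{eq:trotter_second} differs from $V$ only by the CNOT-free factors $W_1(\pm\gamma\tau/2,\lambda)$, so the bound carries over unchanged. That last observation is correct and matches the paper exactly.

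However, there is a genuine gap, and you have named it yourself: you never commit to a decomposition of $\crz^{1,\dots,j-1}_j$, so your cost function $c(j)$ is left as an unspecified ``linear in $j$'' quantity. The entire content of the bound $9n^2-33n+34$ lives in that choice: the fan contributions $\sum_{j=2}^n 2(j-1)=n^2-n$ only account for the coefficient $1$ of the leading term, so the remaining $8n^2$ must come from $\sum_j c(j)$, i.e.\ from the slope of $c(j)$ being exactly $16$. The paper closes this gap by invoking~\cite[Theorem 3]{vale2023decomposition}, which gives an ancilla-free decomposition of a $(j-1)$-controlled $R_z$ into single-qubit gates and at most $16j-40$ CNOT gates for $j\ge 3$, together with $c(2)=2$ for the singly-controlled case and $c(1)=0$; the identity
\begin{align*}
\sum_{j=3}^{n}(16j-40)+2+\sum_{j=2}^{n}2(j-1)=8(n-2)^2+2+n(n-1)=9n^2-33n+34
\end{align*}
then finishes the proof. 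Without citing (or proving) such a decomposition with that explicit constant, the specific polynomial in the lemma cannot be obtained, only an unquantified $O(n^2)$ bound. A secondary quibble: the restriction $n\ge 3$ is not really about the constant $+34$ ``absorbing'' small-$j$ excess; it is inherited from the range of validity of the $16j-40$ formula (the first genuinely multi-controlled rotation appears at $j=3$), and indeed the closed form happens to remain valid at $n=2$ but fails at $n=1$.
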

\begin{proof}
    As shown in Fig.~\ref{fig:circuit}, the non-local gates included in the operator $W_j$ are a multi-controlled Rz gate for $j \geq 3$ (a controlled Rz gate for $j=2$) and totally $2(j-1)$ CNOT gates in the unitary $U_j$ and $U_j^\dagger$ for $j \geq 2$. 
    It is known that the multi-controlled RZ gate with $(j-1)$ control qubits can be decomposed into single-qubit gates and at most $16j-40$ CNOT gates~\cite[Theorem 3]{vale2023decomposition}.%
    Therefore, the number of CNOT gates required to implement the approximated time evolution operator $V$ is
    \begin{align}
        \sum_{j=3}^n (16j-40) + 2 + \sum_{j=2}^n 2(j-1) = 9n^2 -33n + 34.
    \end{align}
    Since operators $V$ and $V^{(2)}$ have the difference only in the single-qubit gate by definition in Eq.~\eqref{eq:trotter_second}, the number of CNOT gates in $V^{(2)}$ is the same as that in $V$.
\end{proof}

We now extend the above discussion to the operators acting on a $d$-dimensional domain $\Omega$.
\begin{lem} \label{lem:diff_circ_d-dim}
    Let the Hamiltonian $\mathcal{H}$ consist of finite difference operators for a $dn$-qubit system as
    \begin{align} \label{eq:hamiltonian_d-dim}
        \mathcal{H} = \gamma \sum_{\alpha=1}^d \sum_{j=1}^n \eta_\alpha \left( e^{i\lambda_\alpha} (s^{-}_j)_\alpha + e^{-i\lambda_\alpha} (s^{+}_j)_\alpha  \right),
    \end{align}
    where $\gamma \in \mathbb{R}$ is a scale parameter, $\lambda_\alpha \in \mathbb{R}$ is the phase parameter and 
    \begin{align}
        (s^\mu_j)_\alpha = I^{\otimes (\alpha-1)n} \otimes s^\mu_j \otimes I^{\otimes (d-\alpha)n},
    \end{align}
    for $\mu \in \{ -, + \}$.
    The time evolution operator $\exp(-i \mathcal{H} \tau)$ with the time increment $\tau$ can be approximated by the unitary $\bigotimes_{\alpha=1}^d V(\gamma \eta_\alpha \tau, \lambda_\alpha)$. 
    The approximation error is upper bounded in the sense of the operator norm as
    \begin{align}
        &\left\Vert \exp \left(-i \mathcal{H} \tau \right) - \bigotimes_{\alpha=1}^d V(\gamma \eta_\alpha \tau, \lambda_\alpha) \right\Vert \nonumber \\
        &\qquad \qquad \qquad \qquad \leq \frac{\gamma^2 \tau^2 (n-1) \sum_{\alpha=1}^d \eta_\alpha^2}{2}.
    \end{align}
\end{lem}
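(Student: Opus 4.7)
The plan is to reduce the $d$-dimensional statement to the one-dimensional Lemma~\ref{lem:diff_circ} by exploiting the tensor-product structure. First I would split the Hamiltonian along the spatial index $\alpha$ as $\mathcal{H}=\sum_{\alpha=1}^d \mathcal{H}_\alpha$, where
\begin{equation*}
\mathcal{H}_\alpha := \gamma\eta_\alpha \sum_{j=1}^n \bigl(e^{i\lambda_\alpha}(s_j^-)_\alpha + e^{-i\lambda_\alpha}(s_j^+)_\alpha\bigr).
\end{equation*}
By the very definition $(s_j^\mu)_\alpha = I^{\otimes(\alpha-1)n}\otimes s_j^\mu \otimes I^{\otimes(d-\alpha)n}$, the summands for different $\alpha$ act on disjoint registers, so $[\mathcal{H}_\alpha,\mathcal{H}_{\alpha'}]=0$ for $\alpha\neq\alpha'$. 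Consequently the exact time evolution factors cleanly,
\begin{equation*}
\exp(-i\mathcal{H}\tau) = \prod_{\alpha=1}^d \exp(-i\mathcal{H}_\alpha\tau) = \bigotimes_{\alpha=1}^d \exp\!\bigl(-i\gamma\eta_\alpha\tau \textstyle\sum_{j=1}^n(e^{i\lambda_\alpha}s_j^- + e^{-i\lambda_\alpha}s_j^+)\bigr),
\end{equation*}
reducing the problem to approximating each single-register factor.

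Next I would apply Lemma~\ref{lem:diff_circ} individually to each of the $d$ tensor factors with scale $\gamma\eta_\alpha$ and phase $\lambda_\alpha$, obtaining
\begin{equation*}
\bigl\Vert \exp(-i\mathcal{H}_\alpha\tau) - I^{\otimes(\alpha-1)n}\otimes V(\gamma\eta_\alpha\tau,\lambda_\alpha)\otimes I^{\otimes(d-\alpha)n}\bigr\Vert \leq \frac{\gamma^2\eta_\alpha^2\tau^2(n-1)}{2}.
\end{equation*}
Combining these single-register bounds into a bound on the full tensor product is the one step requiring care. I would use the standard telescoping identity: for unitaries $A_\alpha,B_\alpha$,
\begin{equation*}
\bigotimes_{\alpha=1}^d A_\alpha - \bigotimes_{\alpha=1}^d B_\alpha = \sum_{\alpha=1}^d \Bigl(\bigotimes_{\beta<\alpha} B_\beta\Bigr)\otimes(A_\alpha-B_\alpha)\otimes\Bigl(\bigotimes_{\beta>\alpha} A_\beta\Bigr),
\end{equation*}
whence the operator norm of the difference is at most $\sum_{\alpha=1}^d \Vert A_\alpha-B_\alpha\Vert$ by sub-multiplicativity and unitarity. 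Applied here, this yields
\begin{equation*}
\Bigl\Vert \exp(-i\mathcal{H}\tau) - \bigotimes_{\alpha=1}^d V(\gamma\eta_\alpha\tau,\lambda_\alpha)\Bigr\Vert \leq \sum_{\alpha=1}^d \frac{\gamma^2\eta_\alpha^2\tau^2(n-1)}{2} = \frac{\gamma^2\tau^2(n-1)\sum_{\alpha=1}^d\eta_\alpha^2}{2},
\end{equation*}
which is the claimed bound.

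There is no real obstacle here beyond verifying the commutation of the $\mathcal{H}_\alpha$'s (which is immediate from the disjoint-register structure) and invoking the tensor-product telescoping identity. The entire proof is essentially a bookkeeping argument built on top of Lemma~\ref{lem:diff_circ}; the only subtle point worth making explicit in the write-up is that the $d$ error contributions add linearly rather than multiplicatively, which is exactly why the telescoping expansion with unitaries on either side is the correct tool.
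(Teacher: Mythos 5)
Your proof is correct, but it takes a genuinely different route from the paper's. The paper treats the full $dn$-qubit Hamiltonian as a single first-order Lie--Trotter product over all $dn$ terms and invokes the commutator bound of Childs et al.\ once, globally: it expands the double sum of commutators, observes that cross-register commutators with $\alpha\neq\alpha'$ vanish, and recovers the $\sum_\alpha \eta_\alpha^2$ factor from the surviving diagonal terms. You instead factor the exact evolution as $\exp(-i\mathcal{H}\tau)=\bigotimes_\alpha \exp(-i\mathcal{H}_\alpha\tau)$ using the commutation of the $\mathcal{H}_\alpha$ (which is exact, not a Trotter step), apply the one-dimensional Lemma~\ref{lem:diff_circ} as a black box on each register with scale $\gamma\eta_\alpha$, and then combine the $d$ local errors via the telescoping (hybrid) identity for tensor products of unitaries. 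Both arguments hinge on the same structural fact --- operators on disjoint registers commute --- and both yield the identical bound. Your version is more modular: it reuses the 1D lemma without re-deriving any commutators in $d$ dimensions, and the telescoping step makes transparent why the errors add linearly. The paper's version has the advantage of staying entirely within the single framework of the Trotter error proposition (which it also reuses for the second-order variants), so the $d$-dimensional lemma is a literal corollary of the same machinery rather than a composition of two separate tools. One small point worth stating explicitly in your write-up is that applying Lemma~\ref{lem:diff_circ} with scale parameter $\gamma\eta_\alpha$ is legitimate because $\gamma\eta_\alpha\in\mathbb{R}$, so the per-register bound $\gamma^2\eta_\alpha^2\tau^2(n-1)/2$ follows by direct substitution; otherwise the argument is complete.
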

\begin{proof}
    Here we sketch the proof; the detail is given in  Appendix~\ref{sec:proof_lemma}.
    From Eq.~\eqref{eq:shift_general}, we obtain
    \begin{align}
        &\exp \left(-i \mathcal{H} \tau \right) \nonumber \\
        &= \exp \left( -i \gamma \tau \sum_{\alpha=1}^d \sum_{j=1}^n \eta_\alpha (e^{i\lambda_\alpha} (s^{-}_j)_\alpha + e^{-i\lambda_\alpha} (s^{+}_j)_\alpha) \right) \nonumber \\
        &\approx \prod_{\alpha=1}^d I^{\otimes (\alpha - 1)n } \otimes V(\gamma \eta_\alpha \tau, \lambda_\alpha) \otimes I^{\otimes (d - \alpha)n} \nonumber \\
        &= \bigotimes_{\alpha=1}^d V\left( \gamma \tau \eta_\alpha, \lambda_\alpha \right),
    \end{align}
    where $V(\gamma \eta_\alpha \tau, \lambda_\alpha)$ is given in Eq.~\eqref{eq:trotter_general} and represents the time evolution operator for the spatial dimension in the $x_\alpha$ direction.
    The approximation error of the Lie-Trotter-Suzuki decomposition is upper bounded by the operator norm of commutators of Hamiltonian~\cite[Proposition 9]{childs2021theory}.
    Together with the fact the terms of the Hamiltonian $e^{i\lambda_\alpha} (s^{-}_j)_\alpha + e^{-i\lambda_\alpha} (s^{+}_j)_\alpha$ and $e^{i\lambda_{\alpha'}} (s^{-}_j)_{\alpha'} + e^{-i\lambda_{\alpha'}} (s^{+}_j)_{\alpha'}$ commute for $\alpha \neq \alpha'$ and the discussion in the proof of Lemma~\ref{lem:diff_circ}, we obtain
    \begin{align}
        &\left\Vert \exp \left(-i \mathcal{H} \tau \right) - \bigotimes_{\alpha=1}^d V\left( \gamma \tau \eta_\alpha, \lambda_\alpha \right) \right\Vert \nonumber \\
        &\qquad \qquad \qquad \qquad \leq \frac{\gamma^2 \tau^2 (n-1) \sum_{\alpha=1}^d \eta_\alpha^2}{2}.
    \end{align}
\end{proof}

We can easily extend this lemma to the second-order formula as follows.
\begin{lem} \label{lem:diff_circ_d-dim_second}
    Let us consider the Hamiltonian $\mathcal{H}$ for a $dn$-qubit system in Eq.~\eqref{eq:hamiltonian_d-dim}.
    The time evolution operator $\exp(-i \mathcal{H} \tau)$ with the time increment $\tau$ can be approximated by the unitary $\bigotimes_{\alpha=1}^d V^{(2)}(\gamma \eta_\alpha \tau, \lambda_\alpha)$. 
    The approximation error is upper bounded in the sense of the operator norm as
    \begin{align}
        &\left\Vert \exp \left(-i \mathcal{H} \tau \right) - \bigotimes_{\alpha=1}^d V^{(2)}(\gamma \eta_\alpha \tau, \lambda_\alpha) \right\Vert \nonumber \\
        &\qquad \qquad \qquad \qquad \leq \frac{\gamma^3 \tau^3 (2n-3) \sum_{\alpha=1}^d \eta_\alpha^3}{6}.
    \end{align}
\end{lem}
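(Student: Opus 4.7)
The plan is to reduce the $d$-dimensional claim to Lemma~\ref{lem:diff_circ_second} by exploiting the tensor-product structure of the Hamiltonian in Eq.~\eqref{eq:hamiltonian_d-dim}. First I would write $\mathcal{H} = \sum_{\alpha=1}^d \mathcal{H}_\alpha$, where $\mathcal{H}_\alpha := \gamma \eta_\alpha \sum_{j=1}^n \bigl( e^{i\lambda_\alpha}(s_j^-)_\alpha + e^{-i\lambda_\alpha}(s_j^+)_\alpha \bigr)$. Because each $(s_j^\mu)_\alpha$ acts nontrivially only on the $\alpha$-th block of $n$ qubits, $[\mathcal{H}_\alpha, \mathcal{H}_{\alpha'}] = 0$ for $\alpha \neq \alpha'$, and consequently the exponential factorises exactly:
\begin{equation*}
\exp(-i\mathcal{H}\tau) \;=\; \prod_{\alpha=1}^d \exp(-i\mathcal{H}_\alpha \tau),
\end{equation*}
where each factor acts as the identity on all blocks other than its own. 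Thus no Trotter error arises from interleaving different spatial directions; this is the same observation that drove Lemma~\ref{lem:diff_circ_d-dim} and it carries over unchanged to the second-order setting.

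Next I would recognise each one-dimensional factor as an instance of the Hamiltonian treated in Lemma~\ref{lem:diff_circ_second}, now with scale parameter $\gamma\eta_\alpha$ and phase $\lambda_\alpha$. Applying that lemma block-by-block yields
\begin{equation*}
\bigl\| \exp(-i\mathcal{H}_\alpha \tau) - V^{(2)}(\gamma\eta_\alpha \tau, \lambda_\alpha) \bigr\| \;\leq\; \frac{\gamma^3 \eta_\alpha^3 \tau^3 (2n-3)}{6},
\end{equation*}
for every $\alpha$, which is exactly the per-direction contribution appearing in the target bound.

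Finally, I would combine the per-block estimates using the standard telescoping inequality for products of unitaries: if $A_\alpha, B_\alpha$ are unitary, then inserting hybrid operators of the form $B_1\cdots B_{\alpha-1} A_\alpha \cdots A_d$ and using submultiplicativity of the operator norm together with $\|A_\alpha\|=\|B_\alpha\|=1$ gives
\begin{equation*}
\Bigl\| \bigotimes_{\alpha=1}^d A_\alpha - \bigotimes_{\alpha=1}^d B_\alpha \Bigr\| \;\leq\; \sum_{\alpha=1}^d \|A_\alpha - B_\alpha\|.
\end{equation*}
Setting $A_\alpha := \exp(-i\mathcal{H}_\alpha\tau)$ and $B_\alpha := V^{(2)}(\gamma\eta_\alpha\tau,\lambda_\alpha)$ and summing the one-dimensional bounds produces the claimed estimate $\gamma^3\tau^3 (2n-3)\sum_\alpha \eta_\alpha^3 / 6$.

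All of the analytical content is already packaged inside Lemma~\ref{lem:diff_circ_second}, so no new commutator calculations are required. The only step that needs any care is the bookkeeping to identify $\exp(-i\mathcal{H}_\alpha\tau)$ with the one-dimensional evolution acting on the $\alpha$-th block (to justify invoking Lemma~\ref{lem:diff_circ_second} with parameters $\gamma\eta_\alpha, \lambda_\alpha$) and the verification that the telescoping inequality applies without spurious prefactors because the objects in play are genuinely unitary. I do not foresee a substantive obstacle; the present lemma is essentially a direct corollary of Lemmas~\ref{lem:diff_circ_second} and~\ref{lem:diff_circ_d-dim}.
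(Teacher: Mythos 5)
Your proof is correct and reaches the stated bound, but it is organised differently from the paper's argument. The paper treats the $d$-dimensional Hamiltonian as a single Trotterised system: it applies the nested-commutator bound of \cite[Proposition 9]{childs2021theory} once to the full sum $\mathcal{H}=\gamma\sum_{\alpha,j}\eta_\alpha(e^{i\lambda_\alpha}(s_j^-)_\alpha+e^{-i\lambda_\alpha}(s_j^+)_\alpha)$, and then uses the vanishing of the cross-direction commutators to collapse the nested-commutator sum onto the within-block contributions already computed in the proof of Lemma~\ref{lem:diff_circ_second}, yielding $\sum_\alpha \gamma^3\eta_\alpha^3\tau^3(2n-3)/6$. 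You instead exploit the same commutation fact one step earlier: since $[\mathcal{H}_\alpha,\mathcal{H}_{\alpha'}]=0$, the exact evolution factorises as $\prod_\alpha\exp(-i\mathcal{H}_\alpha\tau)$ with no inter-block Trotter error at all, so Lemma~\ref{lem:diff_circ_second} can be invoked verbatim (with $\gamma\mapsto\gamma\eta_\alpha$) on each block, and the telescoping inequality
\begin{equation*}
\Bigl\Vert \textstyle\prod_{\alpha}A_\alpha-\prod_{\alpha}B_\alpha\Bigr\Vert\le\sum_{\alpha}\Vert A_\alpha-B_\alpha\Vert
\end{equation*}
for unitaries assembles the final bound. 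Your route is more modular --- it uses Lemma~\ref{lem:diff_circ_second} strictly as a black box and requires no re-examination of commutators in the $dn$-qubit setting --- at the cost of introducing the hybrid/telescoping step, whereas the paper's route stays entirely within the Proposition~9 framework. The two arguments give identical constants, and the only bookkeeping you rightly flag (identifying $\exp(-i\mathcal{H}_\alpha\tau)$ with $I^{\otimes(\alpha-1)n}\otimes\exp(-i\tilde{\mathcal{H}}_\alpha\tau)\otimes I^{\otimes(d-\alpha)n}$ and noting that embedding into identity factors preserves the operator norm) goes through without issue.
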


\begin{proof}
    From Eq.~\eqref{eq:shift_general} and the fact that $W_{j>2}$ and $W_{j'>2}$ commute, we obtain
    \begin{align}
        &\exp \left(-i \mathcal{H} \tau \right) \nonumber \\
        &= \exp \left( -i \gamma \tau \sum_{\alpha=1}^d \sum_{j=1}^n \eta_\alpha (e^{i\lambda_\alpha} (s^{-}_j)_\alpha + e^{-i\lambda_\alpha} (s^{+}_j)_\alpha) \right) \nonumber \\
        &\approx \prod_{\alpha=1}^d I^{\otimes (\alpha - 1)n } \otimes V^{(2)}(\gamma \eta_\alpha \tau, \lambda_\alpha) \otimes I^{\otimes (d - \alpha)n} \nonumber \\
        &= \bigotimes_{\alpha=1}^d V^{(2)}\left( \gamma \tau \eta_\alpha, \lambda_\alpha \right),
    \end{align}
    where $V^{(2)}(\gamma \eta_\alpha \tau, \lambda_\alpha)$ is given in Eq.~\eqref{eq:trotter_second}.
    The approximation error of the Lie-Trotter-Suzuki decomposition is upper bounded by the operator norm of commutators of Hamiltonian~\cite[Proposition 9]{childs2021theory}.
    Together with the fact the terms of the Hamiltonian $e^{i\lambda_\alpha} (s^{-}_j)_\alpha + e^{-i\lambda_\alpha} (s^{+}_j)_\alpha$ and $e^{i\lambda_{\alpha'}} (s^{-}_j)_{\alpha'} + e^{-i\lambda_{\alpha'}} (s^{+}_j)_{\alpha'}$ commute for $\alpha \neq \alpha'$ and the discussion in the proof of Lemma~\ref{lem:diff_circ_second}, we obtain
    \begin{align}
        &\left\Vert \exp \left(-i \mathcal{H} \tau \right) - \bigotimes_{\alpha=1}^d V\left( \gamma \tau \eta_\alpha, \lambda_\alpha \right) \right\Vert \nonumber \\
        &\qquad \qquad \qquad \qquad \leq \frac{\gamma^3 \tau^3 (2n-3) \sum_{\alpha=1}^d \eta_\alpha^3}{6}.
    \end{align}
\end{proof}

We also have the following lemma about the gate counts of the unitary $\bigotimes_{\alpha=1}^d V\left( \gamma \tau \eta_\alpha, \lambda_\alpha \right)$ and $\bigotimes_{\alpha=1}^d V^{(2)}\left( \gamma \tau \eta_\alpha, \lambda_\alpha \right)$.
\begin{lem} \label{lem:gate_count_d-dim}
    The approximated time evolution operators $\bigotimes_{\alpha=1}^d V\left( \gamma \tau \eta_\alpha, \lambda_\alpha \right)$ and $\bigotimes_{\alpha=1}^d V^{(2)}\left( \gamma \tau \eta_\alpha, \lambda_\alpha \right)$ with $V$ and $V^{(2)}$ given in Eqs.~\eqref{eq:trotter_general} and \eqref{eq:trotter_second}, respectively, can be implemented using single-qubit gates and at most $d(9n^2 -33n + 34)$ CNOT gates for $n \geq 3$.
\end{lem}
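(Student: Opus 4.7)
The plan is to reduce this lemma directly to Lemma~\ref{lem:gate_count} by exploiting the tensor product structure of the approximating unitary. First I would point out that by construction (see Eq.~\eqref{eq:differential_operator_ddim}) the shift operators $(s^{\pm}_j)_\alpha$ with different spatial indices $\alpha$ act on disjoint $n$-qubit registers of the full $dn$-qubit system. Consequently, for each fixed $\alpha$, the unitary
\begin{equation*}
I^{\otimes (\alpha-1)n}\otimes V(\gamma\tau\eta_\alpha,\lambda_\alpha)\otimes I^{\otimes(d-\alpha)n}
\end{equation*}
acts non-trivially only on the $\alpha$-th register, and the same holds for $V^{(2)}$.

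Next I would invoke Lemma~\ref{lem:gate_count}, which says that each $V(\gamma\tau\eta_\alpha,\lambda_\alpha)$ and each $V^{(2)}(\gamma\tau\eta_\alpha,\lambda_\alpha)$ can be implemented, on its own $n$-qubit register, using single-qubit gates together with at most $9n^2-33n+34$ CNOT gates, provided $n\geq 3$. Since a CNOT gate on one register commutes with any gate on a disjoint register and its decomposition into hardware primitives is unchanged by padding the state space with identities on the other registers, the CNOT cost on the full $dn$-qubit system for each individual tensor factor is also $9n^2-33n+34$.

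Finally I would simply sum over the $d$ spatial directions. Because
\begin{equation*}
\bigotimes_{\alpha=1}^{d} V(\gamma\tau\eta_\alpha,\lambda_\alpha) \;=\; \prod_{\alpha=1}^{d}\Bigl(I^{\otimes(\alpha-1)n}\otimes V(\gamma\tau\eta_\alpha,\lambda_\alpha)\otimes I^{\otimes(d-\alpha)n}\Bigr),
\end{equation*}
and likewise for $V^{(2)}$, the total number of CNOT gates is at most $d(9n^2-33n+34)$, as claimed. The argument is identical for $V$ and $V^{(2)}$ because, by Eq.~\eqref{eq:trotter_second}, they differ only by additional single-qubit $W_1$ factors, which contribute no CNOT gates (consistent with Lemma~\ref{lem:gate_count}).

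There is essentially no technical obstacle here; the statement is a bookkeeping consequence of the disjoint-register tensor product decomposition combined with Lemma~\ref{lem:gate_count}. The only minor care needed is to verify that padding with identities on auxiliary registers does not inflate the CNOT count of each factor, which follows from the fact that each CNOT in the decomposition of Lemma~\ref{lem:gate_count} already acts within the $n$-qubit register associated with that spatial direction.
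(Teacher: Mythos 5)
Your proposal is correct and follows essentially the same route as the paper: the paper's proof likewise invokes Lemma~\ref{lem:gate_count} for each tensor factor and multiplies the per-register CNOT count $9n^2-33n+34$ by $d$, noting that $V$ and $V^{(2)}$ differ only in single-qubit gates. Your additional remarks about disjoint registers and identity padding are correct but just make explicit what the paper leaves implicit.
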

\begin{proof}
    Based on the same discussion in the proof of Lemma~\ref{lem:gate_count}, the number of CNOT gates included in the approximated time evolution operators $V$ and $V^{(2)}$ are $9n^2 -33n + 34$.
    Therefore, the approximated time evolution operators $\bigotimes_{\alpha=1}^d V( \gamma \tau \eta_\alpha, \lambda_\alpha)$ and $\bigotimes_{\alpha=1}^d V^{(2)}( \gamma \tau \eta_\alpha, \lambda_\alpha)$ can be implemented by single-qubits gates and at most $d(9n^2 -33n + 34)$ CNOT gates.
\end{proof}

To simulate the Hamiltonian dynamics over the total time $T$, it suffices to divide the total time $T$ into $r(:=T/\tau)$ intervals so that the approximation error occurred in the time interval $\tau$ could be small enough to be acceptable. 

We remark that the essential part of the Hamiltonian \eqref{eq:hamiltonian_advect} for the advection equation falls into the Hamiltonian \eqref{eq:hamiltonian_d-dim} by setting $\gamma=1/(2l)$, $\eta_\alpha=v_\alpha$ and $\lambda_\alpha=-\pi/2$. 
Appendix~\ref{sec:impl_advect} gives the procedure for constructing the quantum circuit and its Trotter error, for the full Hamiltonian of the advection equation. 
Also for the case of wave equation, although the Hamiltonian in Eq.~\eqref{eq:hamiltonian_wave} does not fit into that in Lemma~\ref{lem:diff_circ}, we can easily obtain the quantum circuit for Hamiltonian simulation as discussed in Appendix~\ref{sec:impl_wave}.

\subsection{Space and time complexities} \label{sec:complexity}

Together with the quantization of the field in Eq.~\eqref{eq:u_d_dim}, the explicit circuit construction and Lemmas~\ref{lem:diff_circ_d-dim} and \ref{lem:gate_count_d-dim}, we now provide the following theorem giving the space and time complexities of our Hamiltonian simulation, as our main result.
\begin{thm} \label{thm:complexity}
    Let $\mathcal{H}$ be the Hamiltonian as defined in Eq.~\eqref{eq:hamiltonian_d-dim}, i.e., $\mathcal{H}=\gamma \sum_{\alpha=1}^d \sum_{j=1}^n \eta_\alpha ( e^{i\lambda_\alpha} (s^{-}_j)_\alpha + e^{-i\lambda_\alpha} (s^{+}_j)_\alpha)$. 
    The time evolution operator $\exp(-i \mathcal{H} T)$ up to time $T$ is implementable on the $dn$-qubits system using the quantum circuits with $O(d n^3 \gamma^2 T^2 \sum_{\alpha=1}^d \eta_\alpha^2 / \varepsilon)$ non-local gates within the additive error $\varepsilon$.
    Furthermore, the leading term of the number of non-local gates is $9d n^3 \gamma^2 T^2 \sum_{\alpha=1}^d \eta_\alpha^2 / (2\varepsilon)$.
    The quantum circuit for $\exp(-i \mathcal{H} T)$ consists of the repetitive applications of the one time step unitary $V$ shown in Fig.~\ref{fig:circuit}.
\end{thm}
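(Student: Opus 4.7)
The plan is to combine the single-step Trotter error bound from Lemma~\ref{lem:diff_circ_d-dim} with the per-step gate count from Lemma~\ref{lem:gate_count_d-dim}, and to control the total simulation error by subdividing $[0,T]$ into $r$ equal time steps. The space complexity is immediate: the quantization in Eq.~\eqref{eq:u_d_dim} uses one register of $n$ qubits per spatial dimension, so $dn$ qubits suffice to encode the discretized field and to support the shift operators that appear in $\mathcal{H}$.

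For the time complexity, I would first write $\exp(-i\mathcal{H}T) = \left(\exp(-i\mathcal{H}\tau)\right)^r$ with $\tau := T/r$, and approximate each factor by $U_\tau := \bigotimes_{\alpha=1}^d V(\gamma\eta_\alpha\tau,\lambda_\alpha)$. The standard telescoping estimate for products of unitaries (using $\|AB\|\le\|A\|\|B\|$ and unitarity of each factor) gives
\begin{align}
\left\Vert \exp(-i\mathcal{H}T) - U_\tau^{\,r} \right\Vert &\le r\,\left\Vert \exp(-i\mathcal{H}\tau) - U_\tau \right\Vert \notag \\
&\le r\cdot \frac{\gamma^2\tau^2(n-1)\sum_{\alpha=1}^d\eta_\alpha^2}{2},
\end{align}
where the second inequality is Lemma~\ref{lem:diff_circ_d-dim}. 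Substituting $\tau=T/r$ yields a total error bounded by $\gamma^2 T^2(n-1)\sum_\alpha\eta_\alpha^2/(2r)$. Demanding that this be at most $\varepsilon$ forces
\begin{align}
r \;\ge\; \frac{\gamma^2 T^2 (n-1)\sum_{\alpha=1}^d \eta_\alpha^2}{2\varepsilon},
\end{align}
and it suffices to take $r$ equal to the ceiling of the right-hand side.

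Finally, I would multiply by the per-step gate count. Lemma~\ref{lem:gate_count_d-dim} shows that one application of $U_\tau$ uses at most $d(9n^2-33n+34)$ non-local (CNOT) gates and $O(dn^2)$ single-qubit gates, so the overall non-local gate count is at most
\begin{align}
r\cdot d\,(9n^2-33n+34) \;=\; O\!\left(\frac{d\,n^3\,\gamma^2 T^2\sum_{\alpha=1}^d\eta_\alpha^2}{\varepsilon}\right).
\end{align}
Keeping only the dominant $9n^2$ contribution from the multi-controlled $\mathrm{RZ}$ decompositions produces the stated leading coefficient $9d n^3 \gamma^2 T^2 \sum_\alpha \eta_\alpha^2/(2\varepsilon)$. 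The circuit structure is automatic: $U_\tau^{\,r}$ is exactly $r$ repetitions of the one-step unitary depicted in Fig.~\ref{fig:circuit}, tensored across the $d$ spatial registers.

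The only delicate step is the error-accumulation inequality, since one must verify that the $r$-fold triangle bound is actually tight enough to give the claimed $T^2/\varepsilon$ scaling (as opposed to the weaker $T/\sqrt{\varepsilon}$ one gets from higher-order Trotter splittings); this follows because the first-order Trotter error is quadratic in $\tau$, so the $r$ accumulated errors scale as $T^2/r$. Everything else is substitution and arithmetic; no new commutator estimates are required because they were already absorbed into Lemma~\ref{lem:diff_circ_d-dim}.
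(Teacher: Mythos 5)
Your proposal is correct and follows essentially the same route as the paper's proof: apply the single-step bound of Lemma~\ref{lem:diff_circ_d-dim}, accumulate over $r=T/\tau$ steps to get the requirement $r \geq \gamma^2 T^2 (n-1)\sum_\alpha \eta_\alpha^2/(2\varepsilon)$, and multiply by the per-step CNOT count $d(9n^2-33n+34)$ from Lemma~\ref{lem:gate_count_d-dim}. Your explicit telescoping inequality for the error accumulation is a slightly more detailed rendering of the step the paper states as the budget condition (per-step error $\leq \varepsilon/r$), but the argument is the same.
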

\begin{proof}
    Lemma~\ref{lem:diff_circ_d-dim} states that the additive error of the approximated time evolution operator $\bigotimes_{\alpha=1}^d V\left( \gamma \tau \eta_\alpha, \lambda_\alpha \right)$ scales $\gamma^2 \tau^2 (n-1) \sum_{\alpha=1}^d \eta_\alpha^2 / 2$ with the time increment $\tau$.
    To suppress the error of the simulation over the total time $T$ within a small value $\varepsilon$, it suffices to divide the total time $T$ into $r(:=T/\tau)$ intervals so that 
    \begin{align}
        \frac{\gamma^2 T^2 (n-1) \sum_{\alpha=1}^d \eta_\alpha^2}{2r^2} \leq \frac{\varepsilon}{r},
    \end{align}
    which is rearranged as
    \begin{align}
        r \geq \frac{\gamma^2 T^2 (n-1) \sum_{\alpha=1}^d \eta_\alpha^2}{2 \varepsilon}.
    \end{align}
    Since each Trotter step $\bigotimes_{\alpha=1}^d V\left( \gamma \tau \eta_\alpha, \lambda_\alpha \right)$ requires $d(9n^2-33n+34)$ CNOT gates by Lemma~\ref{lem:gate_count_d-dim}, Hamiltonian simulation up to time $T$ within the additive error $\varepsilon$ in the sense of the operator norm requires $rd(9n^2-33n+34) \geq d (9n^3 - 42n^2 +67n - 34) \gamma^2 T^2 \sum_{\alpha=1}^d \eta_\alpha^2 / (2 \varepsilon)$
    CNOT gates which scales $O(d n^3 \gamma^2 T^2 \sum_{\alpha=1}^d \eta_\alpha^2 / \varepsilon)$.
\end{proof}

As for the second-order formula, we also provide the following theorem giving the space and time complexities of our Hamiltonian simulation.

\begin{thm} \label{thm:complexity_second}
    Let $\mathcal{H}$ be the Hamiltonian as defined in Eq.~\eqref{eq:hamiltonian_d-dim}, i.e., $\mathcal{H}=\gamma \sum_{\alpha=1}^d \sum_{j=1}^n \eta_\alpha ( e^{i\lambda_\alpha} (s^{-}_j)_\alpha + e^{-i\lambda_\alpha} (s^{+}_j)_\alpha)$. 
    The time evolution operator $\exp(-i \mathcal{H} T)$ up to time $T$ is implementable on the $dn$-qubits system using the quantum circuits with $O(dn^{2.5} \gamma^{1.5} T^{1.5} (\sum_{\alpha=1}^d \eta_\alpha^3)^{0.5} / \varepsilon^{0.5})$ non-local gates within the additive error $\varepsilon$. 
    Furthermore, the leading term of the number of non-local gates is $(3 \sqrt{3} dn^{2.5} \gamma^{1.5} T^{1.5} (\sum_{\alpha=1}^d \eta_\alpha^3)^{0.5} / (\varepsilon)^{0.5})$.
    The quantum circuit for $\exp(-i \mathcal{H} T)$ consists of the repetitive applications of the one time step unitary $V^{(2)}$ in Eq.~{\eqref{eq:trotter_second}}.
\end{thm}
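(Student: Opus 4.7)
The plan is to mirror the proof of Theorem~\ref{thm:complexity}, replacing the first-order error bound of Lemma~\ref{lem:diff_circ_d-dim} with the second-order error bound of Lemma~\ref{lem:diff_circ_d-dim_second}, and then recomputing the Trotter-step count and the resulting gate count. Concretely, I would split the simulation interval $[0,T]$ into $r := T/\tau$ equal intervals of length $\tau$, so that $\exp(-i\mathcal{H}T)$ is approximated by $r$ applications of $\bigotimes_{\alpha=1}^d V^{(2)}(\gamma \eta_\alpha \tau, \lambda_\alpha)$, and then choose $r$ so that the accumulated error is at most $\varepsilon$.

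First, I would invoke the triangle inequality together with unitary invariance of the operator norm to get
\begin{align*}
  &\left\Vert \exp(-i\mathcal{H}T) - \bigotimes_{\alpha=1}^d V^{(2)}(\gamma \eta_\alpha \tau, \lambda_\alpha)^r \right\Vert \\
  &\qquad \leq r \left\Vert \exp(-i\mathcal{H}\tau) - \bigotimes_{\alpha=1}^d V^{(2)}(\gamma \eta_\alpha \tau, \lambda_\alpha) \right\Vert,
\end{align*}
and then apply Lemma~\ref{lem:diff_circ_d-dim_second} to bound the right-hand side by $r \gamma^3 \tau^3 (2n-3) \sum_{\alpha=1}^d \eta_\alpha^3 / 6$. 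Substituting $\tau = T/r$ and requiring this to be at most $\varepsilon$ gives the condition
\begin{align*}
  r^2 \geq \frac{\gamma^3 T^3 (2n-3) \sum_{\alpha=1}^d \eta_\alpha^3}{6 \varepsilon},
\end{align*}
hence it suffices to take $r = \left\lceil \sqrt{\gamma^3 T^3 (2n-3) \sum_{\alpha=1}^d \eta_\alpha^3 / (6\varepsilon)} \right\rceil$.

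Next, I would quote Lemma~\ref{lem:gate_count_d-dim}, which states that each Trotter step $\bigotimes_{\alpha=1}^d V^{(2)}(\gamma \eta_\alpha \tau, \lambda_\alpha)$ is implementable with at most $d(9n^2 - 33n + 34)$ CNOT gates. Multiplying by $r$ then yields a total non-local gate count bounded by
\begin{align*}
  r \cdot d(9n^2 - 33n + 34) = O\!\left( d n^{2.5} \gamma^{1.5} T^{1.5} \Bigl(\sum_{\alpha=1}^d \eta_\alpha^3\Bigr)^{1/2} \big/ \varepsilon^{1/2} \right).
\end{align*}
Retaining only the leading terms $2n$ from $(2n-3)$ and $9n^2$ from the gate count per step, the leading constant is $9/\sqrt{3} = 3\sqrt{3}$, giving the stated leading term $3\sqrt{3}\, d n^{2.5} \gamma^{1.5} T^{1.5} (\sum_{\alpha=1}^d \eta_\alpha^3)^{1/2}/\varepsilon^{1/2}$.

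The argument is essentially a substitution into the template of Theorem~\ref{thm:complexity}, and no step poses a conceptual obstacle once Lemmas~\ref{lem:diff_circ_d-dim_second} and \ref{lem:gate_count_d-dim} are in hand; the only mild care required is ensuring that the per-step error budget is set to $\varepsilon/r$ (not $\varepsilon$) before inverting to solve for $r$, and then correctly tracking the exponents $3/2$ and $1/2$ in $T$, $\gamma$, and $\varepsilon$ that arise from taking a square root of a cubic bound. The quadratic dependence on $n$ in the gate count per step combines with the $\sqrt{n}$ coming from $r$ to yield the $n^{2.5}$ scaling, which is the expected improvement over the $n^3$ scaling of the first-order formula in Theorem~\ref{thm:complexity}.
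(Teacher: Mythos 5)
Your proposal is correct and follows essentially the same route as the paper's proof: invoke Lemma~\ref{lem:diff_circ_d-dim_second} for the per-step error, set the per-step budget to $\varepsilon/r$ with $\tau = T/r$ to obtain $r \geq \gamma^{1.5}T^{1.5}(2n-3)^{0.5}(\sum_\alpha \eta_\alpha^3)^{0.5}/(6\varepsilon)^{0.5}$, and multiply by the $d(9n^2-33n+34)$ CNOT count per step from Lemma~\ref{lem:gate_count_d-dim}, recovering the leading constant $9/\sqrt{3}=3\sqrt{3}$. The only cosmetic difference is that you make the triangle-inequality accumulation of error over $r$ steps explicit, which the paper leaves implicit.
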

\begin{proof}
    Lemma~\ref{lem:diff_circ_d-dim_second} states that the additive error of the approximated time evolution operator $\bigotimes_{\alpha=1}^d V^{(2)}\left( \gamma \tau \eta_\alpha, \lambda_\alpha \right)$ scales $\gamma^3 \tau^3 (2n-3) \sum_{\alpha=1}^d \eta_\alpha^3 / 6$ with the time increment $\tau$.
    To suppress the error of the simulation over the total time $T$ within a small value $\varepsilon$, we divide the total time $T$ into $r(:=T/\tau)$ intervals so that 
    \begin{align}
        \frac{\gamma^3 T^3 (2n-3) \sum_{\alpha=1}^d \eta_\alpha^3}{6r^3} \leq \frac{\varepsilon}{r},
    \end{align}
    which is rearranged as
    \begin{align}
        r \geq \frac{\gamma^{1.5} T^{1.5} (2n-3)^{0.5} (\sum_{\alpha=1}^d \eta_\alpha^3)^{0.5}}{(6\varepsilon)^{0.5}}.
    \end{align}
    Since each Trotter step $\bigotimes_{\alpha=1}^d V^{(2)}\left( \gamma \tau \eta_\alpha, \lambda_\alpha \right)$ requires $d(9n^2-33n+34)$ CNOT gates by Lemma~\ref{lem:gate_count_d-dim}, Hamiltonian simulation up to time $T$ within the additive error $\varepsilon$ in the sense of the operator norm requires $rd(9n^2-33n+34)$ CNOT gates whose leading term is $(3 \sqrt{3} dn^{2.5} \gamma^{1.5} T^{1.5} (\sum_{\alpha=1}^d \eta_\alpha^3)^{0.5} / (\varepsilon)^{0.5})$.
    This scales $O(dn^{2.5} \gamma^{1.5} T^{1.5} (\sum_{\alpha=1}^d \eta_\alpha^3)^{0.5} / \varepsilon^{0.5})$.
\end{proof}

Note that since $V^r = W_1(-\gamma \tau, \lambda) (V^{(2)})^r W_1(\gamma \tau, \lambda)$, the first-order formula works similarly to the second-order formula and can exhibit the similar bound to the second-order formula in a practical sense~\cite{layden2022first}.

\begin{rem} \label{rem:classical_complexity}
    Classical implementation requires $O(2^{dn})$ memories to store the discretized scalar field $\bm{u}$.
    Since the finite difference operators, such as $D^\mu_\mathrm{B}$ for $\mu \in \{-, +, \pm, \Delta \}$ and $\mathrm{B} \in \{ \mathrm{D}, \mathrm{N}, \mathrm{P} \}$ in Eqs.~\eqref{eq:forward diff operator}, \eqref{eq:backward diff operator}, \eqref{eq:central diff operator} and \eqref{eq:laplacian operator}, can be represented by sparse matrices of the size $2^{dn} \times 2^{dn}$ with the sparsity denoted by $s$, the application of the operator $D^\mu_\mathrm{B}$ to the vector $\bm{u}$ requires $O(s 2^{dn})$ arithmetic operations.
    Here, $s=2$ for $\mu \in \{-, +, \pm\}$, i.e., the first-order derivative and $s=3$ for $\mu =\Delta$, i.e., the Laplacian in Eq.~\eqref{eq:laplacian operator}.
    Now, let us consider classical simulation using the forward Euler scheme, which has the additive error bounded by $O(\tau^2)$ where $\tau$ is the time increment.
    Then, classical simulation up to time $T$ within the additive error $\varepsilon$ requires $O(T^2 / \varepsilon)$ steps, which results in $O(s2^{dn} (T^2 / \varepsilon))$ arithmetic operations.
    When using the second-order formula for the time integration, the complexity will be improved to $O(s2^{dn} (T^{1.5} / \varepsilon^{0.5}))$.
    In addition, the Courant-Friedrichs-Lewy (CFL) condition~\cite{de2013courant} requires $T/r = O(l)$.
    As a result, classical simulation up to time $T$ within the additive error $\varepsilon$ requires $O(s2^{dn} (T^2 / \varepsilon + T / l))$ or $O(s2^{dn} (T^{1.5} / \varepsilon^{0.5} + T / l))$ arithmetic operations.
\end{rem}
\begin{rem} \label{rem:comparison_complexity}
    Focusing on the spatial dimension $d$, the number of nodes in each dimension $2^n$, the total simulation time $T$, and the additive error of the simulation, $\varepsilon$, our Hamiltonian simulation requires $O(d n^3 T^2 / \varepsilon)$ non-local quantum gate operations under the assumption of $\gamma=O(1)$ and $\eta_\alpha=O(1)$ while classical approaches require $O(2^{dn} T^2 / \varepsilon)$ arithmetic operations.
    Considering the relationship of $n=O(\log(L/l))$ and assuming $\gamma = O(1/l)$, which is the case of the advection and wave equations, our Hamiltonian simulation requires $O(d T^2 \log(L/l)^3 / (l^2 \varepsilon))$ or $O(d T^{1.5} \log(L/l)^{2.5} / (l^{1.5} \varepsilon^{0.5}))$ under the assumption of $\eta_\alpha=O(1)$ while classical approaches require $O(L^d (T^2 / (l^d \varepsilon) + T / l^{d+1}))$ or $O(L^d (T^{1.5} / (l^d \varepsilon^{0.5}) + T / l^{d+1}))$.
\end{rem}

These theorem and remarks suggest that our method has the potential for exponential speedup with respect to the number of nodes on the lattice $N (= 2^n)$ and the size of the domain $L$ when simulating the classical dynamics governed by differential operators, if the operation of quantum gates can be performed as fast as the arithmetic operations in classical computers.
Remark~\ref{rem:comparison_complexity} also implies our method will exhibit the polynomial speedup with respect to the interval of the lattice $l$ when $d \geq 2$.

Note that the time complexity of our algorithm is polynomial with respect to the additive error $\varepsilon$ while the state-of-the-art algorithms~\cite{low2019hamiltonian, martyn2021grand} give the time complexity of $\mathrm{poly}(\log(1/\varepsilon))$.
Such algorithms rely on the oracle access to the block encoding of the Hamiltonian, while our current study focuses on deriving the explicit quantum circuit for Hamiltonian simulation.
We would like to conduct our future work to derive the quantum circuit for the block encoding-based Hamiltonian simulation, providing the comparison of the gate counts of the Trotter-based Hamiltonian simulation and the block encoding-based one including the constant factor in our future work.

Finally, we remark that it is impractical to access all components of $\ket{u(t)}$ or $\ket{\psi(t)}$ for the advection and the wave equation, respectively, because it requires $O(2^{dn})$ measurements.
Hence, the proposed method should be used in a situation when only some characteristic quantities about the solution are of interest; typically, such quantity is represented by $\braket{u(t) | \mathcal{O} | u(t)}$ or $\braket{\psi(t) | \mathcal{O} | \psi(t)}$ for an observable $\mathcal{O}$.
As we discuss later, one example of such observables is the kinetic energy of the system governed by the wave equation:
\begin{align}
    \mathcal{O} = \frac{1}{2} (Z + I) \otimes I^{\otimes n},
\end{align}
which leads to $\braket{\psi(t) | \mathcal{O} | \psi(t)} = \sum_x | \partial u (t, x) / \partial t |^2$.
The power spectra of the system is also one of the possible observables~\cite{miyamoto2024quantum}.
We can use various well established methods for such estimation of observables~\cite{knill2007optimal, alase2022tight}.
We would like to construct meaningful observables in more detail for specific applications and discuss efficient estimation of their expected values in our future work.

\section{Numerical and experimental results} \label{sec:result}

In this section, we provide several results of numerical experiments to demonstrate the validity of our proposed method.
We used Qiskit 0.45~\cite{Qiskit}, the open-source toolkit for quantum computation to implement quantum circuits.

\subsection{Advection equation}
\begin{figure*}[t]
    \centering
    \includegraphics[width=0.8\textwidth]{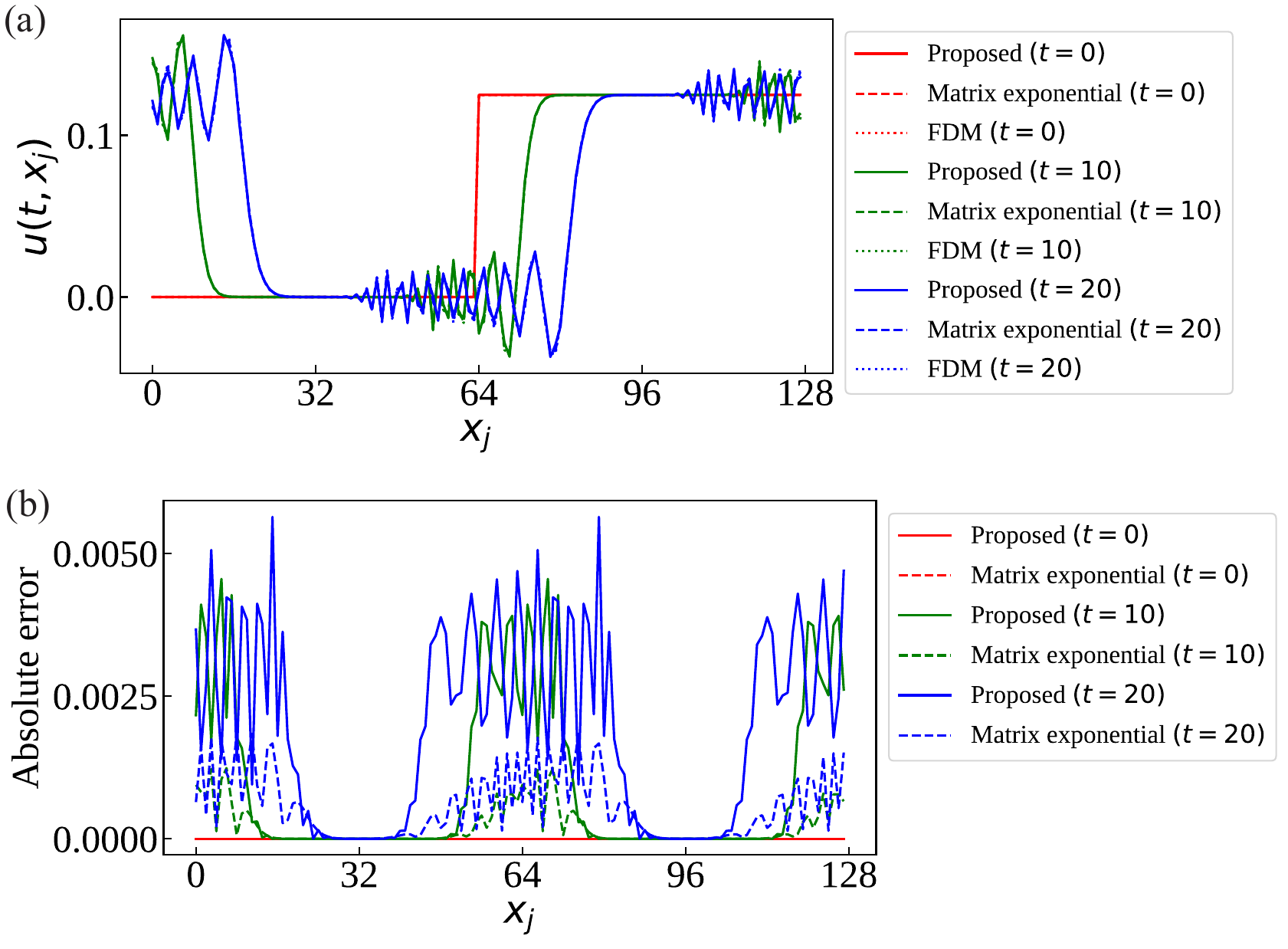}
    \caption{Simulation results of the advection equation with periodic BC in one dimension. (a) Solid lines represent the amplitudes of the quantum state $\ket{u(t)}$ generated by the proposed circuits. Dashed lines also represent the amplitudes of the quantum state $\ket{u(t)}$, but it was directly computed by applying the matrix exponential operator $e^{-i \mathcal{H} t}$ to $\ket{u(0)}$ via matrix-vector calculation. Dotted lines represent the solutions calculated by the finite difference method (FDM). 
    These three lines are overlapped very well in almost all the spatial region.
    (b) Absolute errors of the amplitudes of the quantum state $\ket{u(t)}$ by the proposed circuits (solide lines) and those applying the matrix exponential operator $e^{-i \mathcal{H} t}$ to $\ket{u(0)}$ via matrix-vector calculation (dashed lines), to the solutions calculated by FDM, respectively.
    }
    \label{fig:advect_1d}
\end{figure*}

We first performed Hamiltonian simulation for solving the advection equation with periodic BC, described in Section~\ref{sec:advect_sch}. 
Here, we compare the simulation result of our proposed method with the following two; 
the first one is the results calculated by directly applying the matrix exponential operator $e^{-i \mathcal{H} \tau}$ to $\ket{u(0)}$ by matrix-vector calculation, and the second one is the result calculated by the fully classical finite difference method (FDM) with the central differencing scheme given in Eq.~\eqref{eq:central diff operator}.

Figure~\ref{fig:advect_1d} illustrates the simulation results of the advection equation with periodic BC in one dimension.
We set $n=7$, $\tau=0.1$, $T=20$, $l=1$, and $v_1=1$ to formulate the problem and used \texttt{statevector\_simulator} to run the quantum circuits. 
Also, we set the time increment parameter to $0.01$ for the FDM.
As an initial state, we prepare
\begin{align}
    \ket{u(0)} = \frac{1}{\sqrt{2^{n-1}}} \ket{1} \otimes \sum_{j=0}^{2^{n-1}-1} \ket{j}.
\end{align}
Figure~\ref{fig:advect_1d} also include this initial state as the amplitude at $t=0$ to visualize the dynamics from the initial state.
Solid lines in Fig.~\ref{fig:advect_1d}(a) represent the amplitudes of the quantum state $\ket{u(t)}$ generated by the proposed circuits, at time $t=0,10,20$; note that in this case the probability amplitudes of $\ket{u(t)}$ are always real.
Dashed lines represent the amplitudes of the quantum state $\ket{u(t)}$ which is directly generated by applying the matrix exponential operator $e^{-i \mathcal{H} \tau}$ to $\ket{u(0)}$ via matrix-vector calculation. 
Dotted lines represents the solutions calculated by the FDM.
Since the velocity is chosen to have a positive value $v_1=1$, we observe that the scalar field $u(t, x_i)$ moves toward the positive direction as time passes. 
Also, the solid lines in Fig.~\ref{fig:advect_1d}(b) illustrate the absolute errors between the amplitudes of the quantum state $\ket{u(t)}$ by the proposed circuits and those calculated by FDM; moreover, the dashed lines illustrate the absolute errors between the amplitudes calculated via applying the matrix exponential operator $e^{-i \mathcal{H} t}$ to $\ket{u(0)}$ via matrix-vector calculation and those calculated by FDM.
The figures imply that the solutions obtained by the three approaches agree well, which demonstrated that the advection equation discretized by the FDM can be simulated by the Hamitlonian simulation algorithm on quantum circuits under acceptable Trotter errors. 
The oscillation occurred in some spatial region where $u(t, x_i)$ had the sharp gradients, is due to the central differencing scheme, and it is well-known that it can be resolved by using the upwind differencing scheme~\cite{allaire2007numerical}.
However, such differencing scheme does not retain the Hermitian property of the Hamiltonian as we mentioned in Section~\ref{sec:advect_sch}, that is, $\mathcal{H}^\dagger \neq \mathcal{H}$ for $\mathcal{H} = -i \bm{v} \cdot \nabla$ when we use $D^{+}$ or $D^{-}$ in Eqs.~\eqref{eq:forward diff operator} and \eqref{eq:backward diff operator} to discretize $\nabla$ because of the relationship of $(D^{+})^\dagger = D^{-}$.
We will address this issue in the future research.

\begin{figure*}[t]
    \centering
    \includegraphics[width=0.8\textwidth]{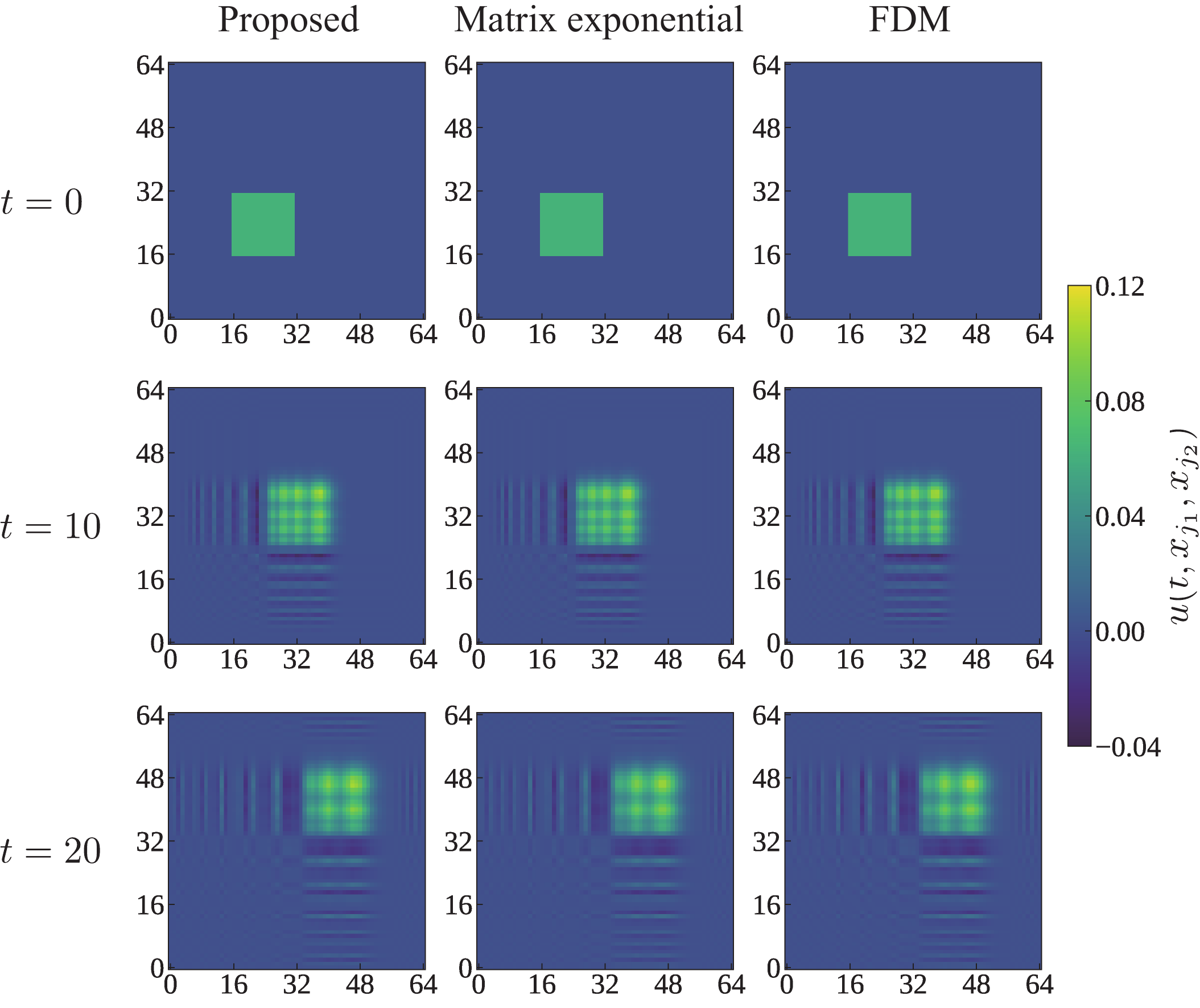}
    \caption{Simulation results of the advection equation with periodic BC in two dimensions. The color plot represents the values of the scalar field $\ket{u(t)}$ calculated by the three approaches.}
    \label{fig:advect_2d}
\end{figure*}

Next, Fig.~\ref{fig:advect_2d} shows the simulation results of the advection equation with periodic BC in two dimensions.
We set $n=6$, $\tau=0.1$, $T=20$, $l=1$, and $\bm{v}=[v_1, v_2]=[1,1]$
to setup the problem and used \texttt{statevector\_simulator}, to run the quantum circuits.
The number of qubits is $2n=12$.
We chose the time increment parameter to be $0.01$ for the FDM.
As an initial state, we prepare
\begin{align}
    &\ket{u(0)} \nonumber \\
    &= \frac{1}{2^{n-2}} \sum_{j_1=0}^{2^{n-2}-1} \sum_{j_2=0}^{2^{n-2}-1} \ket{0} \otimes \ket{1} \otimes  \ket{j_1} \otimes \ket{0} \otimes \ket{1} \otimes \ket{j_2}.
\end{align}
Figure~\ref{fig:advect_2d} also include this initial state as the amplitude at $t=0$ to visualize the dynamics from the initial state.
The left column in Fig.~\ref{fig:advect_2d} represents the amplitudes of the quantum state $\ket{u(t)}$ generated by the proposed circuits described in Section~\ref{sec:advect_sch}.
The center column represents the solutions directly obtained by applying the matrix exponential operator $e^{-i \mathcal{H} \tau}$ to $\ket{u(0)}$ by matrix-vector calculation. 
The right column represents solutions calculated by the FDM.
Each row represents the solutions at $t=0$, $t=10$ and $t=20$, respectively.
We observe that the distribution of the scalar field $\ket{u}$ moves toward upper right direction according to the setting of $\bm{v}=[v_1, v_2]=[1,1]$.
Although several numerical oscillation occurred due to the central differencing scheme, we again find that the solutions obtained by the three approaches agree well, which demonstrate the validity of the proposed method.
Since the solution in the center column is obtained by directly applying the matrix exponential operator, it does not include the error in the numerical time integration.
The difference between the left and center columns comes from the Trotter error, while the difference between the center and right columns comes from the time integral by the forward Euler scheme for the FDM.
These errors can be decreased by using a smaller time increment parameter.
That is, the time evolution operator $\exp(-i \mathcal{H} \tau)$ is directly applied to the state $\ket{u(t)}$ in the center column, while the forward Euler scheme~\cite{allaire2007numerical} is used to proceed time in the right column.

\subsection{Wave equation}

Next, we show Hamiltonian simulation for solving the wave equation with the mixed BC in one dimension and with the periodic BC in two dimensions.
We here again compare the simulation results of our proposed method with those obtained by directly using the matrix exponential operator $e^{-i \mathcal{H} \tau}$ to $\ket{\psi(0)}$ and those by the fully classical finite difference method (FDM) with the central differencing scheme.

\begin{figure*}[t]
    \centering
    \includegraphics[width=0.8\textwidth]{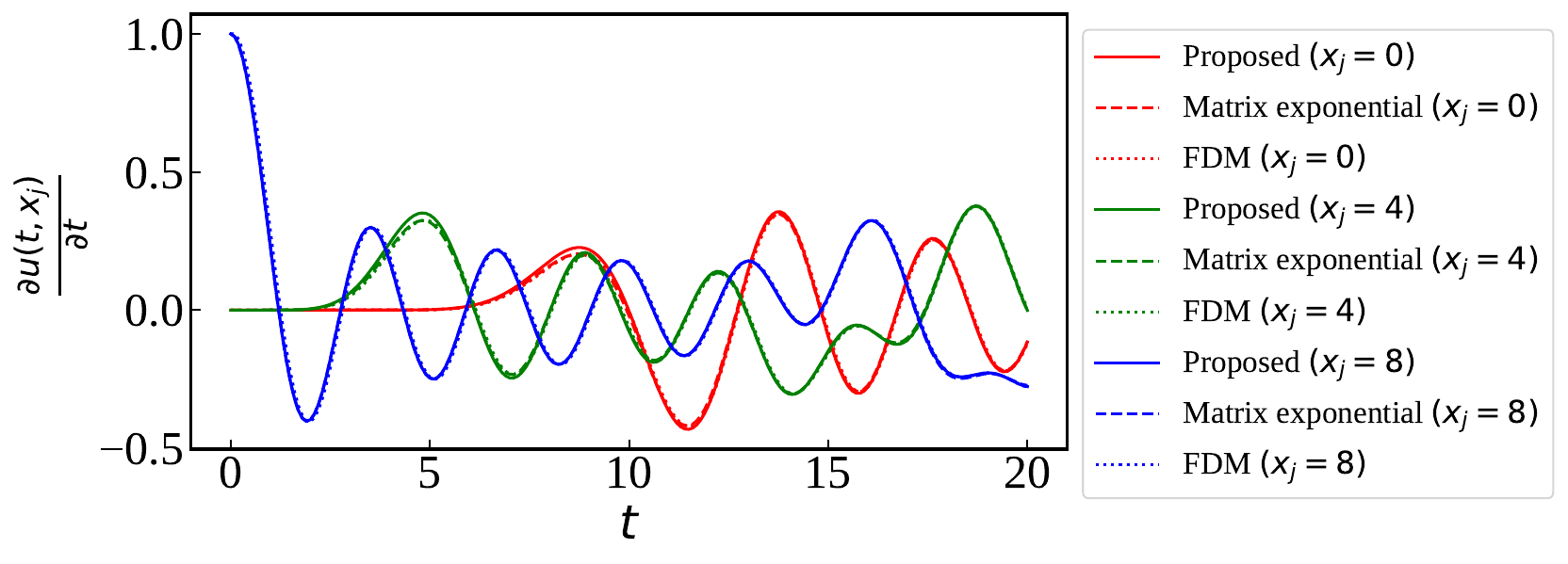}
    \caption{Simulation results of the wave equation with the mixed BC in one dimension. Solid and dashed lines represent the amplitudes of the quantum state $\ket{\psi(t)}$ prepared by the proposed circuits and that obtained by direct applying the matrix exponential operator $e^{-i \mathcal{H} \tau}$ to $\ket{\psi(0)}$, respectively. Dotted lines represent the solutions calculated by the finite difference method (FDM).}
    \label{fig:wave_1d}
\end{figure*}
\begin{figure*}[t]
    \centering
    \includegraphics[width=0.8\textwidth]{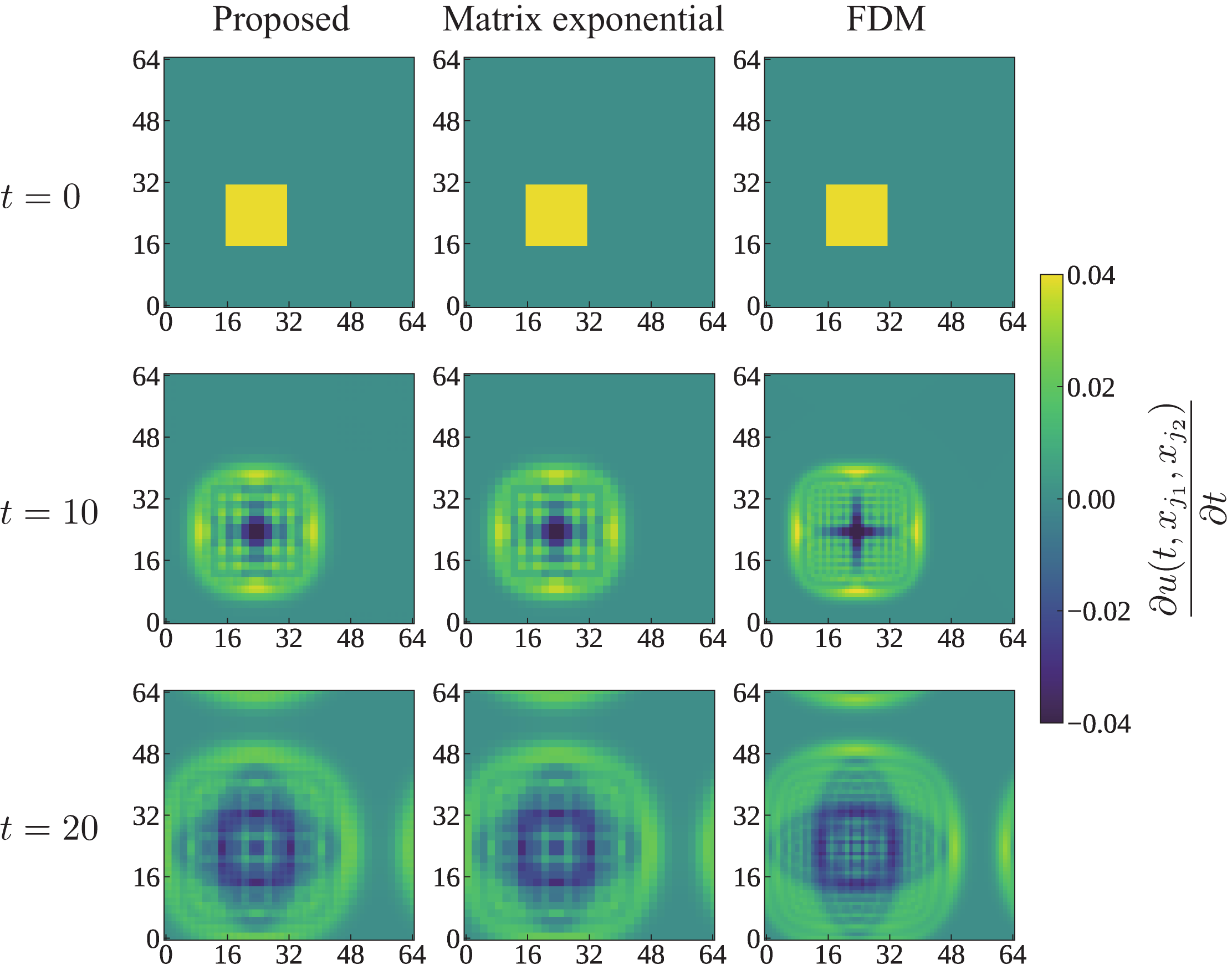}
    \caption{Simulation results of the wave equation with the periodic BC in two dimensions. Each plot in the column represents the scalar field $\partial u / \partial t$ calculated by each approach.}
    \label{fig:wave_2d}
\end{figure*}

Figure~\ref{fig:wave_1d} illustrates the simulation results of the wave equation with the mixed BC in one dimension, specifically the Dirichlet BC for the left side and the Neumann BC for the right side.
We set $n=4$, $\tau=0.1$, $T=20$, $l=1$ and $c=1$ to setup the problem and used \texttt{statevector\_simulator} to run quantum circuits.
The number of qubits for encoding this problem is $n+1=5$. 
We use a time increment parameter $0.1$ for the FDM.
As an initial state, we prepare
\begin{align}
    \ket{\psi(0)} = \ket{0} \otimes \ket{1} \otimes \ket{0}^{\otimes (n-1)},
\end{align}
which corresponds to the initial condition
\begin{align}
    &u(0, x_j) = 0 \\
    &\pdif{u(0, x_j)}{t} = 
    \begin{cases}
        1 & \text{ for } j = 2^{n-1} \\
        0 & \text{ otherwise}.
    \end{cases}
\end{align}
Solid, dashed, and dotted lines represent components of solutions corresponding to $\partial u(t, x_j) / \partial t$ prepared by the proposed circuits described in Section~\ref{sec:wave_sch}, the direct multiplication of the matrix exponential operator by $\ket{\psi(0)}$, and the FDM, respectively.
Figure~\ref{fig:wave_1d} clearly illustrates that the solutions obtained by the three approaches agree well, demonstrating that the wave equation could be simulated as the Schr\"{o}dinger equation as well and actually be implemented on quantum circuits under acceptable Trotter errors.

\begin{figure*}[t]
    \centering
    \includegraphics[width=0.8\textwidth]{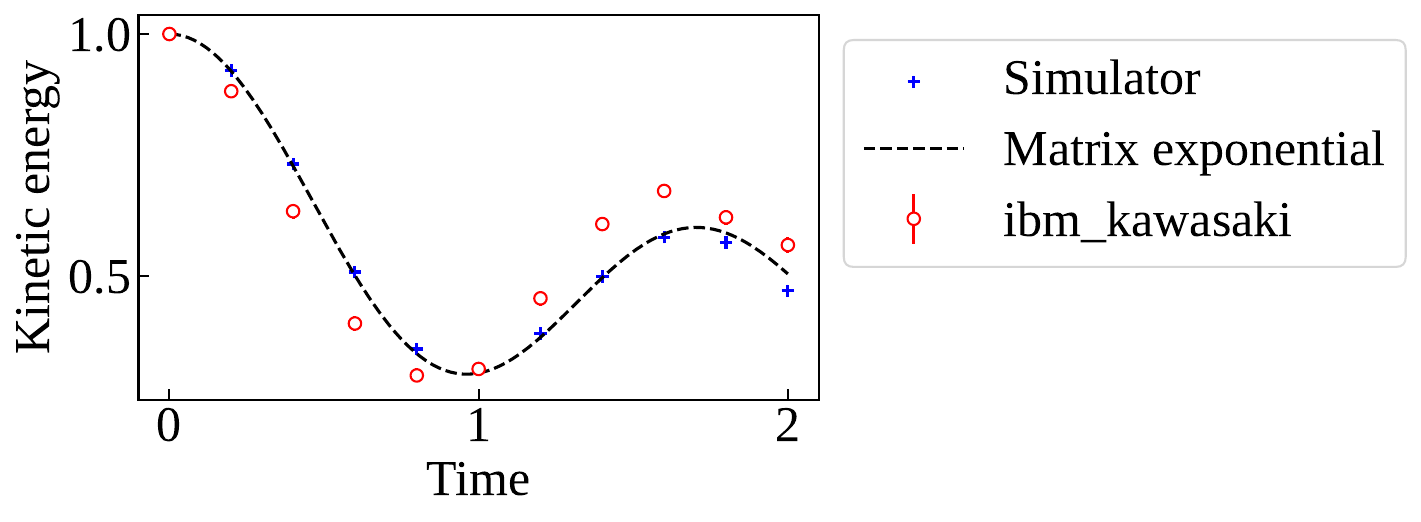}
    \caption{Experimental results of the wave equation with the mixed BC in one dimension by \texttt{ibm\_kawasaki}. 
    Red circle represents the expectation value estimated by 4000 shots along with the error bar representing the 95\% confidence interval of the expectation value under the shot noise. Note that Error bars are small enough to be visible.
    Blue $+$ represents the result of \texttt{statevector\_simulator}.
    Black dashed line represents results obtained by the direct multiplication of the matrix exponential operator $e^{-i \mathcal{H} \tau}$ by $\ket{\psi(0)}$.}
    \label{fig:wave_1d_realdevice}
\end{figure*}

Figure~\ref{fig:wave_2d} illustrates the simulation results of the wave equation with the periodic BC in two dimensions.
We set $n=6$, $\tau=0.1$, $T=20$, $l=1$ and $c=1$ to setup the problem and used \texttt{statevector\_simulator} to run quantum circuits.
The number of qubits to implement this problem is $2n+1=13$.
Here, we used the central difference operator $D^{\pm}$ to define the Hamiltonian.
We use a time increment parameter $0.1$ for the FDM.
As an initial state, we prepare
\begin{align}
    &\ket{\psi(0)} \nonumber \\
    &= \ket{0} \otimes \frac{1}{2^{n-2}} \sum_{j_1=0}^{2^{n-2}-1} \sum_{j_2=0}^{2^{n-2}-1} \ket{0} \ket{1} \ket{j_1} \ket{0} \ket{1} \ket{j_2},
\end{align}
which reflectes the initial condition of 
\begin{align}
    &u(t, x_{j_1}, x_{j_2}) = 0 \\
    &\pdif{u(t, x_{j_1}, x_{j_2})}{t} = \begin{cases}
        1 & \text{ for } 2^{n-2} \leq x_{j_1}, x_{j_2} < 2^{n-1}\\
        0 & \text{ otherwise}.
    \end{cases}
\end{align}
Figure~\ref{fig:wave_2d} also include this initial state as the amplitude at $t=0$ to visualize the dynamics from the initial state.
The left and center columns in Fig.~\ref{fig:wave_2d} represent components of solutions corresponding to $\partial u(t, x_{j_1}, x_{j_2}) / \partial t$ obtained by the proposed quantum circuit and the direct calculation of the matrix exponential operator, respectively.
The right column represents solutions calculated by the FDM.
Each row represents the solutions at $t=0$, $t=10$ and $t=20$, respectively.
The results of left and center columns clearly coincide, implying that the proposed quantum circuit could accurately simulate the Schr\"{o}dinger equation derived from the wave equation.
However, these result slightly differ from that by the FDM; the results by the proposed method look rougher than those by the FDM.
This comes from the difference of finite difference schemes.
For the fully classical FDM, we discretized the Laplacian $\nabla^2$ of the wave equation by $D^{\Delta}$ while for the Schr\"{o}dinger equation derived from the wave equation, we discretized the gradient operator $\nabla$ by $D^{\pm}$, which corresponds to discretizing the Laplacian $\nabla^2$ by $(D^{\pm})^2$.
Thus, the discretized equations do not exactly match and accordingly have errors that can be reduced by decreasing the spatial interval $l$.
Although there exists such numerical error, overall, the results well grasp the property of wave propagation.

Finally, we provide an experimental result conducted on a real quantum device.
We here again study the wave equation with the mixed BC in one dimension setting $n=2$, $\tau=0.2$, $T=2$, $l=1$ and $c=1$. 
The number of qubits for encoding this problem is $n+1=3$.
Although the number of qubits used here is quite small due to the limitation of noise resilience of current hardware, we show the experimental result to demonstrate that our quantum algorithm is implementable on a current device owing to the explicit construction of quantum circuits by elementary gates of the hardware.
We compare the simulation results obtained from a real device, \texttt{statevector\_simulator}, and the direct multiplication of the matrix exponential operator $e^{-i \mathcal{H} \tau}$ by $\ket{\psi(0)}$.
Although we illustrated all probability amplitudes of quantum states prepared by each approach in the preceding results for validation, accessing all amplitudes is unrealistic for practical use because of the need of the quantum state tomography as we mentioned in Section~\ref{sec:complexity}.
Here, we evaluated the expectation value of the following specific observable:
\begin{align}
    \mathcal{O} = \frac{1}{2} (Z + I) \otimes I^{\otimes n},
\end{align}
which is the kinetic energy of the system.
As an initial state, we simply used
\begin{align}
    \ket{\psi(0)} = \ket{0}^{\otimes n} \otimes \ket{1},
\end{align}
which corresponds to the initial condition
\begin{align}
    &u(0, x_j) = 0 \\
    &\pdif{u(0, x_j)}{t} = 
    \begin{cases}
        1 & \text{ for } j = 1 \\
        0 & \text{ otherwise}.
    \end{cases}
\end{align}

Figure~\ref{fig:wave_1d_realdevice} illustrates the experimental results.
We used 4000 shots to estimate the expectation $\braket{\psi(t) | \mathcal{O} | \psi(t)}$ at each time.
We used the dynamical decoupling (DD) techniques~\cite{Ezzell2023, pokharel2018demonstration, jurcevic2021demonstration} with super-Hahn echo~\cite{Hahn1950, Ezzell2023} to suppress the decoherence, and also used the readout error mitigation technique, specifically Twirled Readout Error eXtinction (TREX)~\cite{van2022model}.
The number of non-local gates in the quantum circuit was 120 at the maximum simulation time $t=2$ for this problem.
Note that error bars representing the 95\% confidence interval of the expectation value under the shot noise are small enough to be visible since the number of shots, 4000, is large enough to estimate the observable of the one-local Pauli operator.
That is, the difference between solutions obtained by the real device (red circles) and the simulator (blue $+$ symbols) is attributed to the hardware noise and the error mitigation.
Nonetheless, Fig.~\ref{fig:wave_1d_realdevice} clearly illustrates that the solutions obtained by the real device, \texttt{ibm\_kawasaki}, captures the result of simulator, which demonstrated that proposed method actually be implementable on a real device.

\section{Conclusions} \label{sec:conclusion}

This paper proposed a scalable quantum circuit implementation method of Hamiltonian simulation for partial differential equations.
The key technique for efficiently implementing the time evolution operators on a circuit is to diagonalize each term of Hamiltonian using the Bell basis.
We provided concrete quantum circuit representation for Hamiltonian simulation driven by differential operators along with the space and time complexity estimation.
For demonstrating the validity of the proposed method, we focused on two partial differential equations, namely the advection and wave equations, which were converted into the Schr\"{o}dinger equation to be simulated by Hamiltonian simulation. 
In numerical experiments, we confirmed that the solutions obtained by the proposed method agreed well with the solution calculated by the fully classical finite difference method. 
This means that the advection and wave equations could be simulated as the Schr\"{o}dinger equation and actually be implemented on quantum circuits under acceptable Trotter errors.

In the present study, we focused on the advection and wave equations, because they are conservative under appropriate BCs and thus can be exactly converted to the Schr\"{o}dinger equation.
However, because most practical systems are non-conservative, it is required to extend the present approach to systems which cannot directly be described as the Schr\"{o}dinger equation.
Possible approaches are, for example, to describe the target systems as an imaginary time evolution of the Schr\"{o}dinger equation~\cite{kosugi2022imaginary, leadbeater2023non} or as the open quantum system governed by the Lindblad equation~\cite{schlimgen2022quantum}, both of which realize the non-unitary time evolution.
We would like to conduct our future research toward such direction.
Also, we would like to investigate the relationship among BCs, finite difference schemes, and the resulting Hermitian property in our future research.

\section*{Acknowledgment}
This work is supported by MEXT Quantum Leap Flagship Program Grant Number JPMXS0118067285 and JP- MXS0120319794, and JSPS KAKENHI Grant Number 20H05966. 
We acknowledge the use of IBM Quantum services for this work. The views expressed are those of the authors, and do not reflect the official policy or position of IBM or the IBM Quantum team.
We would like to thank Takashi Imamichi and Jun Doi of IBM Quantum, IBM Research -- Tokyo for the fruitful discussions on theoretical analysis and implementation.

\onecolumngrid
\appendix
\section{Finite difference operators of higher-order approximations} \label{sec:higher-order-FDM}
Using shift operators $S^{-}$ and $S^{+}$ in Eqs.~\eqref{eq:shift_m} and \eqref{eq:shift_p}, respectively, we can construct finite difference operators of higher-order approximations.
For instance, the second-order forward difference operator for the first derivative, denoted by $D^{+}_2$ acting on the nodal values $\bm{u} \in \mathbb{R}^{2^n}$ as 
\begin{align}
    \left( D^{+}_2 \bm{u} \right)_j = \frac{-3u_j + 4u_{j+1} - u_{j+2}}{2l},
\end{align}
can be represented as
\begin{align}
    D^{+}_2 = \frac{1}{2l} \left(-3 I^{\otimes n} + 4 S^{-} - (S^{-})^2 \right),
\end{align}
which is a qubit operator of an $n$-qubit system.
Actually, the following relationship holds:
\begin{align}
    D^{+}_2 \ket{u} &= \frac{1}{2l} \left(-3 I^{\otimes n} + 4 S^{-} - (S^{-})^2 \right) \sum_{j=0}^{2^n-1} u_j \ket{j} \nonumber \\
    &= \sum_{j=0}^{2^n-1} \frac{-3u_j + 4u_{j+1} - u_{j+2}}{2l} \ket{j},
\end{align}
where $u_{2^n} = u_{2^n+1} = 0$, which exactly corresponds to the second-order forward difference operator for the first derivative with the Dirichlet boundary condition.
Similarly, the qubit operator for the second-order backward difference operator for the first derivative, denoted by $D^{-}_2$, is represented as
\begin{align}
    D^{-}_2 = \frac{1}{2l} \left(3 I^{\otimes n} - 4 S^{+} + (S^{+})^2 \right).
\end{align}
Other finite difference operators for the higher-order derivatives can also be constructed using the identity operator and the shift operators $S^{-}$ and $S^{+}$.

\section{Self-adjointness of the operator $\mathcal{H}$} \label{sec:self-adjoint}
\subsection{Operator of the Advection equation} \label{sec:self-adjoint_advect}
Let $u$ and $\tilde{u}$ be scalar fields in the Hilbert space $\mathcal{U}$ defined on a domain $\Omega$.
Applying the integration by parts, we obtain the following relationship about the inner product in the space $\mathcal{U}$: 
\begin{align} \label{eq:inner_product_advect}
    \left\langle \tilde{u}, \left(-i \bm{v} \cdot \nabla \right) u \right\rangle 
    &= \int_\Omega \tilde{u}^\ast \left(-i \bm{v} \cdot \nabla \right) u \mathrm{d} \bm{x} \nonumber \\
    &= \int_{\partial \Omega} \left( -i \bm{n} \cdot \bm{v} \tilde{u}^\ast u \right) \mathrm{d} \bm{x} + \int_\Omega \left( i \bm{v} \cdot \nabla \right) \tilde{u}^\ast u \mathrm{d} \bm{x} \nonumber \\
    &= \int_{\partial \Omega} \left( -i \bm{n} \cdot \bm{v} \tilde{u}^\ast u \right) \mathrm{d} \bm{x} + \left\langle \left(-i \bm{v} \cdot \nabla \right)\tilde{u},  u \right\rangle,
\end{align}
where $\ast$ represents the complex conjugate and $\bm{n}$ is the normal vector pointing outward at the boundary.
If the first term in the last line is zero, the operator $-i \bm{v} \cdot \nabla$ is self-adjoint.
This is the case when the domain $\Omega$ is a unit cell and the periodic BC is imposed on the boundaries that satisfy $\bm{n} \cdot \bm{v} \neq 0$, i.e.,
\begin{align}
    \mathcal{U} = \{ u \in H^1 (\Omega) ~|~ u \text{ is periodic on } \partial \Omega \text{ s.t. } \bm{n} \cdot \bm{v} \neq 0 \},
\end{align}
where $H^1(\Omega)$ is the Sobolev space defined on the domain $\Omega$.
The first term of the last line in Eq.~\eqref{eq:inner_product_advect} also vanishes under the Dirichlet BC, i.e., 
\begin{align}
    \mathcal{U} = \{ u \in H^1 (\Omega) ~|~ u = 0 \text{ on } \partial \Omega \}.
\end{align}
For such $\mathcal{U}$, however, only trivial solution $u = 0 \in \mathcal{U}$ satisfies the advection equation.

\subsection{Operator of the wave equation} \label{sec:self-adjoint_wave}
Let $\bm{\psi}(t, \bm{x})$ and $\tilde{\bm{\psi}}(t, \bm{x})$ be vector fields defined in Sec.~\ref{sec:wave_sch}.
Applying the integration by parts, we obtain the following relationship about the inner product:
\begin{align} \label{eq:inner_product_wave}
    \left\langle \tilde{\bm{\psi}}, \mathcal{H} \bm{\psi} \right\rangle 
    &= \int_0^T \int_\Omega \tilde{\bm{\psi}}^\dagger \mathcal{H} \bm{\psi} \mathrm{d} \bm{x} \mathrm{d} t \nonumber \\
    &= \int_0^T \int_\Omega ic^2 \left( \pdif{\tilde{u}}{t}^\ast \nabla^2 u + \nabla \tilde{u}^\ast \cdot \nabla \pdif{u}{t} \right) \mathrm{d} \bm{x} \mathrm{d} t \nonumber \\
    &= \int_0^T \int_{\partial \Omega} ic^2 \left( \pdif{\tilde{u}}{t}^\ast \bm{n} \cdot \nabla u + \bm{n} \cdot \nabla \tilde{u}^\ast \pdif{u}{t} \right) \mathrm{d} \bm{x} \mathrm{d} t - \int_0^T \int_\Omega ic^2 \left( \nabla \pdif{\tilde{u}}{t}^\ast \cdot \nabla u + \nabla^2 \tilde{u}^\ast \pdif{u}{t} \right) \mathrm{d} \bm{x} \mathrm{d} t \nonumber \\
    &= \int_0^T \int_{\partial \Omega} ic^2 \left( \pdif{\tilde{u}}{t}^\ast \bm{n} \cdot \nabla u + \bm{n} \cdot \nabla \tilde{u}^\ast \pdif{u}{t} \right) \mathrm{d} \bm{x} \mathrm{d} t + \int_0^T \int_\Omega \left( \mathcal{H} \tilde{\bm{\psi}} \right)^\dagger \bm{\psi} \mathrm{d} \bm{x} \mathrm{d} t \nonumber \\
    &= \int_0^T \int_{\partial \Omega} ic^2 \left( \pdif{\tilde{u}}{t}^\ast \bm{n} \cdot \nabla u + \bm{n} \cdot \nabla \tilde{u}^\ast \pdif{u}{t} \right) \mathrm{d} \bm{x} \mathrm{d} t + \left\langle \mathcal{H} \tilde{\bm{\psi}}, \bm{\psi} \right\rangle,
\end{align}
where $\ast$ is the complex conjugate and $\bm{n}$ is the normal vector pointing outward at the boundary.
If the first term in the last line is zero, the operator $\mathcal{H}$ in Sec.~\ref{sec:wave_sch} for the wave equation is self-adjoint.
This term vanishes when the domain $\Omega$ is a unit cell and the periodic BC is imposed on the boundaries, i.e., $u \in \mathcal{U}$ where
\begin{align}
    \mathcal{U} = \{ u \in H^2 (\Omega) ~|~ u \text{ and } \bm{n} \cdot \nabla u \text{ is periodic on } \partial \Omega \},
\end{align}
where $H^2(\Omega)$ is the Sobolev space defined on the domain $\Omega$.
The condition for the self-adjointness of the operator $\mathcal{H}$ is also satisfied under the mixed BCs of the Dirichlet and Neumann BCs for $u$, i.e., when $u \in \mathcal{U}$ where
\begin{align}
    \mathcal{U} = \{ u \in H^2 (\Omega) ~|~ u = 0 \text{ on } \Gamma_\mathrm{D} \text{ and } \bm{n} \cdot \nabla u = 0 \text{ on } \Gamma_\mathrm{N} \},
\end{align}
with $\Gamma_\mathrm{D} \cup \Gamma_\mathrm{N} = \partial \Omega$, $\Gamma_\mathrm{D} \neq \emptyset$ and $\Gamma_\mathrm{N} \neq \emptyset$.

\section{Proofs of Lemmas~\ref{lem:diff_circ} and \ref{lem:diff_circ_d-dim}} \label{sec:proof_lemma}
Here, we provide details of proofs of Lemmas~\ref{lem:diff_circ} and \ref{lem:diff_circ_d-dim}.
For convenience, we first recall Lemma~\ref{lem:diff_circ}.
\setcounter{lem}{0}
\begin{lem} 
    Consider the Schr\"{o}dinger equation 
    $d \ket{u(t)}/dt = -i \mathcal{H} \ket{u(t)}$ such that
    the Hamiltonian $\mathcal{H}$ is given by Eq.~\eqref{eq:hamiltonian_diffop}.
    The time evolution operator $\exp(-i \mathcal{H} \tau)$ with the time increment $\tau$ can be approximated by the unitary $V$ in Eq.~\eqref{eq:trotter_general}, and its explicit circuit implementation is shown in Fig.~\ref{fig:circuit}. 
    Moreover, the approximation error in the sense of the operator norm is upper bounded as 
    \begin{align}
        \left\Vert \exp(-i \mathcal{H} \tau) - V( \gamma \tau, \lambda) \right\Vert &\leq \frac{\gamma^2 \tau^2 (n-1)}{2}.
    \end{align}
\end{lem}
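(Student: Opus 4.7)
The plan is to combine the Bell-basis diagonalization identity already derived in Eq.~\eqref{eq:shift_general} with the standard Lie-Trotter-Suzuki error bound of Ref.~\cite{childs2021theory}. The identity tells us that each summand $\gamma(e^{i\lambda}s^-_j + e^{-i\lambda}s^+_j)$ of $\mathcal{H}$ equals $\gamma I^{\otimes(n-j)} \otimes U_j(-\lambda)(Z \otimes \ket{1}\!\bra{1}^{\otimes(j-1)}) U_j(-\lambda)^\dagger$. Because exponentiation commutes with unitary conjugation, the factor $W_j(\gamma\tau,\lambda)$ is precisely the exponential of the $j$-th summand and admits the explicit circuit of Fig.~\ref{fig:circuit}. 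Consequently $V(\gamma\tau,\lambda)=\prod_{j=1}^n W_j(\gamma\tau,\lambda)$ is the first-order Lie-Trotter-Suzuki approximant of $\exp(-i\mathcal{H}\tau)$, and the implementation claim of the lemma is immediate.

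Next, I would invoke Proposition~9 of Ref.~\cite{childs2021theory}, which bounds the first-order Trotter error in operator norm by $\tfrac{1}{2}\tau^2$ times the sum of operator norms of commutators of ordered pairs of Hamiltonian summands. The remaining task is therefore a pairwise evaluation of the commutators $[e^{i\lambda}s^-_j+e^{-i\lambda}s^+_j,\,e^{i\lambda}s^-_{j'}+e^{-i\lambda}s^+_{j'}]$. The key algebraic input is the nilpotency $\sigma_{01}^2=\sigma_{10}^2=0$: whenever $j,j'\ge 2$ with $j\neq j'$, every product arising in the expansion of the commutator carries, on some overlapping qubit position $m\le\min(j,j')-1$, a factor of the form $\sigma_{01}\sigma_{01}$ or $\sigma_{10}\sigma_{10}$, which collapses the entire tensor product to zero. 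Hence the commutator vanishes for all such pairs, leaving only the $n-1$ pairs of the form $(j,1)$ with $j=2,\dots,n$.

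For the surviving case $j'=1$, I would expand the commutator into four cross-terms weighted by $e^{\pm 2i\lambda}$ and~$1$. The same nilpotency kills the two cross-terms that place $\sigma_{01}\sigma_{01}$ or $\sigma_{10}\sigma_{10}$ on qubit~$1$ (namely $[s^-_j,s^+_1]$ and $[s^+_j,s^-_1]$), while the remaining two cross-terms each reduce to a tensor product with the factor $\sigma_{11}-\sigma_{00}=-Z$ on qubit~$1$ and partial-isometry factors on the other qubits. Crucially, the two surviving pieces have orthogonal supports, distinguished by whether qubit~$j$ is in state $\ket{0}$ or $\ket{1}$, so the operator norm of their sum is not the naive triangle-inequality bound but equals exactly~$1$. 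Summing this unit contribution over the $n-1$ active pairs and multiplying by $\tfrac{1}{2}\gamma^2\tau^2$ gives the stated bound.

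The principal obstacle I anticipate is precisely this norm evaluation in the $j'=1$ case: one must resist blindly applying the triangle inequality to the four cross-terms, recognize that only two survive after nilpotent cancellation, and verify that these two pieces act on orthogonal subspaces of the Hilbert space before declaring the norm to be~$1$. Once that is in hand, the remaining estimates collapse into a single line and yield the linear-in-$n$ Trotter error scaling characteristic of the Bell-basis grouping.
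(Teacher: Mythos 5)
Your proposal follows essentially the same route as the paper's proof: the Bell-basis identity of Eq.~\eqref{eq:shift_general} gives the circuit, Proposition~9 of Ref.~\cite{childs2021theory} reduces the Trotter error to pairwise commutators, nilpotency of the ladder operators kills every pair except the $n-1$ pairs $(j,1)$, and each surviving commutator has operator norm one, giving $\gamma^2\tau^2(n-1)/2$. The only cosmetic difference is that you certify the unit norm via the orthogonal supports of the two surviving cross-terms, whereas the paper rediagonalizes the commutator in a second Bell basis as a unitary conjugate of $Z\otimes\ket{1}\!\bra{1}^{\otimes(j-2)}\otimes Z$; both yield exactly $1$.
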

\begin{proof}
    As mentioned in the main text, our task is to evaluate the commutators of the Hamiltonian.
    For $n>j>j'>1$, the terms of the shift operators $e^{i\lambda} s^{-}_j + e^{-i\lambda}s^{+}_j$ and $e^{i\lambda} s^{-}_{j'} + e^{-i\lambda} s^{+}_{j'}$ commute as
    \begin{align}
        \left[ e^{i\lambda} s^{-}_j + e^{-i\lambda} s^{+}_j, e^{i\lambda} s^{-}_{j'} + e^{-i\lambda} s^{+}_{j'} \right] & = \left[ e^{i\lambda} I^{\otimes (n-j)} \otimes \sigma_{01} \otimes \sigma_{10}^{\otimes (j-1)} + e^{-i\lambda} I^{\otimes (n-j)} \otimes \sigma_{10} \otimes \sigma_{01}^{\otimes (j-1)}, \right. \nonumber \\
        & \left. \quad e^{i\lambda} I^{\otimes (n-j')} \otimes \sigma_{01} \otimes \sigma_{10}^{\otimes (j'-1)} + e^{-i\lambda} I^{\otimes (n-j')} \otimes \sigma_{10} \otimes \sigma_{01}^{\otimes (j'-1)} \right] \nonumber \\
        &= 0.
    \end{align}
    For $j>j'=1$, the commutator is rearranged as
    \begin{align} \label{eq:commutator_1}
        & \left[ e^{i\lambda} s^{-}_j + e^{-i\lambda} s^{+}_j, e^{i\lambda} s^{-}_1 + e^{-i\lambda} s^{+}_1 \right] \nonumber \\
        & = \left[ e^{i\lambda} I^{\otimes (n-j)} \otimes \sigma_{01} \otimes \sigma_{10}^{\otimes (j-1)} + e^{-i\lambda} I^{\otimes (n-j)} \otimes \sigma_{10} \otimes \sigma_{01}^{\otimes (j-1)}, e^{i\lambda} I^{\otimes (n-1)} \otimes \sigma_{01} + e^{-i\lambda} I^{\otimes (n-1)} \otimes \sigma_{10} \right] \nonumber \\
        &= - I^{\otimes (n-j)} \otimes \left( e^{2i\lambda} \sigma_{01} \otimes \sigma_{10}^{\otimes (j-2)} - e^{-2i\lambda} \sigma_{10} \otimes \sigma_{01}^{\otimes (j-2)} \right) \otimes Z \nonumber \\
        &= i I^{\otimes (n-j)} \otimes \left( \frac{ \ket{0} \! \ket{1}^{\otimes (j-2)} + e^{-i(2\lambda + \pi/2)} \ket{1} \! \ket{0}^{\otimes (j-2)} }{\sqrt{2}} \frac{ \bra{0} \! \bra{1}^{\otimes (j-2)} + e^{i(2\lambda + \pi/2)} \bra{1} \! \bra{0}^{\otimes (j-2)} }{\sqrt{2}} \right. \nonumber \\
        & \qquad \qquad \qquad \qquad \left. - \frac{ \ket{0} \! \ket{1}^{\otimes (j-2)} - e^{-i(2\lambda + \pi/2)} \ket{1} \! \ket{0}^{\otimes (j-2)} }{\sqrt{2}} \frac{ \bra{0} \! \bra{1}^{\otimes (j-2)} - e^{i(2\lambda + \pi/2)} \bra{1} \! \bra{0}^{\otimes (j-2)} }{\sqrt{2}} \right) \otimes Z \nonumber \\
        &= iI^{\otimes (n-j)} \otimes \left( U_{j-1}(-2\lambda-\pi/2) \left( Z \otimes \ket{1} \! \bra{1}^{\otimes (j-2)} \right) U_{j-1}(-2\lambda-\pi/2)^\dagger \right) \otimes Z.
    \end{align}
    Thus, the terms of the Hamiltonian, $\gamma(e^{i\lambda} s^{-}_j + e^{-i\lambda} s^{+}_j)$, can be grouped into those for $j>1$ and $j=1$.
    Since the unitary does not change the operator norm, the Trotter error is upper bounded using the result of Ref.~\cite[Proposition 9]{childs2021theory}, as follows: %
    \begin{align}
        \left\Vert \exp(-i \mathcal{H} \tau) - V( \gamma \tau, \lambda ) \right\Vert
        &\leq \frac{\gamma^2 \tau^2}{2} \left\Vert \left[ \sum_{j=2}^n \left( e^{i\lambda} s^{-}_j + e^{-i\lambda} s^{+}_j \right), e^{i\lambda} s^{-}_1 + e^{-i\lambda} s^{+}_1 \right] \right\Vert \nonumber \\
        &\leq \frac{\gamma^2 \tau^2}{2} \sum_{j=2}^n \left\Vert \left[  e^{i\lambda} s^{-}_j + e^{-i\lambda} s^{+}_j, e^{i\lambda} s^{-}_1 + e^{-i\lambda} s^{+}_1 \right] \right\Vert \nonumber \\
        &= \frac{\gamma^2 \tau^2 (n-1)}{2}
    \end{align}
\end{proof}

Next, we recall Lemma~\ref{lem:diff_circ_d-dim}.
\setcounter{lem}{3}
\begin{lem}
    Let the Hamiltonian $\mathcal{H}$ consist of finite difference operators for a $dn$-qubit system as
    \begin{align} %
        \mathcal{H} = \gamma \sum_{\alpha=1}^d \sum_{j=1}^n \eta_\alpha \left( e^{i\lambda_\alpha} (s^{-}_j)_\alpha + e^{-i\lambda_\alpha} (s^{+}_j)_\alpha  \right),
    \end{align}
    where $\gamma \in \mathbb{R}$ is a scale parameter, $\lambda_\alpha \in \mathbb{R}$ is the phase parameter and 
    \begin{align}
        (s^\mu_j)_\alpha = I^{\otimes (\alpha-1)n} \otimes s^\mu_j \otimes I^{\otimes (d-\alpha)n},
    \end{align}
    for $\mu \in \{ -, + \}$.
    The time evolution operator $\exp(-i \mathcal{H} \tau)$ with the time increment $\tau$ can be approximated by the unitary $\bigotimes_{\alpha=1}^d V(\gamma \eta_\alpha \tau, \lambda_\alpha)$. 
    The approximation error is upper bounded in the sense of the operator norm as
    \begin{align}
        \left\Vert \exp \left(-i \mathcal{H} \tau \right) - \bigotimes_{\alpha=1}^d V(\gamma \eta_\alpha \tau, \lambda_\alpha) \right\Vert \leq \frac{\gamma^2 \tau^2 (n-1) \sum_{\alpha=1}^d \eta_\alpha^2}{2}.
    \end{align}
\end{lem}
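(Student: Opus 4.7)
The plan is to reduce the $d$-dimensional claim to the one-dimensional Lemma~\ref{lem:diff_circ} by exploiting the fact that terms acting on different coordinate blocks commute exactly. First I would split the Hamiltonian according to the spatial direction, writing $\mathcal{H}=\sum_{\alpha=1}^d \mathcal{H}_\alpha$ with $\mathcal{H}_\alpha := \gamma\eta_\alpha \sum_{j=1}^n\bigl(e^{i\lambda_\alpha}(s^{-}_j)_\alpha + e^{-i\lambda_\alpha}(s^{+}_j)_\alpha\bigr)$. Because $(s^\mu_j)_\alpha$ and $(s^\nu_{j'})_{\alpha'}$ act nontrivially on disjoint $n$-qubit registers when $\alpha\neq\alpha'$, they commute as operators on the full $dn$-qubit space, so $[\mathcal{H}_\alpha,\mathcal{H}_{\alpha'}]=0$ whenever $\alpha\neq\alpha'$. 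This gives the exact factorization
\begin{equation}
    e^{-i\mathcal{H}\tau} = \prod_{\alpha=1}^{d} e^{-i\mathcal{H}_\alpha \tau} = \bigotimes_{\alpha=1}^{d} e^{-i\gamma\eta_\alpha \tau \sum_{j=1}^n (e^{i\lambda_\alpha} s^{-}_j + e^{-i\lambda_\alpha} s^{+}_j)},
\end{equation}
with no cross-direction Trotter error incurred at this stage.

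Next, for each fixed $\alpha$ I would apply Lemma~\ref{lem:diff_circ} to the single-register factor with the substitution $\gamma \mapsto \gamma\eta_\alpha$ and $\lambda \mapsto \lambda_\alpha$, yielding
\begin{equation}
    \bigl\Vert e^{-i\mathcal{H}_\alpha \tau} - I^{\otimes(\alpha-1)n} \otimes V(\gamma\eta_\alpha\tau,\lambda_\alpha) \otimes I^{\otimes(d-\alpha)n} \bigr\Vert \leq \frac{\gamma^2 \eta_\alpha^2 \tau^2 (n-1)}{2},
\end{equation}
using that tensoring with identity preserves the operator norm.

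Finally, I would combine the $d$ factors using the standard telescoping bound: for unitaries $A_\alpha,B_\alpha$, one has $\bigl\Vert \prod_\alpha A_\alpha - \prod_\alpha B_\alpha \bigr\Vert \leq \sum_\alpha \Vert A_\alpha - B_\alpha\Vert$. Applying this with $A_\alpha = e^{-i\mathcal{H}_\alpha\tau}$ and $B_\alpha = I^{\otimes(\alpha-1)n}\otimes V(\gamma\eta_\alpha\tau,\lambda_\alpha)\otimes I^{\otimes(d-\alpha)n}$ (which are all unitary) and noting that $\prod_\alpha B_\alpha = \bigotimes_\alpha V(\gamma\eta_\alpha\tau,\lambda_\alpha)$, summing the per-direction bounds gives exactly $\gamma^2\tau^2(n-1)\sum_\alpha \eta_\alpha^2 /2$.

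The only substantive steps are the commutation observation for different directions and the telescoping step; neither is subtle. The main thing to be careful about is ensuring that the factor $V(\gamma\eta_\alpha\tau,\lambda_\alpha)$, defined in Lemma~\ref{lem:diff_circ} on an $n$-qubit register, is correctly lifted with identity padding so that the tensor product $\bigotimes_\alpha V(\gamma\eta_\alpha\tau,\lambda_\alpha)$ agrees with the ordered product of the padded unitaries. Since the padded unitaries act on disjoint registers they commute, so the order of the product is immaterial, and the identification with the tensor product is immediate.
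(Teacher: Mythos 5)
Your proof is correct, and it reaches the stated bound by a route that is organized differently from the paper's. The paper treats the full $dn$-qubit Hamiltonian as a single Lie--Trotter decomposition and invokes the commutator bound of Childs et al.\ (Proposition 9) directly on the $d$-dimensional collection of terms, then uses the vanishing of cross-direction commutators to collapse the double sum over $\alpha,\alpha'$ to a single sum, reproducing the one-dimensional commutator computation inside that larger bound. You instead use the commutation of $\mathcal{H}_\alpha$ and $\mathcal{H}_{\alpha'}$ at the level of the exponentials to factorize $e^{-i\mathcal{H}\tau}$ exactly into a product over directions, apply Lemma~\ref{lem:diff_circ} as a black box to each factor with $\gamma\mapsto\gamma\eta_\alpha$, and then combine via the standard telescoping inequality for unitaries. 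The two arguments rest on the same two facts (inter-register commutation and the one-dimensional Trotter bound), but yours is more modular: it never re-examines the commutator structure in $d$ dimensions and would carry over unchanged if the per-direction error bound were improved. One minor side benefit of your version is that it avoids a cosmetic wrinkle in the paper's chain of inequalities, where the triangle inequality produces factors $\eta_\alpha\eta_{\alpha'}$ that should strictly be $|\eta_\alpha\eta_{\alpha'}|$ before the cross terms are discarded; in your argument only the squares $\eta_\alpha^2$ ever appear. Your attention to the identity padding and to the irrelevance of the ordering of the commuting padded unitaries is exactly the right level of care.
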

\begin{proof}
    From Eq.~\eqref{eq:shift_general}, we obtain
    \begin{align}
        \exp \left(-i \mathcal{H} \tau \right) &= \exp \left( -i \gamma \tau \sum_{\alpha=1}^d \sum_{j=1}^n \eta_\alpha (e^{i\lambda_\alpha} (s^{-}_j)_\alpha + e^{-i\lambda_\alpha} (s^{+}_j)_\alpha) \right) \nonumber \\
        &\approx \prod_{\alpha=1}^d \prod_{j=1}^{n} \exp \left( -i \gamma \eta_\alpha \tau I^{\otimes (\alpha - 1)n } \otimes I^{ \otimes (n-j)} \otimes \left( U_j (-\lambda_\alpha) \left( Z \otimes \ket{1} \! \bra{1}^{\otimes (j-1)}\right) U_j (-\lambda_\alpha)^\dagger \right) \otimes I^{\otimes (d - \alpha)n} \right) \nonumber \\
        & = \prod_{\alpha=1}^d \prod_{j=1}^{n} I^{\otimes (\alpha - 1)n } \otimes I^{ \otimes (n-j)} \otimes \left( U_j (-\lambda_\alpha) \crz^{1, \dots, j-1}_j \left( 2\gamma \tau \eta_\alpha \right) U_j (-\lambda_\alpha)^\dagger \right) \otimes I^{\otimes (d - \alpha)n} \nonumber \\
        & = \prod_{\alpha=1}^d I^{\otimes (\alpha - 1)n } \otimes V(\gamma \eta_\alpha \tau, \lambda_\alpha) \otimes I^{\otimes (d - \alpha)n} \nonumber \\
        &= \bigotimes_{\alpha=1}^d V\left( \gamma \tau \eta_\alpha, \lambda_\alpha \right).
    \end{align}
    The approximation error of the Lie-Trotter-Suzuki decomposition is upper bounded by the operator norm of commutators of Hamiltonian~\cite[Proposition 9]{childs2021theory}.
    Since the terms of the Hamiltonian $e^{i\lambda_\alpha} (s^{-}_j)_\alpha + e^{-i\lambda_\alpha} (s^{+}_j)_\alpha$ and $e^{i\lambda_{\alpha'}} (s^{-}_j)_{\alpha'} + e^{-i\lambda_{\alpha'}} (s^{+}_j)_{\alpha'}$ commute for $\alpha \neq \alpha'$, we obtain from the discussion in the proof of Lemma~\ref{lem:diff_circ},
    \begin{align}
        &\left\Vert \exp \left(-i \mathcal{H} \tau \right) - \bigotimes_{\alpha=1}^d V\left( \gamma \tau \eta_\alpha, \lambda_\alpha \right) \right\Vert \nonumber \\
        &\leq \frac{\gamma^2 \tau^2}{2} \left\Vert \left[ \sum_{\alpha=1}^d \sum_{j=2}^n \eta_\alpha \left( e^{i\lambda_\alpha} (s^{-}_j)_\alpha + e^{-i\lambda_\alpha} (s^{+}_j)_\alpha \right), \sum_{\alpha'=1}^d \eta_{\alpha'} \left( e^{i\lambda_{\alpha'}} (s^{-}_1)_{\alpha'} + e^{-i\lambda_{\alpha'}} (s^{+}_1)_{\alpha'} \right) \right] \right\Vert \nonumber \\
        &\leq \frac{\gamma^2 \tau^2}{2} \sum_{\alpha=1}^d \sum_{\alpha'=1}^d \sum_{j=2}^n \eta_\alpha \eta_{\alpha'} \left\Vert \left[ \left( e^{i\lambda_\alpha} (s^{-}_j)_\alpha + e^{-i\lambda_\alpha} (s^{+}_j)_\alpha \right), e^{i\lambda_{\alpha'}} (s^{-}_1)_{\alpha'} + e^{-i\lambda_{\alpha'}} (s^{+}_1)_{\alpha'} \right] \right\Vert \nonumber \\
        &= \frac{\gamma^2 \tau^2}{2} \sum_{\alpha=1}^d \sum_{j=2}^n \eta_\alpha^2 \left\Vert \left[ e^{i\lambda_\alpha} (s^{-}_j)_\alpha + e^{-i\lambda_\alpha} (s^{+}_j)_\alpha, e^{i\lambda_{\alpha}} (s^{-}_1)_{\alpha} + e^{-i\lambda_{\alpha}} (s^{+}_1)_{\alpha} \right] \right\Vert \nonumber \\
        &= \frac{\gamma^2 \tau^2 (n-1) \sum_{\alpha=1}^d \eta_\alpha^2}{2}.
    \end{align}
\end{proof}

\section{Proof of Lemma~\ref{lem:diff_circ_second}} \label{sec:proof_lemma_second}

Here, we provide the details of proofs of Lemma~\ref{lem:diff_circ_second}.
We first recall Lemma~\ref{lem:diff_circ_second}.
\setcounter{lem}{1}
\begin{lem}
    Consider the Schr\"{o}dinger equation $d \ket{u(t)}/dt = -i \mathcal{H} \ket{u(t)}$ such that the Hamiltonian $\mathcal{H}$ is given by Eq.~\eqref{eq:hamiltonian_diffop}.
    The time evolution operator $\exp(-i \mathcal{H} \tau)$ with the time increment $\tau$ can be approximated by the unitary $V^{(2)}$ in Eq.~\eqref{eq:trotter_second}. 
    The approximation error in the sense of the operator norm is upper bounded as 
    \begin{align}
        \left\Vert \exp(-i \mathcal{H} \tau) - V^{(2)} ( \gamma \tau, \lambda) \right\Vert &\leq \frac{\gamma^3 \tau^3}{6} (2n-3).
    \end{align}
\end{lem}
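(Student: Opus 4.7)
The plan is to exploit the commutation structure established for Lemma~\ref{lem:diff_circ} in order to cast $V^{(2)}$ as an \emph{exact} symmetric second-order Trotter-Suzuki splitting, and then invoke the standard nested-commutator bound. Writing $\mathcal{H}_1 := \gamma(e^{i\lambda} s_1^{-} + e^{-i\lambda} s_1^{+})$ and $\mathcal{H}_2 := \gamma \sum_{j=2}^n (e^{i\lambda} s_j^{-} + e^{-i\lambda} s_j^{+})$, the first step is to recognize that $V^{(2)}(\gamma\tau,\lambda)$ equals $\exp(-i\tau\mathcal{H}_1/2)\exp(-i\tau\mathcal{H}_2)\exp(-i\tau\mathcal{H}_1/2)$ exactly: the proof of Lemma~\ref{lem:diff_circ} shows that $W_j$ for $j\geq 2$ mutually commute, so $W_n(\gamma\tau)\cdots W_2(\gamma\tau) = \exp(-i\tau\mathcal{H}_2)$, and the $W_1(\gamma\tau)$ inside $V$ combines with the outer $W_1(\pm\gamma\tau/2)$'s into a symmetric pair of $W_1(\gamma\tau/2)$'s. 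Proposition~9 of Ref.~\cite{childs2021theory} then yields
\begin{equation*}
\|\exp(-i\mathcal{H}\tau) - V^{(2)}\| \leq \frac{\tau^3}{12}\|[\mathcal{H}_2, [\mathcal{H}_2, \mathcal{H}_1]]\| + \frac{\tau^3}{24}\|[\mathcal{H}_1, [\mathcal{H}_1, \mathcal{H}_2]]\|,
\end{equation*}
so the task reduces to bounding two nested commutators.

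Next I would dispatch the easier bound $\|[\mathcal{H}_1, [\mathcal{H}_1, \mathcal{H}_2]]\|$. The inner commutator $[B_j, B_1]$, where $B_j := e^{i\lambda} s_j^{-} + e^{-i\lambda} s_j^{+}$, was computed in Eq.~\eqref{eq:commutator_1} in the form $i\,(\text{unitary on qubits } 1,\ldots,j-1) \otimes Z_0$. Since $\mathcal{H}_1 = \gamma B_1$ acts only on qubit $0$, the outer commutator reduces on that qubit to $[B_1, Z]$, which has norm $2$ by a direct $2\times 2$ calculation. Summing the $n-1$ contributions by triangle inequality gives $\|[\mathcal{H}_1, [\mathcal{H}_1, \mathcal{H}_2]]\| \leq 2(n-1)\gamma^3$.

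The main obstacle is the first nested commutator, $\|[\mathcal{H}_2, [\mathcal{H}_2, \mathcal{H}_1]]\| = \gamma^3 \bigl\|\sum_{j,k=2}^n [B_k, [B_j, B_1]]\bigr\|$: a blind triangle inequality over the $(n-1)^2$ pairs yields only an $O(n^2)$ bound, whereas matching the target requires a linear bound of the form $(3n-5)\gamma^3$. To achieve this I would use the key observation that, by Eq.~\eqref{eq:commutator_1}, the inner commutator $C_j := [B_j, B_1]$ is itself a shift-operator-type object on qubits $1,\ldots,j-1$ with phase $2\lambda+\pi/2$, tensored with $Z$ on qubit $0$. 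Since the shift-operator parts of $B_k$ and $C_j$ commute by the same argument as in Lemma~\ref{lem:diff_circ}, the nontrivial contribution to $[B_k, C_j]$ is supported entirely in how $B_k$'s action on qubit $0$ (either $\sigma_{01}$ or $\sigma_{10}$) interacts with the $Z_0$ in $C_j$; using $\|[\sigma_{01},Z]\| = \|[\sigma_{10},Z]\| = 2$ and carefully tracking the surviving boundary terms collapses the double sum to at most $(3n-5)\gamma^3$. Combining then gives $\tfrac{\tau^3}{12}(3n-5)\gamma^3 + \tfrac{\tau^3}{24}\cdot 2(n-1)\gamma^3 = \tfrac{(2n-3)\gamma^3\tau^3}{6}$, as claimed. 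The delicate step that I expect to be the main technical difficulty is the sharp bookkeeping of the $(3n-5)$ constant: producing a linear (rather than quadratic) bound requires systematically identifying which doubly-nested commutators vanish through the pairwise-commutativity of the shift-operator parts.
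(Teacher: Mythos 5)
Your skeleton is exactly the paper's (Appendix~\ref{sec:proof_lemma_second}): the grouping $\mathcal{H}=\mathcal{H}_1+\mathcal{H}_2$ with $\mathcal{H}_1=\gamma(e^{i\lambda}s_1^-+e^{-i\lambda}s_1^+)$, the recognition that $V^{(2)}$ is the exact symmetric splitting $e^{-i\tau\mathcal{H}_1/2}e^{-i\tau\mathcal{H}_2}e^{-i\tau\mathcal{H}_1/2}$ because the $W_{j\geq2}$ mutually commute, the appeal to Proposition~9 of Ref.~\cite{childs2021theory}, and the final arithmetic $\tfrac{\tau^3}{12}(3n-5)\gamma^3+\tfrac{\tau^3}{24}\cdot 2(n-1)\gamma^3=\tfrac{\gamma^3\tau^3}{6}(2n-3)$. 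Your bound $\Vert[\mathcal{H}_1,[\mathcal{H}_1,\mathcal{H}_2]]\Vert\leq 2(n-1)\gamma^3$ is also sound and slightly slicker than the paper's route, which instead uses $s_{1,n}^2=I^{\otimes n}$ to rewrite the double commutator as $2\gamma^3\sum_{j\geq2}[s_{j,n},s_{1,n}]s_{1,n}$ and then uses unitarity of $s_{1,n}$; both give the same constant.

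The gap is precisely the step you flag as the main difficulty: $\Vert[\mathcal{H}_2,[\mathcal{H}_2,\mathcal{H}_1]]\Vert\leq(3n-5)\gamma^3$ is asserted with the correct constant but not derived, and the mechanism you sketch for it is not quite right. It is not the case that the only nontrivial part of $[B_k,[B_j,B_1]]$ is the interaction of $B_k$'s qubit-$0$ factor with the $Z$ on qubit $0$ inside $[B_j,B_1]$: the diagonal terms $k=j\geq3$ produce, from products of the ladder parts on qubits $1,\dots,j-1$, projector-type operators of the form $I^{\otimes(n-j)}\otimes(\sigma_{00}\otimes\sigma_{11}^{\otimes(j-2)}+\sigma_{11}\otimes\sigma_{00}^{\otimes(j-2)})\otimes(e^{-i\lambda}\sigma_{10}+e^{i\lambda}\sigma_{01})$, which account for $n-2$ of the $3n-5$; the cases $k=2$ or $j=2$ supply the remaining $2(n-2)+1$, while off-diagonal pairs with $k,j\geq3$, $k\neq j$ vanish outright because $s^{\mu}_{j,n}s^{\nu}_{j',n}=0$ for $j>j'\geq2$ (a stronger fact than the commutativity you invoke, and the reason the double sum is linear rather than quadratic in $n$). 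Making this rigorous requires the multiplication table for the operators $s^{\pm}_{j,n}$ that the paper records in Lemmas~\ref{lem:s_recursive}--\ref{lem:s_formula_2}; with it the double sum collapses into three groups of total norm $(n-2)+2(n-2)+1=3n-5$, which is the missing piece of your argument.
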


As mentioned in the main text, our task is to evaluate the commutators of the Hamiltonian.
Before calculating the commutators, we first derive some useful formula to calculate the commutators.
Here, we explicitly show the number of qubits $n$ to describe $s_j^{-}$ and $s_j^{+}$, notating $s_{j, n}^{-} := I^{\otimes n - j} \otimes \sigma_{01} \otimes \sigma_{10}^{\otimes j - 1}$ and $s_{j, n}^{+} := I^{\otimes n - j} \otimes \sigma_{10} \otimes \sigma_{01}^{\otimes j - 1}$, which will derive some useful formulas for proofs.
To begin with, we provide the following two lemmas which can be easily verified by the definition of $s^{-}_{j, n}$ and $s^{+}_{j, n}$.
\setcounter{lem}{6}
\begin{lem} \label{lem:s_recursive}
    Let $s_{j, n}^{-} := I^{\otimes n - j} \otimes \sigma_{01} \otimes \sigma_{10}^{\otimes j - 1}$ and $s_{j, n}^{+} := I^{\otimes n - j} \otimes \sigma_{10} \otimes \sigma_{01}^{\otimes j - 1}$.
    For $j \geq 2$ and $n \geq 3$, the following recursive relations hold:
    \begin{align}
        s_{j, n}^{-} = s_{j-1, n-1}^{-} \otimes \sigma_{10} \\
        s_{j, n}^{+} = s_{j-1, n-1}^{-} \otimes \sigma_{01}
    \end{align}
\end{lem}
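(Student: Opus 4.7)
The plan is to verify both recursive identities by direct unfolding of the definitions of $s_{j,n}^{\pm}$ and matching tensor factors on each side. No commutator computation, eigenvalue analysis, or induction is required; the hypotheses $j \geq 2$ and $n \geq 3$ (together with the implicit $j \leq n$) are precisely what is needed so that the exponents appearing in the shifted expressions, namely $(n-1)-(j-1)=n-j$ and $(j-1)-1=j-2$, are nonnegative and every tensor power is well-defined.

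For the first identity I would start from the right-hand side, substitute $s_{j-1,n-1}^{-}=I^{\otimes n-j}\otimes \sigma_{01}\otimes \sigma_{10}^{\otimes j-2}$, and then tensor with $\sigma_{10}$ on the right to obtain $I^{\otimes n-j}\otimes \sigma_{01}\otimes \sigma_{10}^{\otimes j-1}$, which is precisely the definition of $s_{j,n}^{-}$. For the second identity I would proceed in the same spirit: expand the right-hand side in terms of $I$, $\sigma_{01}$, and $\sigma_{10}$, then compare factor by factor against the definition $s_{j,n}^{+}=I^{\otimes n-j}\otimes \sigma_{10}\otimes \sigma_{01}^{\otimes j-1}$. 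The key check is to read off the $(n-j+1)$-th tensor factor from the left and then the block of $(j-1)$ trailing ladder factors, verifying that their types (whether $\sigma_{01}$ or $\sigma_{10}$) agree with those on the left-hand side as written.

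The only real obstacle is bookkeeping: the definitions place a single $\sigma_{01}$ or $\sigma_{10}$ at tensor position $j$ (counted from the least significant qubit under the little-endian convention used in the paper) and propagate the opposite-type ladder into positions $1$ through $j-1$, so the left-to-right tensor ordering must be tracked carefully, and the stated superscripts on the factored operators on the right-hand side must be matched against the resulting type pattern. Once this ordering is pinned down, each identity reduces to matching a finite list of symbols on a single line; any discrepancy between the symbol pattern demanded by the left-hand side and the pattern produced by the proposed right-hand side would be spotted immediately by inspection, so the verification is purely mechanical.
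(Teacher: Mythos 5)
Your unfolding of the first relation is complete, correct, and exactly what the paper intends --- the paper offers no written proof here, saying only that the lemma ``can be easily verified by the definition'' of $s^{-}_{j,n}$ and $s^{+}_{j,n}$. The gap is in the second relation, where your proposal defers the decisive step (``verifying that their types \dots agree with those on the left-hand side as written'') and never executes it: if you do execute it, the check \emph{fails}, because the statement as printed is false. Unfolding the right-hand side as written gives $s^{-}_{j-1,n-1}\otimes\sigma_{01}=I^{\otimes n-j}\otimes\sigma_{01}\otimes\sigma_{10}^{\otimes j-2}\otimes\sigma_{01}$, whose $(n-j+1)$-th factor from the left is $\sigma_{01}$ and whose trailing block is $\sigma_{10}^{\otimes j-2}\otimes\sigma_{01}$, whereas the left-hand side $s^{+}_{j,n}=I^{\otimes n-j}\otimes\sigma_{10}\otimes\sigma_{01}^{\otimes j-1}$ demands $\sigma_{10}$ followed by $\sigma_{01}^{\otimes j-1}$. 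Concretely, at $j=2$, $n=3$: $s^{+}_{2,3}=I\otimes\sigma_{10}\otimes\sigma_{01}$, but $s^{-}_{1,2}\otimes\sigma_{01}=I\otimes\sigma_{01}\otimes\sigma_{01}$.

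The superscript on the right-hand side of the second relation is a typo: the intended identity is $s^{+}_{j,n}=s^{+}_{j-1,n-1}\otimes\sigma_{01}$, which follows by the same one-line unfolding you used for the first relation (or, even faster, by taking adjoints of the first relation, since $(s^{-}_{j,n})^{\dagger}=s^{+}_{j,n}$ and $\sigma_{10}^{\dagger}=\sigma_{01}$), and it is the version the paper actually relies on downstream --- e.g.\ in the products $s^{+}_{j,n}s^{+}_{1,n}=s^{+}_{j-1,n-1}\otimes\sigma_{00}$ of Lemma~\ref{lem:s_formula_1} and in the expansion $s_{j,n}=e^{i\lambda}s^{-}_{j-1,n-1}\otimes\sigma_{10}+e^{-i\lambda}s^{+}_{j-1,n-1}\otimes\sigma_{01}$ appearing in the commutator computation of Appendix~\ref{sec:proof_lemma_second}. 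A complete proof must detect this mismatch and prove the corrected identity rather than the printed one; as submitted, your treatment of the second relation is not mere bookkeeping left to the reader, but a check that, when actually performed, refutes the statement as written --- and your proposal gives no indication of that.
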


\begin{lem} \label{lem:s_formula_1}
    Let $s_{j, n}^{-} := I^{\otimes n - j} \otimes \sigma_{01} \otimes \sigma_{10}^{\otimes j - 1}$ and $s_{j, n}^{+} := I^{\otimes n - j} \otimes \sigma_{10} \otimes \sigma_{01}^{\otimes j - 1}$.
    On the multiplication of $s_{j, n}^{-}$ and $s_{j, n}^{+}$, the following relationships hold: for $j > j' \geq 2$,
    \begin{align}
        \begin{cases}
            s^{-}_{j, n} s^{-}_{j', n} = s^{-}_{j', n} s^{-}_{j, n} = 0 \\
            s^{+}_{j, n} s^{+}_{j', n} = s^{+}_{j', n} s^{+}_{j, n} = 0 \\
            s^{-}_{j, n} s^{+}_{j', n} = s^{-}_{j', n} s^{+}_{j, n} = 0 \\
            s^{+}_{j, n} s^{-}_{j', n} = s^{+}_{j', n} s^{-}_{j, n} = 0,
        \end{cases}
    \end{align}
    for $j \geq 2$,
    \begin{align}
        \begin{cases}
            s^{-}_{j, n} s^{-}_{1, n} = s_{j-1, n-1}^{-} \otimes \sigma_{11} \\
            s^{-}_{1, n} s^{-}_{j, n} = s_{j-1, n-1}^{-} \otimes \sigma_{00} \\
            s^{+}_{j, n} s^{+}_{1, n} = s_{j-1, n-1}^{+} \otimes \sigma_{00} \\
            s^{+}_{1, n} s^{+}_{j, n} = s_{j-1, n-1}^{+} \otimes \sigma_{11} \\
            s^{-}_{j, n} s^{+}_{1, n} = s^{-}_{1, n} s^{+}_{j, n} = 0 \\
            s^{+}_{j, n} s^{-}_{1, n} = s^{+}_{1, n} s^{-}_{j, n} = 0,
        \end{cases}
    \end{align}
    and for $j \geq 1$,
    \begin{align}
        \begin{cases}
            s^{-}_{j, n} s^{-}_{j, n} = s^{+}_{j, n} s^{+}_{j, n} = 0 \\
            s^{-}_{j, n} s^{+}_{j, n} = I^{\otimes n - j} \otimes \sigma_{00} \otimes \sigma_{11}^{\otimes j - 1} \\
            s^{+}_{j, n} s^{-}_{j, n} = I^{\otimes n - j} \otimes \sigma_{11} \otimes \sigma_{00}^{\otimes j - 1}.
        \end{cases}
    \end{align}
\end{lem}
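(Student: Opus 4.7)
The plan is to prove every identity by direct computation, using the mixed-product property $(A_1\otimes\cdots\otimes A_n)(B_1\otimes\cdots\otimes B_n) = (A_1B_1)\otimes\cdots\otimes(A_nB_n)$ together with the elementary $2\times 2$ multiplication rules $\sigma_{01}\sigma_{01}=\sigma_{10}\sigma_{10}=0$, $\sigma_{01}\sigma_{10}=\sigma_{00}$, and $\sigma_{10}\sigma_{01}=\sigma_{11}$. I would first fix the convention that $s^{-}_{j,n}$ places its single $\sigma_{01}$ at qubit $n-j+1$ (the ``pivot'') with $\sigma_{10}$ on each of the $j-1$ ``tail'' qubits $n-j+2,\ldots,n$, and that $s^{+}_{j,n}$ does the same with $\sigma_{10}$ at the pivot and $\sigma_{01}$ on the tail. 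Every identity then reduces to matching tensor factors qubit by qubit.

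For the first block ($j>j'\ge 2$), the key observation is that the $j'-1\ge 1$ tail sites of $s^{\pm}_{j',n}$ all lie strictly to the right of the pivot of $s^{\pm}_{j,n}$, so $s^{\pm}_{j,n}$ contributes tail factors at every such site. For the two same-sign products $s^{-}s^{-}$ and $s^{+}s^{+}$, each tail site produces $\sigma_{10}\sigma_{10}=0$ or $\sigma_{01}\sigma_{01}=0$ regardless of which factor is written first, so the whole tensor product vanishes. For the four mixed-sign products, I would instead inspect qubit $n-j'+1$ alone: at that site $s^{\pm}_{j,n}$ still contributes a tail factor while $s^{\mp}_{j',n}$ contributes its pivot factor, and a quick check of the four orderings shows that each yields $\sigma_{10}\sigma_{10}$ or $\sigma_{01}\sigma_{01}$ at this site and hence zero.

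For the second block ($j\ge 2$, $j'=1$), the operator $s^{\pm}_{1,n}$ is nontrivial only at qubit $n$, so the product $s^{\mu}_{j,n}s^{\nu}_{1,n}$ agrees with $s^{\mu}_{j,n}$ on qubits $1,\ldots,n-1$ and only modifies qubit $n$ by multiplying its factor by $\sigma_{01}$ or $\sigma_{10}$. Evaluating the four $\sigma$-products at qubit $n$ yields either $0$ (the two mixed-sign identities) or a factor of $\sigma_{11}$ or $\sigma_{00}$; peeling off this last tensor factor and recognising the remaining $n-1$ factors as $s^{\pm}_{j-1,n-1}$ via Lemma~\ref{lem:s_recursive} reproduces the four listed closed forms, and the reversed orderings $s^{\nu}_{1,n}s^{\mu}_{j,n}$ follow the same template with the two $\sigma$'s at site $n$ swapped. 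For the third block ($j\ge 1$), $s^{\pm}_{j,n}s^{\pm}_{j,n}$ vanishes because the pivots collide and give $\sigma_{01}^2$ or $\sigma_{10}^2$, whereas $s^{-}_{j,n}s^{+}_{j,n}$ produces $\sigma_{01}\sigma_{10}=\sigma_{00}$ at the pivot and $\sigma_{10}\sigma_{01}=\sigma_{11}$ at every tail site, reassembling to $I^{\otimes n-j}\otimes\sigma_{00}\otimes\sigma_{11}^{\otimes j-1}$; the reversed order simply swaps these two $\sigma$-products.

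The only real obstacle is bookkeeping: one must correctly identify, in each of roughly a dozen subcases, which factor of each operand (pivot, tail, or identity) sits at the decisive qubit. Once this is done there is no further algebraic content beyond the three $\sigma$-multiplication rules above, so the entire argument is mechanical and could be summarised compactly as a small table of position-wise products.
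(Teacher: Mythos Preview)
Your proposal is correct and follows exactly the approach the paper intends: the paper states only that this lemma ``can be easily verified by the definition of $s^{-}_{j,n}$ and $s^{+}_{j,n}$'' and gives no further argument, so your qubit-by-qubit computation via the mixed-product property and the rules $\sigma_{01}^2=\sigma_{10}^2=0$, $\sigma_{01}\sigma_{10}=\sigma_{00}$, $\sigma_{10}\sigma_{01}=\sigma_{11}$ is precisely what is being left to the reader. Your invocation of Lemma~\ref{lem:s_recursive} to peel off the last tensor factor in the $j'=1$ cases is also how the paper packages the same observation.
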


From lemmas~\ref{lem:s_recursive} and \ref{lem:s_formula_1}, we further obtain the following lemmas.
\begin{lem} \label{lem:s_formula_2}
    Let $s_{j, n} := e^{i\lambda} s^{-}_{j, n} + e^{-i\lambda} s^{+}_{j, n}$.
    On the multiplication of $s_{j, n}$, the following relationships hold:
    for $j > j' \geq 2$,
    \begin{align}
        s_{j, n} s_{j', n} = 0,
    \end{align}
    for $j \geq 2$,
    \begin{align}
        \begin{cases}
            s_{j, n} s_{1, n} = e^{2i\lambda} s_{j-1, n-1}^{-} \otimes \sigma_{11} + e^{-2i\lambda} s_{j-1, n-1}^{+} \otimes \sigma_{00} \\
            s_{1, n} s_{j, n} = e^{2i\lambda} s_{j-1, n-1}^{-} \otimes \sigma_{00} + e^{-2i\lambda} s_{j-1, n-1}^{+} \otimes \sigma_{11},
        \end{cases}
    \end{align}
    and for $j \geq 1$, 
    \begin{align}
        s_{j, n}^2 =& I^{\otimes n - j} \otimes \sigma_{00} \otimes \sigma_{11}^{\otimes j - 1} + I^{\otimes n - j} \otimes \sigma_{11} \otimes \sigma_{00}^{\otimes j - 1}.
    \end{align}
    In particular, $s_1^2 = I^{\otimes n}$, which implies that $s_1$ is unitary.
\end{lem}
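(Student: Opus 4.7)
The plan is to reduce everything to Lemma~\ref{lem:s_formula_1} by expanding the products via bilinearity. Concretely, I would write
\begin{align*}
s_{j,n} s_{j',n} = e^{2i\lambda}\, s^{-}_{j,n} s^{-}_{j',n} + s^{-}_{j,n} s^{+}_{j',n} + s^{+}_{j,n} s^{-}_{j',n} + e^{-2i\lambda}\, s^{+}_{j,n} s^{+}_{j',n},
\end{align*}
and then substitute the relevant case of Lemma~\ref{lem:s_formula_1} into each of the four monomials. Each of the three sub-cases in the statement corresponds to a different block of that lemma, so the proof is organised by a single case split on whether $j > j' \geq 2$, whether one of the indices equals $1$, or whether $j = j'$.

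For $j > j' \geq 2$, the first block of Lemma~\ref{lem:s_formula_1} asserts that all four products on the right vanish, giving $s_{j,n} s_{j',n} = 0$ immediately. For $j \geq 2$ paired with $s_{1,n}$, the cross terms $s^{-}_{j,n} s^{+}_{1,n}$ and $s^{+}_{j,n} s^{-}_{1,n}$ vanish by the second block, while the same-sign products evaluate to $s^{-}_{j-1,n-1} \otimes \sigma_{11}$ and $s^{+}_{j-1,n-1} \otimes \sigma_{00}$, weighted by $e^{\pm 2i\lambda}$; combining them yields the claimed expression for $s_{j,n} s_{1,n}$. The formula for $s_{1,n} s_{j,n}$ comes out of the same expansion with the roles of $\sigma_{00}$ and $\sigma_{11}$ on the last qubit swapped, exactly as the second block of Lemma~\ref{lem:s_formula_1} records.

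For $j = j'$, the squares $(s^{-}_{j,n})^2$ and $(s^{+}_{j,n})^2$ vanish since $\sigma_{01}^2 = \sigma_{10}^2 = 0$, so only the two mixed-sign monomials $s^{-}_{j,n} s^{+}_{j,n}$ and $s^{+}_{j,n} s^{-}_{j,n}$ survive, and the third block of Lemma~\ref{lem:s_formula_1} identifies them with $I^{\otimes n-j} \otimes \sigma_{00} \otimes \sigma_{11}^{\otimes j-1}$ and $I^{\otimes n-j} \otimes \sigma_{11} \otimes \sigma_{00}^{\otimes j-1}$ respectively. Specialising to $j = 1$ collapses the two tensor factors on the last qubit to $\sigma_{00} + \sigma_{11} = I$, giving $s_{1,n}^2 = I^{\otimes n}$ and hence unitarity (in fact, involutivity and self-adjointness) of $s_{1,n}$.

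I do not foresee any serious obstacle here: the entire statement is a direct corollary of Lemma~\ref{lem:s_formula_1} combined with the definition of $s_{j,n}$. The only point requiring care is bookkeeping under the little-endian convention used in the paper, so that the $\sigma_{00}$ or $\sigma_{11}$ coming from the product with the $j'=1$ factor ends up on the first qubit and the $s^{\pm}_{j-1,n-1}$ factor acts on the remaining $n-1$ qubits in the correct order.
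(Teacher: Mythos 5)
Your proposal is correct and coincides with the paper's own route: the paper derives this lemma directly from Lemmas~\ref{lem:s_recursive} and \ref{lem:s_formula_1} without further detail, and your bilinear expansion into the four monomials $s^{\mu}_{j,n}s^{\nu}_{j',n}$ followed by case-by-case substitution is exactly the intended argument. All three cases and the unitarity of $s_{1,n}$ check out.
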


Now, we prove Lemma~\ref{lem:diff_circ_second}
From Lemma~\ref{lem:s_formula_2}, $s_j$ and $s_{j'}$ commute for $j > j' \geq 2$.
Thus, let us group the terms of the Hamiltonian $\mathcal{H}$ in Eq.~\eqref{eq:hamiltonian_diffop} into $\mathcal{H}_1 := \gamma s_{1, n} $ and $H_2 := \gamma \sum_{j=2}^n s_{j, n}$ such that $\mathcal{H} = \mathcal{H}_1 + \mathcal{H}_2$.
The error of the second-order Suzuki formula is upper bounded~\cite[Proposition 9]{childs2021theory} by
\begin{align}
    &\left\Vert \exp(-i \mathcal{H} \tau) - V^{(2)}( \gamma \tau, \lambda ) \right\Vert \leq \frac{\tau^3}{12} \left\Vert \left[ \mathcal{H}_2, \left[ \mathcal{H}_2, \mathcal{H}_1 \right] \right] \right\Vert + \frac{\tau^3}{24} \left\Vert \left[ \mathcal{H}_1, \left[ \mathcal{H}_1, \mathcal{H}_2 \right] \right] \right\Vert. \label{eq:second_order_trotter}
\end{align}
To evaluate the upper bound, we rearrange the commutator of the first term in Eq.~\eqref{eq:second_order_trotter} using Lemmas~\ref{lem:s_recursive}, \ref{lem:s_formula_1}, and \ref{lem:s_formula_2}, as follows:
\begin{align}
    \left[ \mathcal{H}_2, \left[ \mathcal{H}_2, \mathcal{H}_1 \right] \right]
    &= \gamma^3 \sum_{j=2}^n \sum_{j'=2}^n \left[ s_{j, n}, \left[ s_{j', n}, s_{1, n} \right] \right] \nonumber \\
    &= \gamma^3 \sum_{j=2}^n \sum_{j'=2}^n \left[ e^{i\lambda} s^{-}_{j-1, n-1} \otimes \sigma_{10} + e^{-i\lambda} s^{+}_{j-1, n-1} \otimes \sigma_{01},  -e^{2i\lambda} s_{j'-1, n-1}^{-} \otimes Z + e^{-2i\lambda} s_{j'-1, n-1}^{+} \otimes Z \right] \nonumber \\
    &= \gamma^3 \sum_{j=3}^n \left( s^{-}_{j-1, n-1} s^{+}_{j-1, n-1} + s^{+}_{j-1, n-1} s^{-}_{j-1, n-1} \right) \otimes \left( e^{-i\lambda} \sigma_{10} + e^{i\lambda} \sigma_{01} \right) \nonumber \\
    &\quad - \gamma^3 \sum_{j=3}^n \left( e^{3i\lambda} 
    s_{j-2, n-2}^{-} \otimes I \otimes \sigma_{10} + e^{-3i\lambda} 
    s_{j-2, n-2}^{+} \otimes I \otimes \sigma_{01} \right) \nonumber \\
    &\quad - \gamma^3 \sum_{j'=3}^n \left( e^{3i\lambda} s^{-}_{j'-2, n-2} \otimes I \otimes \sigma_{10} + e^{-3i\lambda} s^{+}_{j'-2, n-2} \otimes I \otimes \sigma_{01} \right) \nonumber \\
    &\quad + \gamma^3 \left( e^{-i\lambda} I^{\otimes n - 1} \otimes \sigma_{10} + e^{i\lambda} I^{\otimes n - 1} \otimes \sigma_{01} \right) \nonumber \\
    &= \gamma^3 \sum_{j=3}^n I^{\otimes n - j} \otimes \left( \sigma_{00} \otimes \sigma_{11}^{\otimes j - 2} + \sigma_{11} \otimes \sigma_{00}^{\otimes j - 2} \right) \otimes \left( e^{-i\lambda} \sigma_{10} + e^{i\lambda} \sigma_{01} \right) \nonumber \\
    &\quad - 2\gamma^3 \sum_{j=3}^n \left( e^{3i\lambda} 
    s_{j-2, n-2}^{-} \otimes I \otimes \sigma_{10} + e^{-3i\lambda} 
    s_{j-2, n-2}^{+} \otimes I \otimes \sigma_{01} \right) \nonumber \\
    &\quad + \gamma^3 \left( e^{-i\lambda} I^{\otimes n - 1} \otimes \sigma_{10} + e^{i\lambda} I^{\otimes n - 1} \otimes \sigma_{01} \right)
\end{align}
Based on the similar calculation to Eq.~\eqref{eq:commutator_1}, we deduce that 
\begin{align}
    \left\Vert \left[ \mathcal{H}_2, \left[ \mathcal{H}_2, \mathcal{H}_1 \right] \right] \right\Vert &\leq \gamma^3 (3n-5).
\end{align}
Using Lemmas~\ref{lem:s_recursive}, \ref{lem:s_formula_1}, and \ref{lem:s_formula_2}, we also rearrange the commutator of the second term in Eq.~\eqref{eq:second_order_trotter}, as follows:
\begin{align}
    \left[ \mathcal{H}_1, \left[ \mathcal{H}_1, \mathcal{H}_2 \right] \right] &= \gamma^3 \sum_{j=2}^n \left[ s_{1, n}, \left[ s_{1, n}, s_{j, n} \right] \right] \nonumber \\
    &= \gamma^3 \sum_{j=2}^n \left( (s_{1, n}^2 s_{j, n} - s_{1, n} s_{j, n} s_{1, n}) - (s_{1, n} s_{j, n} s_{1, n} - s_{j, n} s_{1, n}^2) \right) \nonumber \\
    &= 2\gamma^3 \sum_{j=2}^n (s_{j, n} s_{1, n}^2 - s_{1, n} s_{j, n} s_{1, n}) \nonumber \\
    &= 2\gamma^3 \sum_{j=2}^n \left[ s_{j, n}, s_{1, n} \right] s_{1, n} \nonumber \\
    &= 2\gamma^3 \sum_{j=2}^n \left( -e^{2i\lambda} s_{j-1, n-1}^{-} \otimes Z + e^{-2i\lambda} s_{j-1, n-1}^{+} \otimes Z \right) s_{1, n} \nonumber \\
\end{align}
Based on the similar calculation to Eq.~\eqref{eq:commutator_1} and the fact that $s_{1, n}$ is unitary, we deduce that 
\begin{align}
    \left\Vert \left[ \mathcal{H}_1, \left[ \mathcal{H}_1, \mathcal{H}_2 \right] \right] \right\Vert &\leq 2 \gamma^3 (n-1).
\end{align}
Finally, we obtain 
\begin{align}
    \left\Vert \exp(-i \mathcal{H} \tau) - V^{(2)}( \gamma \tau, \lambda ) \right\Vert &\leq \frac{\tau^3}{12} \left\Vert \left[ \mathcal{H}_2, \left[ \mathcal{H}_2, \mathcal{H}_1 \right] \right] \right\Vert + \frac{\tau^3}{24} \left\Vert \left[ \mathcal{H}_1, \left[ \mathcal{H}_1, \mathcal{H}_2 \right] \right] \right\Vert. \nonumber \\
    &\leq \frac{\gamma^3 \tau^3}{6} (2n-3).
\end{align}

\section{Quantum circuit implementation and Trotter error} \label{sec:impl}

\subsection{Implementation and Trotter error for the advection equation} \label{sec:impl_advect}
Here, we consider the Hamiltonian $\mathcal{H}$ for the advection equation in one dimension as shown in Eq.~\eqref{eq:hamiltonian_advect} with the periodic BC, i.e.,
\begin{align}
    \mathcal{H} = -iv_1 D^{\pm}_\mathrm{P} = -\frac{iv_1}{2l} \left( \sum_{j=1}^n \left( s^{-}_j - s^{+}_j \right) + \sigma_{10}^{\otimes n} - \sigma_{01}^{\otimes n} \right).
\end{align}
Because the terms $-iv_1 \sum_{j=1}^n ( s^{-}_j - s^{+}_j ) /2l $ fall into Lemma~\ref{lem:diff_circ} by setting $\gamma=v_1/2l$ and $\lambda=-\pi/2$, we consider the terms for the periodic BC, which is rearranged as
\begin{align}
    \sigma_{10}^{\otimes n} - \sigma_{01}^{\otimes n} &= i \left( \frac{ \ket{0}^{\otimes n} - i \ket{1}^{\otimes n}}{\sqrt{2}} \frac{ \bra{0}^{\otimes n} + i \bra{1}^{\otimes n}}{\sqrt{2}} - \frac{ \ket{0}^{\otimes n} + i \ket{1}^{\otimes n}}{\sqrt{2}} \frac{ \bra{0}^{\otimes n} - i \bra{1}^{\otimes n}}{\sqrt{2}} \right) \nonumber \\
    &= i U_n\left( -\frac{\pi}{2} \right) \left(I \otimes X^{\otimes (n-1)} \right) \left( Z \otimes \ket{1} \! \bra{1}^{\otimes (n-1)} \right) \left(I \otimes X^{\otimes (n-1)} \right) U_n\left( -\frac{\pi}{2} \right)^\dagger.
\end{align}
Thus, the time evolution driven by this term reads
\begin{align}
    \exp \left( -i \left( -\frac{iv_1}{2l} (\sigma_{10}^{\otimes n} - \sigma_{01}^{\otimes n}) \right) \tau \right) &= U_n\left( -\frac{\pi}{2} \right) \left(I \otimes X^{\otimes (n-1)} \right) \crz^{1, \dots, n-1}_n \left( \frac{v_1 \tau}{l} \right) \left(I \otimes X^{\otimes (n-1)} \right) U_n\left( -\frac{\pi}{2} \right)^\dagger \nonumber \\
    &=: V_n \left( \frac{v_1 \tau}{2l}, -\frac{\pi}{2} \right).
\end{align}
Applying the first-order Lie-Trotter-Suzuki decomposition, we therefore obtain the time evolution operator $\exp(-i \mathcal{H} \tau)$ for the advection equation with the periodic boundary condition, as follows:
\begin{align}
    \exp \left(-i \mathcal{H} \tau \right) &\approx V \left( \frac{v_1 \tau}{2l}, -\frac{\pi}{2} \right) V_n \left( \frac{v_1 \tau}{2l}, -\frac{\pi}{2} \right).
\end{align}
The terms for the periodic BC commute with the other terms for $j>1$ in the Hamiltonian, as follows:
\begin{align}
    \left[ e^{i\lambda} s^{-}_j + e^{-i\lambda} s^{+}_j, \sigma_{10}^{\otimes n} - \sigma_{01}^{\otimes n} \right] &= \left[ e^{i\lambda} I^{\otimes (n-j)} \otimes \sigma_{01} \otimes \sigma_{10}^{\otimes (j-1)} + e^{-i\lambda} I^{\otimes (n-j)} \otimes \sigma_{10} \otimes \sigma_{01}^{\otimes (j-1)}, \right. \nonumber \\
    & \left. \quad \sigma_{10}^{\otimes (n-j)} \otimes \sigma_{10} \otimes \sigma_{10}^{\otimes (j-1)}  - \sigma_{01}^{\otimes (n-j)} \otimes \sigma_{01} \otimes \sigma_{01}^{\otimes (j-1)} \right] \nonumber \\
    &= 0.
\end{align}
For $j=1$, the commutator reads
\begin{align}
    \left[ e^{i\lambda} s^{-}_1 + e^{-i\lambda} s^{+}_1, \sigma_{10}^{\otimes n} - \sigma_{01}^{\otimes n} \right] &= \left[ e^{i\lambda} I^{\otimes (n-1)} \otimes \sigma_{01} + e^{-i\lambda} I^{\otimes (n-1)} \otimes \sigma_{10},  \sigma_{10}^{\otimes (n-1)} \otimes \sigma_{10} - \sigma_{01}^{\otimes (n-1)} \otimes \sigma_{01} \right] \nonumber \\
    &= \left( e^{i\lambda} \sigma_{10}^{\otimes (n-1)} + e^{-i\lambda} \sigma_{01}^{\otimes (n-1)} \right) \otimes Z,
\end{align}
and its operator norm is one, which is verified by similar calculation to that of the proof of Lemma~\ref{lem:diff_circ}.
Therefore, the approximation error of the Trotter decomposition is upper bounded as
\begin{align}
    &\left\Vert \exp(-i \mathcal{H} \tau) - V \left( \frac{v_1 \tau}{2l}, -\frac{\pi}{2} \right) V_n \left( \frac{v_1 \tau}{2l}, -\frac{\pi}{2} \right) \right\Vert \nonumber \\ 
    &\leq \frac{v_1^2 \tau^2}{8 l^2} \left\Vert \left[ \sum_{j=2}^n \left( e^{-i \pi / 2} s^{-}_j + e^{i \pi / 2} s^{+}_j \right) -i (\sigma_{10}^{\otimes n} - \sigma_{01}^{\otimes n}), e^{-i \pi / 2} s^{-}_1 + e^{i \pi / 2} s^{+}_1 \right] \right\Vert \nonumber \\
    &\leq \frac{v_1^2 \tau^2}{8 l^2} \left( \left\Vert \left[ \sum_{j=2}^n \left( e^{-i \pi / 2} s^{-}_j + e^{i \pi / 2} s^{+}_j \right), e^{-i \pi / 2} s^{-}_1 + e^{i \pi / 2} s^{+}_1 \right] \right\Vert + \left\Vert \left[ -i (\sigma_{10}^{\otimes n} - \sigma_{01}^{\otimes n}), e^{-i \pi / 2} s^{-}_1 + e^{i \pi / 2} s^{+}_1 \right] \right\Vert \right) \nonumber \\
    &\leq \frac{v_1^2 \tau^2}{8 l^2} \left( \sum_{j=2}^n \left\Vert \left[ e^{-i \pi / 2} s^{-}_j + e^{i \pi / 2} s^{+}_j, e^{-i \pi / 2} s^{-}_1 + e^{i \pi / 2} s^{+}_1 \right] \right\Vert + \left\Vert \left[ -i (\sigma_{10}^{\otimes n} - \sigma_{01}^{\otimes n}), e^{-i \pi / 2} s^{-}_1 + e^{i \pi / 2} s^{+}_1 \right] \right\Vert \right) \nonumber \\
    &= \frac{v_1^2 \tau^2 n}{8 l^2}.
\end{align}
We can also obtain the second-order formula as
\begin{align}
    V^{(2)}\left( \frac{v_1 \tau}{2l}, -\frac{\pi}{2} \right) := V_n \left( \frac{v_1 \tau}{4l}, -\frac{\pi}{2} \right) V \left( \frac{v_1 \tau}{2l}, -\frac{\pi}{2} \right) V_n \left( \frac{v_1 \tau}{4l}, -\frac{\pi}{2} \right),
\end{align}
whose approximation error is upper bounded by
\begin{align}
    \left\Vert \exp(-i \mathcal{H} \tau) - V^{(2)}\left( \frac{v_1 \tau}{2l}, -\frac{\pi}{2} \right) \right\Vert 
    &\leq \frac{v_1^3 \tau^3 (2n-1)}{48 l^3},
\end{align}
based on the the similar calculation to that in the proof of Lemma~\ref{lem:diff_circ_second}.
For a $d$-dimensional case, we can easily extend the discussion here based on the Lemma~\ref{lem:diff_circ_d-dim}.

\subsection{Implementation and Trotter error for the wave equation} \label{sec:impl_wave}
Here, we consider the following Hamiltonian $\mathcal{H}$ for the wave equation in one dimension.
The Hamiltonian $\mathcal{H}$ is discretized by the forward and backward difference operators so that the resulting $\mathcal{H}$ can be actually Hermitian matrix, as follows:
\begin{align}
    \mathcal{H} &= c \left( \sigma_{01} \otimes D^{+}_\mathrm{D} - \sigma_{10} \otimes D^{-}_\mathrm{D} \right) \nonumber \\
    &= \frac{c}{l} \left( \sum_{j=1}^n (\sigma_{01} \otimes s^{-}_j + \sigma_{10} \otimes s^{+}_j) - (\sigma_{01} + \sigma_{10}) \otimes I^{\otimes n} \right).
\end{align}
By similar calculation to Eq.~\eqref{eq:exp_iHt}, we obtain
\begin{align}
    \exp (-i \mathcal{H} \tau) &\approx \rx_{n+1} \left(-\frac{2c\tau}{l} \right) \prod_{j=1}^{n} \tilde{U}^{1,\dots, j}_{n+1}(0) X_j \crz^{1, \dots, j}_{n+1} \left( \frac{2c\tau}{l} \right) X_j \tilde{U}^{1,\dots, j}_{n+1}(0)^\dagger, \nonumber \\
    &=: \tilde{V}\left( \frac{c\tau}{l}, 0 \right)
\end{align}
where $\rx_{n+1}$ and $X_{n+1}$ are the RX and X gates acting on the $(n+1)$-th qubit, respectively, and 
\begin{align}
    \tilde{U}^J_{n+1}(\lambda) := \left( \prod_{m \in J} \cnot^{n+1}_m \right) P_{n+1}(\lambda) H_{n+1},
\end{align}
which is the extension of the unitary $U_j(\lambda)$ in Eq.~\eqref{eq:bell_basis}.
Furthermore, we obtain the upper bound of the Trotter error, as follows:
\begin{align}
    \left\Vert \exp(-i \mathcal{H} \tau) - \tilde{V}\left( \frac{c\tau}{l}, 0 \right) \right\Vert 
    &\leq \frac{c^2 \tau^2 n}{2l^2},
\end{align}
which can be obtained by the similar calculation to that in the proof of Lemma~\ref{lem:diff_circ}.
We can also obtain the second-order formula as
\begin{align}
    \tilde{V}^{(2)}\left( \frac{c\tau}{l}, 0 \right) := \rx_{n+1} \left(\frac{c\tau}{l} \right) \tilde{V}\left( \frac{c\tau}{l}, 0 \right) \rx_{n+1} \left(-\frac{c\tau}{l} \right),
\end{align}
whose error is upper bounded by
\begin{align}
    \left\Vert \exp(-i \mathcal{H} \tau) - \tilde{V}^{(2)}\left( \frac{c\tau}{l}, 0 \right) \right\Vert 
    &\leq \frac{c^3 \tau^3 (2n-1)}{6l^3},
\end{align}
based on the the similar calculation to that in the proof of Lemma~\ref{lem:diff_circ_second}.

\bibliographystyle{unsrt}
\bibliography{ref}

\end{document}